\numberwithin{theorem}{section}
\newcommand{\TheTitle}{Image denoising with generalized Gaussian mixture model
  patch priors
}
\newcommand{\TheAuthors}{C. Deledalle, S. Parameswaran, and T.Q. Nguyen}
\headers{\TheTitle}{\TheAuthors}
\title{{\TheTitle}} 
\author{
  {Charles-Alban Deledalle%
  \thanks{Institut de Math\'ematiques de Bordeaux, CNRS, Universit\'e de Bordeaux, Bordeaux INP, Talence, France
    (\email{charles-alban.deledalle@math.u-bordeaux.fr}).
  }}
  \footnotemark[2]
  \and
  Shibin Parameswaran%
  \thanks{Department of Electrical and Computer Engineering,
    University of California, San Diego, La Jolla, USA
    (\email{cdeledalle@eng.ucsd.edu},
    \email{sparames@ucsd.edu},
    \email{tqn001@eng.ucsd.edu}).}
  \and
  Truong Q. Nguyen%
  \footnotemark[2]
}
\DeclareMathOperator{\diag}{diag}
\definecolor{Gray}{gray}{0.9}
\begin{document}
\sloppy

\maketitle

\begin{abstract}
  Patch priors have become an important component of 
  image restoration.
  A powerful approach in this category of restoration algorithms is the popular Expected Patch Log-Likelihood (EPLL)
  algorithm. EPLL uses a Gaussian mixture model (GMM) prior learned on clean image patches 
  as a way to regularize degraded patches.
  In this paper, we show that
  a generalized Gaussian mixture model (GGMM) captures the underlying distribution of patches better
  than a GMM.
  Even though GGMM is a powerful prior to combine with EPLL, the non-Gaussianity of its components
  presents major challenges to be applied to a computationally intensive process of
  image restoration. Specifically,
  each patch has to undergo a patch classification
  step and a shrinkage step. These two steps can be efficiently solved with a GMM prior but are computationally
  impractical when using a GGMM prior. In this paper, we provide approximations and computational recipes for fast evaluation of these two steps, so that
  EPLL can embed a GGMM prior on an image with more than tens of thousands of patches.
  Our main contribution is to analyze the accuracy of our approximations
  based on thorough theoretical analysis.
  Our evaluations indicate that
  the GGMM prior is consistently a better fit for modeling image patch distribution and
  performs better on average
  in image denoising task.
\end{abstract}

\begin{keywords}
  Generalized Gaussian distribution, Mixture models, Image denoising, Patch priors.
\end{keywords}

\begin{AMS}
  68U10, 62H35, 94A08 
\end{AMS}

\section{Introduction}
Image restoration is the process of recovering the underlying clean image from its degraded or corrupted observation(s). The images captured by common imaging systems often contain corruptions such as noise, optical or motion blur due to sensor limitations and/or environmental conditions. For this reason, image restoration algorithms have widespread applications in medical imaging, satellite imaging, surveillance, and general consumer imaging applications.
Priors on natural images play an important role in image restoration algorithms. Image priors are used to denoise or regularize ill-posed restoration problems such as deblurring and super-resolution, to name just a few. Early attempts in designing image priors relied on
modeling local pixel gradients with
Gibbs distributions \cite{geman1984stochastic},
Laplacian distributions (total variation) \cite{rudin1992nonlinear,yang2009efficient},
hyper-Laplacian distribution \cite{krishnan2009fast},
generalized Gaussian distribution \cite{bouman1993generalized},
or Gaussian mixture models \cite{fergus2006removing}.
Concurrently, priors have also been designed by modeling coefficients
of an image in a transformed domain using
generalized Gaussian \cite{mallat1989theory,moulin1999analysis,chang2000adaptive,do2002wavelet} or scaled mixture of Gaussian \cite{portilla2003image}
priors for wavelet or curvelet coefficients \cite{boubchir2005multivariate}.
Alternatively, modeling the distribution of patches of an image
(\ie small windows usually of size $8 \times 8$)
has proven to be a powerful solution. In particular,
popular patch techniques rely on non-local self-similarity \cite{buades2005non},
fields of experts \cite{roth2005fields},
learned patch dictionaries \cite{aharon2006rm,elad2006image,rubinstein2013analysis},
sparse or low-rank properties of stacks of similar patches
\cite{dabov2007image,deledalle2011image,lebrun2013nonlocal,ji2011robust},
patch re-occurrence priors \cite{michaeli2014blind},
or more recently mixture models patch priors
\cite{Zoran11,yu2012solving,van2014student,teodoro2015single,houdard2017,singh2017,niknejad2017class}.


Of these approaches, a successful approach introduced by Zoran and Weiss \cite{Zoran11} is to model patches 
of clean natural images using Gaussian Mixture Model (GMM) priors. The agility of this model lies in the fact that a prior learned on clean image patches can be effectively employed to restore a wide range of inverse problems. It is also easily extendable to include other constraints such as sparsity or multi-resolution patches \cite{sulam2015expected, papyan2016multi}. The use of GMMs for patch priors make these methods computationally tractable and flexible. Although GMM patch prior is effective and popular, in this article, we argue that a generalized Gaussian mixture model (GGMM) is a better fit for image patch prior modeling. Compared to a Gaussian model, a generalized Gaussian distribution (GGD) has an extra degree of freedom controlling the shape of the distribution and it encompasses Gaussian and Laplacian models.

Beyond image restoration tasks, GGDs have been used in several different fields
of image and signal processing, including
watermark detection \cite{cheng2003robust}, texture retrieval \cite{do2002wavelet},
voice activity detection \cite{gazor2003speech} and
MP3 audio encoding \cite{dominguez2003practical}, to cite just a few.
In these tasks, GGDs are used to characterize or model the prior distribution of
clean signals, for instance, from their DCT coefficients or frequency subbands
for natural images \cite{westerink1991subband,tanabe1992subband,muller1993distribution,aiazzi1999estimation} or videos \cite{sharifi1995estimation},
gradients for X-ray images \cite{bouman1993generalized},
wavelet coefficients for natural
\cite{mallat1989theory,moulin1999analysis,cheng2003robust},
textured \cite{do2002wavelet},
or ultrasound images \cite{achim2001novel},
tangential wavelet coefficients for three-dimensional mesh data \cite{lavu2003estimation},
short time windows for speech signals \cite{kokkinakis2005exponent}
or frequency subbands for speech \cite{gazor2003speech} or audio signals \cite{dominguez2003practical}.

In this paper, we go one step further and use multi-variate GGD with one scale and one shape parameter per dimension.
The superior patch prior modeling capability of such a GGMM over a GMM is illustrated in \Cref{fig:ggmm_vs_gmm}.
The figure shows histograms of six orthogonal 1-D projections of subset of clean patches onto the eigenvectors of the covariance matrix of a single component of the GMM.
To illustrate the difference in the shapes ($\nu$) and scales ($\lambda$) of the distributions of each dimension,
we have chosen a few projections corresponding to both the most and the least significant eigenvalues.
It can be seen that GGD is a better fit on the obtained histograms than a Gaussian model.
Additionally, different dimensions of the patch follow a different GGD (\ie has a different shape and scale parameter).
Hence, it does not suffice to model all the feature dimensions of 
a given cluster of patches as Laplacian or Gaussian. Therefore, we propose to model patch priors as GGMM distributed with a separate shape and scale parameters for each feature dimension of a GGD component. This differs from the recent related approach in \cite{niknejad2017class} that considered GGMM where each component has a fixed shape parameter for all directions.

\begin{figure}
  \includegraphics[width=.49\linewidth]{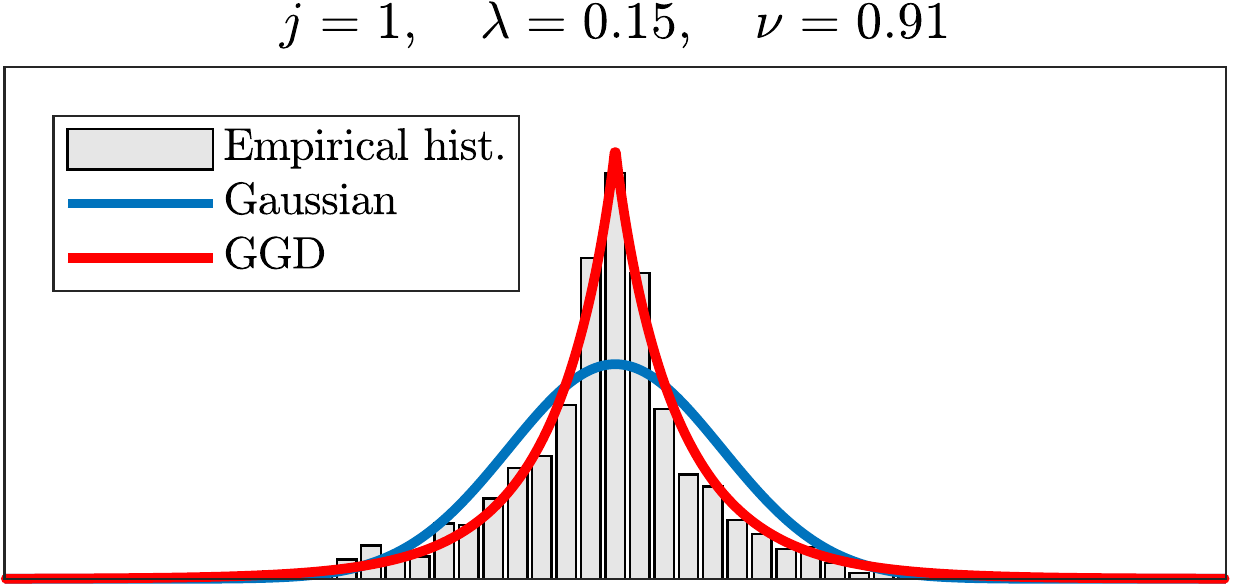}\hfill%
  \includegraphics[width=.49\linewidth]{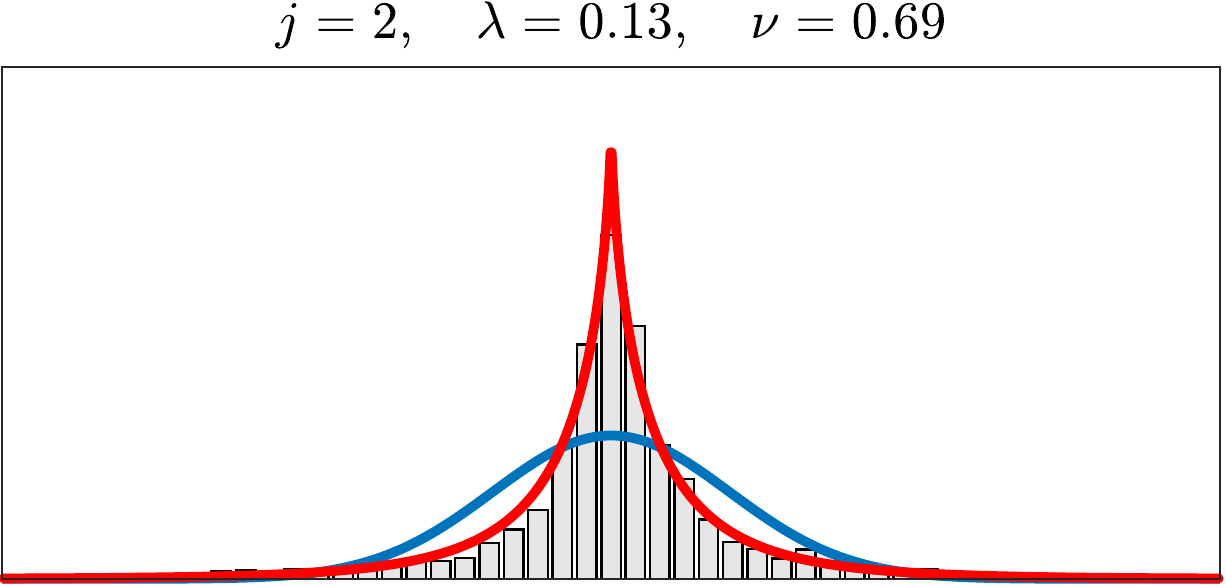}\\[1em]
  \includegraphics[width=.49\linewidth]{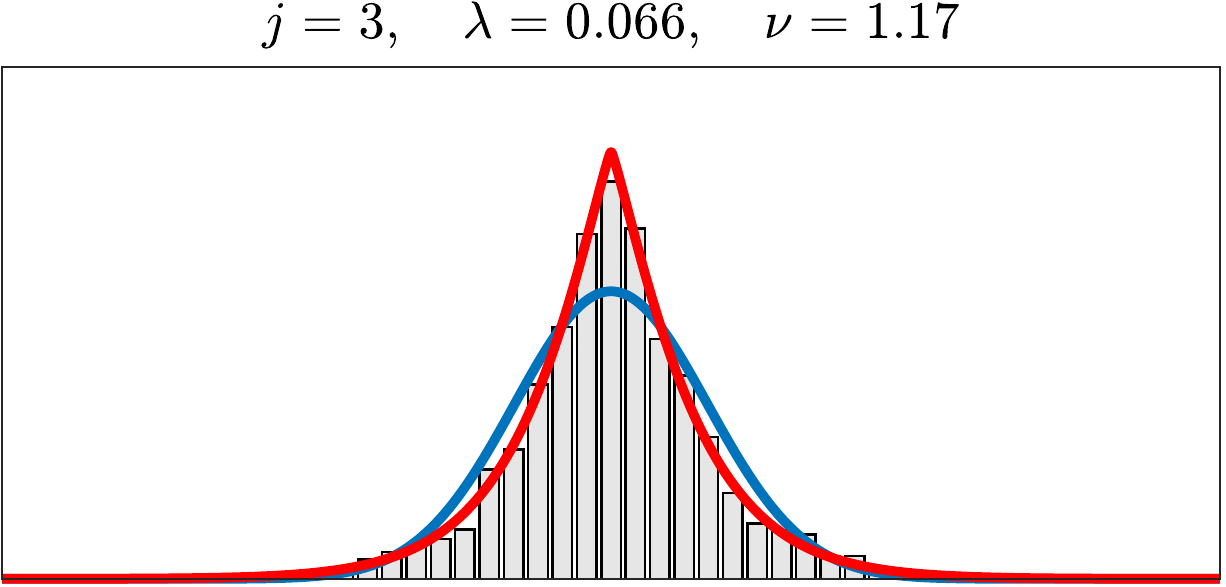}\hfill%
  \includegraphics[width=.49\linewidth]{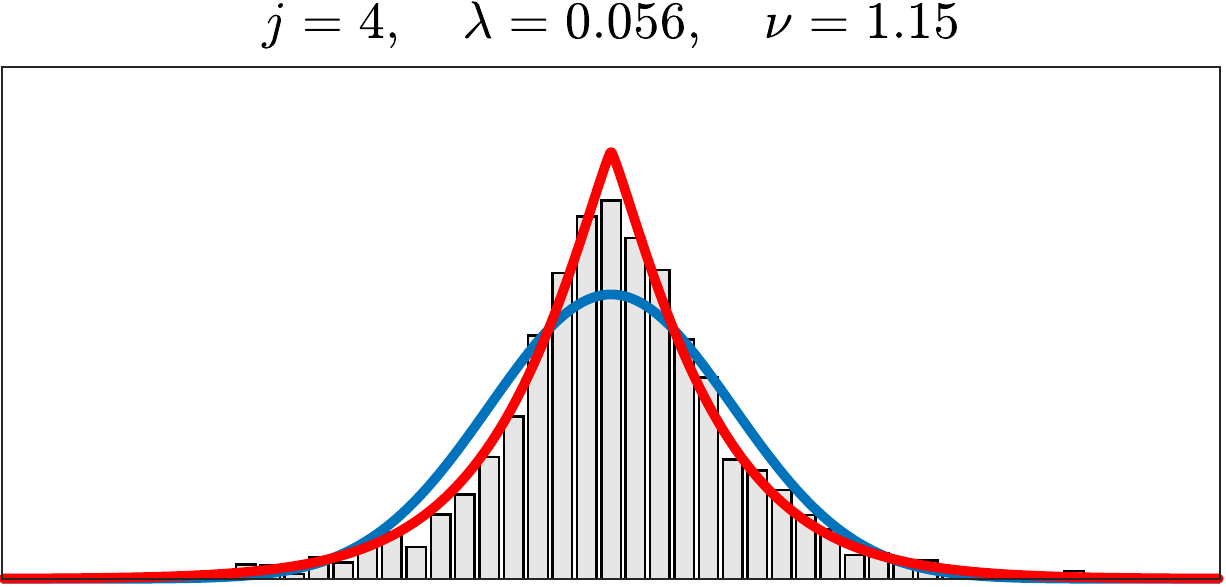}\\[1em]
  \includegraphics[width=.49\linewidth]{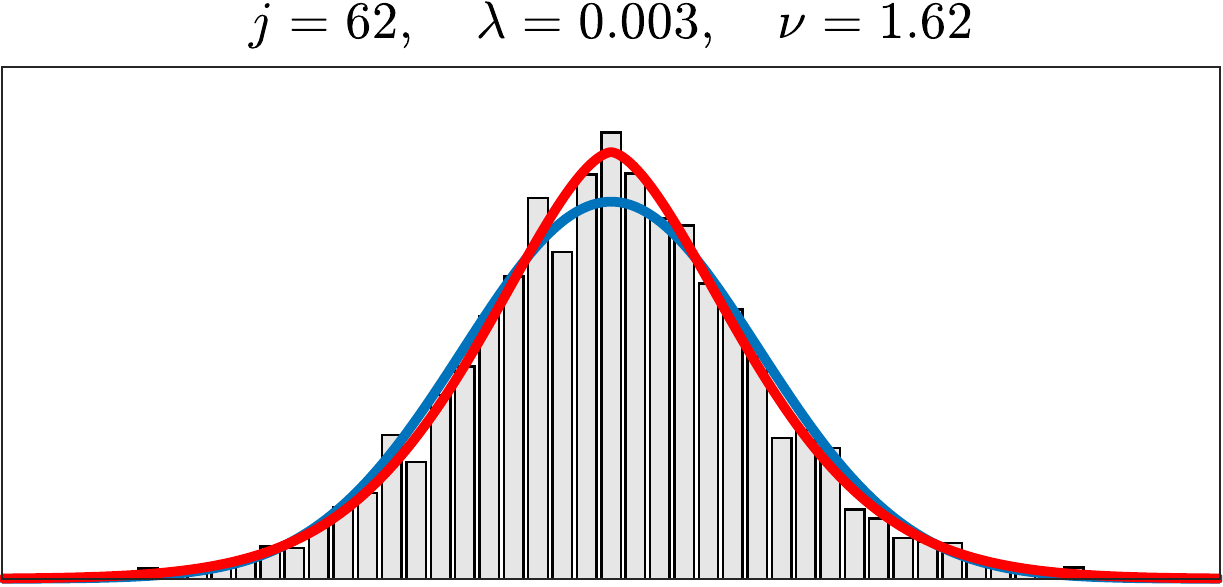}\hfill%
  \includegraphics[width=.49\linewidth]{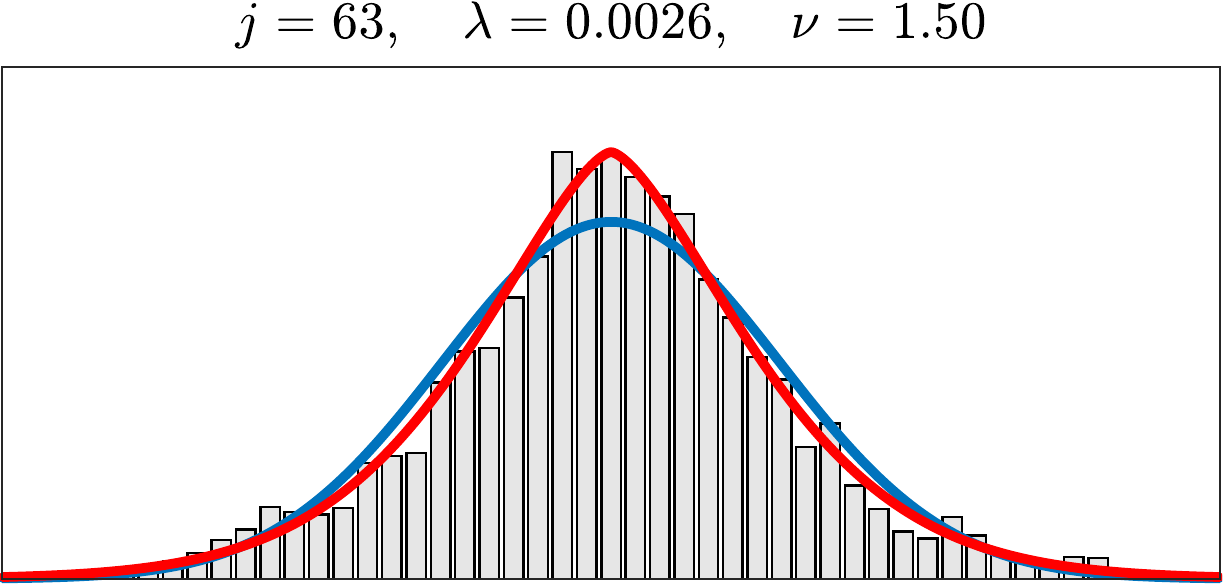}%
  \caption{Histograms of the projection of $200,000$ clean patches on
    6 eigenvectors $j=1, 2, 3, 4, 62$ and $63$
    of the covariance matrix of one component $k$ of the mixture
    (with weight $w_k = 1.3$\%).
    The contribution of each clean patch in the histograms
    is given by its membership values
    onto this component $k$ (as obtained during the E-Step of EM).
    For each histogram, a generalized Gaussian distribution
    was adjusted by estimating the parameters $\lambda$ and $\nu$
    by moment estimation (as obtained during the M-Step of our modified EM exposed in \Cref{sec:learning_ggmm}).
    For comparisons, we have also provided illustrations of the
    best fit obtained with a Gaussian distribution.
  }
  \vskip1em
  \label{fig:ggmm_vs_gmm}
\end{figure}

\paragraph{Contributions}
The goal of this paper is to measure the improvements obtained in image denoising tasks by incorporating a GGMM in EPLL algorithm.
Unlike \cite{niknejad2017class}, that incorporates a GGMM prior in a posterior mean estimator
based on importance sampling, we directly extend the maximum {\it a posteriori}
formulation of Zoran and Weiss \cite{Zoran11} for the case of GGMM priors.
While such a GGMM prior has the ability to capture the underlying distribution
of clean patches more closely, we will show that it introduces
two major computational challenges in this case. 
The first one can be thought of as a classification task in which a
noisy patch is assigned to one of the components of the mixture.
The second one corresponds to an estimation task where a
noisy patch is denoised given that it belongs to one of the components of the mixture.
%
Due to the interaction of the noise distribution with the GGD prior,
we first show that these two tasks lead to 
a group of one-dimensional integration and optimization problems, respectively.
Specifically, for $x \in \RR$, these problems are of the following forms 
\begin{align}
\int_\RR \exp\left(
-\frac{(t - x)^2}{2 \sigma^2} - \frac{|t|^\nu}{\lambda_\nu^\nu}
\right) \; \d t
\quad\qandq\quad
\uargmin{t \in \RR}
\frac{(t - x)^2}{2 \sigma^2} + \frac{|t|^\nu}{\lambda_\nu^\nu}
~,
\end{align}
for some $\nu > 0$, $\sigma > 0$ and $\lambda_\nu > 0$.
In general, they do not admit closed-form solutions but
some particular solutions or approximations
have been derived for the estimation/optimization
problem \cite{moulin1999analysis,chaux2007variational}.
By contrast, up to our knowledge, little is known for approximating
the classification/integration one
(only crude approximations were proposed in \cite{soury2015new}).

Our contributions are both theoretical- and application-oriented.
The major contribution of this paper, which is and theoretical in nature,
is to develop an accurate
approximation for the classification/integration problem.
In particular, we show that our approximation error vanishes
for $x \to 0$ and $x \to \pm\infty$
when $\nu = 1$, see
\Cref{thm:assignment_nu1_left} and
\Cref{thm:assignment_nu1_right}.
We next generalize this result for $\frac23 < \nu < 2$ in
\Cref{thm:assignment_general_left} and
\Cref{thm:assignment_general_right}.
On top of that,
we prove that the two problems enjoy
some important desired properties in
\Cref{prop:disc_basic_prop} and \Cref{prop:shrink_basic_prop}.
These theoretical results allow the two quantities to be approximated
by functions that can be quickly evaluated in order to be incorporated in fast algorithms.
Our last contribution is experimental and concerns
the performance evaluation 
of the proposed model
in image denoising scenario.
For reproducibility, we have released
our implementation at \url{\ourcodeurl}.

\paragraph{Potential impacts beyond image denoising}
It is important to note that
the two main contributions presented in this work,
namely approximations for classification and estimation problems,
are general techniques that are relevant to a wider area
of research problems than image restoration. In particular, our contributions apply
to any problems where (i) the underlying clean data are
modeled by a GGD or a GGMM, whereas (ii) the observed samples
are corrupted by Gaussian noise. They are especially relevant
in machine learning scenarios where a GGMM is
trained on clean data but data provided during testing time is noisy.
That is,
our approximations can be used to extend the applicability of
the aforementioned \textit{clean} GGD/GGM based approaches to the less than ideal
testing scenario where the data is corrupted by noise.
For instance, using the techniques introduced in this paper, one could directly use
the GGD based voice activity model of \cite{gazor2003speech}
into the likelihood ratio test based detector of \cite{gazor2003soft}
(which relies on solutions of the integration problem
but was \textit{limited} to Laplacian distributions).
In fact, we suspect that many studies in signal processing may have limited themselves
to Gaussian or Laplacian signal priors because of the complicated integration
problem arising from the intricate
interaction of GGDs with Gaussian noise.
In this paper, we demonstrate that this difficulty can be efficiently
overcome with our approximations. This leads us to believe that
the impact of the approaches presented in this paper
will not only be useful for image restoration
but also aid a wider field of general signal processing applications.


\paragraph{Organization}
After explaining the considered patch prior based restoration framework
in \Cref{sec:background},
we derive our GGMM based restoration scheme in \Cref{sec:ggmm}.
The approximations of the classification and estimation problems
are studied in \Cref{sec:discrepancy}
and \Cref{sec:shrinkage}, respectively.
Finally, we present numerical experiments and results in \Cref{sec:exp_eval}.


\section{Background}
\label{sec:background}
In this section we provide a detailed overview of
the use of patch-based priors in Expected Patch Log-Likelihood (EPLL) framework
and its usage under GMM priors.

\subsection{Image restoration with patch based priors}
\label{sec:image_restoration}
We consider the problem of estimating an image
$\bu \in \RR^N$ ($N$ is the number of pixels)
from noisy linear observations $\bv = \Aa \bu + \bw$, where
$\Aa: \RR^N \to \RR^M$ is a linear operator
and $\bw \in \RR^M$ is a noise component assumed to be white and Gaussian with variance $\sigma^2$.
In this paper, we will focus on standard denoising problems where
$\Aa$ is the identity matrix, but in more general settings,
it can account for loss of information such as blurring.
Typical examples for operator $\Aa$ are: a low pass filter (for {\it deconvolution}),
a masking operator (for {\it inpainting}), or a projection
on a random subspace
(for {\it compressive sensing}).
To reduce noise and stabilize the inversion of $\Aa$,
some \textit{prior} information is used for the estimation of $\bu$.
Recent techniques \cite{elad2006image,Zoran11,sulam2015expected} include
this {\it prior} information as a model for the distribution of patches found in natural clean images.
We consider the EPLL framework \cite{Zoran11} that
restores an image by maximum {\it{a posteriori}} estimation over all patches,
corresponding to the following minimization problem:
\begin{align}\label{eq:epll}
  \uargmin{\bu \in \RR^n} \frac{P}{2 \sigma^2} \norm{\Aa \bu - \bv}^2 - \sum_{i=1}^N \log p\left( \Pp_i \bu \right)
\end{align}
where $\Pp_i : \RR^N \to \RR^{P}$ is the linear operator
extracting a patch with $P$ pixels centered at the pixel with location $i$
(typically, $P=8 \!\times\! 8$), and $p(.)$ is the {\it a priori} probability
density function (\ie the statistical model of noiseless patches
in natural images).
Since $i$ scans all of the $N$ pixels of the image,
all patches contribute to the loss and many patches overlap.
Allowing for overlapping is important because otherwise there would appear blocking artifacts.
While the first term in eq.~\eqref{eq:epll} ensures that
$\Aa \bu$ is close to the observations $\bv$
(this term is the negative log-likelihood under the white Gaussian noise assumption),
the second term regularizes the solution $\bu$ by favoring an image such that all of its patches fit the {\it prior} model of patches in natural images.

\paragraph{Optimization with half-quadratic splitting}
Problem \eqref{eq:epll} is a large optimization problem where $\Aa$ couples all unknown pixel values of $\bu$ and the patch prior is often chosen non-convex.
Our method follow the choice made by EPLL
of using a classical technique, known as half-quadratic splitting \cite{geman1995nonlinear,krishnan2009fast},
that introduces
$N$ auxiliary unknown vectors $\bz_i \in \RR^P$, and 
alternatively consider the penalized optimization problem, 
for $\beta > 0$, as
\begin{align}\label{eq:qhs}
  \uargmin{\substack{\bu \in \RR^n \\ \bz_1, \ldots, \bz_N \in \RR^P}}
  \frac{P}{2 \sigma^2} \norm{\Aa \bu - \bv}^2
  + \frac{\beta}{2} \sum_{i\in \Ii} \norm{\Pp_i \bu - \bz_i}^2
  -  \sum_{i\in \Ii}\log p\left( \bz_i \right).
\end{align}
When $\beta \to \infty$, the problem \eqref{eq:qhs} is 
equivalent to the original problem \eqref{eq:epll}. In practice, an increasing sequence of $\beta$ is considered, and the optimization is performed by alternating between the minimization for $\bu$ and $\bz_i$.
Though little is known about the convergence of this algorithm,
few iterations produce remarkable results, in practice.
We follow the EPLL settings prescribed in \cite{Zoran11} by performing 5 iterations
of this algorithm with parameter $\beta$ set to $\frac{1}{\sigma^2} \{ 1, 4, 8, 16, 32 \}$ for each
iteration, respectively. The algorithm is initialized using $\hat \bu = \bv$
for the first estimate.
\paragraph*{Minimization with respect to $\bu$}
Considering all $\bz_i$ to be fixed, optimizing \eqref{eq:qhs} for $\bu$
corresponds to solving a linear inverse problem with a Tikhonov regularization.
It has an explicit solution known as the linear minimum mean square estimator
(or often referred to as Wiener filtering) which is obtained as:
\begin{align}\label{eq:qhs_optim_imest_sol}
  \hat \bu &= \uargmin{\bu \in \RR^n}
  \frac{P}{2 \sigma^2} \norm{\Aa \bu - \bv}^2
  + \frac{\beta}{2} \sum_{i\in \Ii} \norm{\Pp_i \bu - \hat \bz_i}^2 \nonumber\\
  &=
  \left(\Aa^t\Aa+\frac{\beta\sigma^2}{P}\sum_{i \in \Ii}\Pp_i^t\Pp_i\right)^{-1} \left(\mathcal{A}^t \bv + \frac{\beta\sigma^2}{P}\sum_{i \in \Ii}\Pp_i^t \hat \bz_i\right),
\end{align}
where $\Pp_i^t\Pp_i$ is a diagonal matrix whose $i$-th diagonal element corresponds to
the number of patches overlapping the pixel of index $i$.

\paragraph*{Minimization with respect to $\bz_i$}
Considering $\bu$ to be fixed, optimizing \eqref{eq:qhs} for $\bz_i$
leads to:
\begin{align}
  \label{eq:qhs_optim_paest}
    \hat \bz_i &\leftarrow \uargmin{\bz_i \in \RR^P}
    \frac{\beta}{2} \norm{\tilde \bz_i - \bz_i}^2
    - \log p(\bz_i)
    \qwhereq
    \tilde \bz_i = \Pp_i \hat\bu~,
\end{align}
which corresponds to the maximum \textit{a posterior} (MAP) denoising problem
under the patch prior $p$ of a patch
$\tilde \bz_i$ contaminated by Gaussian noise with variance $1/\beta$.
The solution of this optimization problem
strongly depends on the properties of the chosen patch prior.


\medskip

Our algorithm will follow the exact same procedure
as EPLL by alternating between eq.~\eqref{eq:qhs_optim_imest_sol}\footnote{%
  Since our study focuses only on denoising, we will consider $\Aa = \Id_N$.}
and \eqref{eq:qhs_optim_paest}.
In our proposed method, we will be using a generalized
Gaussian mixture model (GGMM) to represent patch prior.
The general scheme we will adopt to solve eq.~\eqref{eq:qhs_optim_paest}
under GGMM prior is inspired from the one proposed by Zoran \& Weiss \cite{Zoran11}
in the simpler case of Gaussian mixture model (GMM) prior.
For this reason, we will now introduce this simpler case of GMM prior
before exposing the technical challenges arising from the use of
GGMM prior in \Cref{sec:ggmm}.



\subsection{Patch denoising with GMM priors}
The authors of \cite{Zoran11, yu2012solving} suggested
using a zero-mean Gaussian mixture model (GMM) prior%
\footnote{To enforce the zero-mean assumption, patches are
  first centered on zero, then denoised using eq.~\eqref{eq:qhs_optim_paest},
  and, finally, their initial means are added back.
  In fact, one can show that it corresponds to modeling
  $p(\bz - \bar{\bz})$ with a GMM
  where $\bar z_j = \tfrac1P \sum_i z_i$ for all $1 \leq j \leq P$.
  \label{fn:zero_mean}
}, that, 
for any patch $\bz \in \RR^{P}$, is given by 
\begin{align}
p\left( \bz \right)
=
\sum_{k=1}^K w_k \Nn_P(\bz; 0_P, \bSigma_k)
\end{align}
where $K$ is the number of components,
$w_k>0$ are weights such that $\sum_k w_k\!=\!1$,
and $\Nn_P(0_P, \bSigma_k)$ denotes
the multi-variate Gaussian distribution with zero-mean and
covariance $\bSigma_k \in \RR^{P \times P}$.
A $K$-component GMM prior models image patches as being spread over $K$ clusters that have ellipsoid shapes where each coefficient (of each component) follows a Gaussian distribution, \ie bell-shaped with small tails.
In \cite{Zoran11}, the parameters $w_k$ and $\bSigma_k$ of the GMM
are learned using the Expectation Maximization algorithm \cite{dempster1977maximum}
on a dataset of 2 million clean patches of size $8 \times 8$ pixels that are
randomly extracted from the training images of the Berkeley Segmentation Database (BSDS) \cite{martin2001database}.
The GMM learned in \cite{Zoran11} has $K = 200$ zero-mean Gaussian mixture components.

Due to the multi-modality of the GMM prior,
introducing this prior in eq.~\eqref{eq:qhs_optim_paest} makes
the optimization problem highly non-convex:
\begin{align}\label{eq:patch_denoising_gmm_prior}
    \hat \bz &\leftarrow \uargmin{\bz \in \RR^P}
    \frac{\beta}{2} \norm{\tilde \bz - \bz}^2
    - \log \left[ \sum_{k=1}^K w_k \Nn(\bz; 0_P, \bSigma_k) \right]~.
\end{align}
To circumvent this issue, the EPLL framework introduced by
Zoran et al. \cite{Zoran11} uses an approximation%
\footnote{An alternative investigated in \cite{teodoro2015single}
  is to replace the MAP denoising problem in \eqref{eq:qhs_optim_paest}
  by the minimum mean square error (MMSE) estimator ({\it a.k.a.}, posterior mean).
  The MMSE estimator is defined as an integration problem
  and has a closed-form solution in case of GMM priors.
  In our experimental scenarios, this estimator did not lead
  to significative improvements compared to MAP
  and we thus did not pursue this idea.
}.
%
In a nutshell, the approximated approach of EPLL provides a way to
avoid the
intractability of mixture models, thus making them
available to model patch priors \cite{Zoran11}.
\Cref{fig:epll-scheme} provides an illustration of the EPLL framework
and the steps involved in solving the optimization problem \eqref{eq:patch_denoising_gmm_prior} namely:
\begin{itemize}
\item[(i)] compute the posterior $p(k|\tilde{\bz})$\footnote{In \cite{Zoran11}, this is referred to as ``conditional mixing weight''.} of each Gaussian component, $k=1,2 \dots K$ for the given noisy patch $\tilde \bz$ with assumed noise variance of $1/\beta$,
\item[(ii)] select the component $k^\star$ that best explains the given patch $\tilde \bz$,
\item[(iii)] perform whitening by projecting $\tilde \bz$ over the main directions of that cluster
(given by the eigenvectors of $\bSigma_{k^\star}$), and
\item[(iv)] apply a linear shrinkage on the coefficients with respect to
  the noise variance $1/\beta$ and the spread of the cluster (encoded by the eigenvalues).
\end{itemize}
The details of each of these steps will be discussed in \Cref{sec:den_ggmm}
as part of the development of the proposed model which is more general.


\begin{figure}
\label{fig:epll-scheme}
  \includegraphics[width=\linewidth]{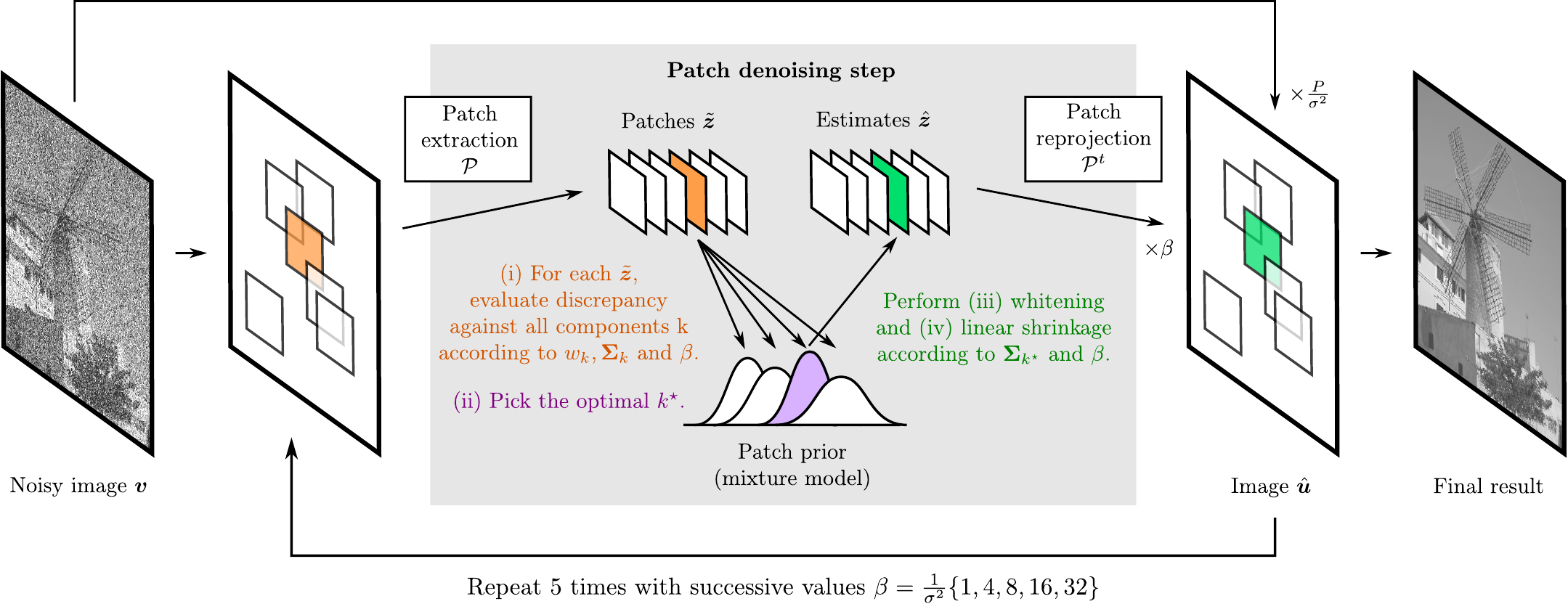}\\[-1em]
  \caption{Illustration of EPLL framework for image denoising with a GMM prior.
    A large collection of (overlapping) patches are first extracted.
    For each patch, an optimal Gaussian component is picked
    based on a measure of discrepancy with the given patch.
    This Gaussian component is next used as a prior model
    to denoise the given patch by linear shrinkage in the corresponding
    eigenspace and depending on $\beta$.
    All estimated patches are finally aggregated together, weighted
    by $\beta$ and combined with the original noisy image to produce a first estimate.
    The procedure is repeated 5 times with increasing values of $\beta$.
  }
  \vspace{1em}
\end{figure}

\medskip

In this paper, we suggest using a mixture of
generalized Gaussian distributions that will enable image patches
to be spread over clusters that are bell shaped in some directions
but can be peaky with large tails in others. 
While the use of GMM priors leads to piece-wise linear estimator (PLE)
as a function of $\bz$ (see \cite{yu2012solving}),
our GGMM prior will lead to 
a piecewise non-linear shrinkage estimator.

\section{Generalized Gaussian Mixture Models}
\label{sec:ggmm}

In this paper, we aim to learn $K$ orthogonal transforms
such that each of them can map a subset (cluster) of clean patches
into independent zero-mean coefficients. Instead of assuming the coefficient
distributions to be bell shaped, we consider that both the scale and
the shape of these distributions may vary from one coordinate to another
(within the same transform).
Our motivation to assume such a highly flexible model is based on the observation illustrated in \Cref{fig:ggmm_vs_gmm}.
Given one of such transform and its corresponding cluster of patches,
we have displayed the histogram of the patch coefficients
for six different coordinates.
It can be clearly observed that the shape of the distribution varies
depending on the coordinate.
Some of them are peaky with heavy tails, and, therefore, would not be faithfully captured by a Gaussian distribution, as done in EPLL \cite{Zoran11}.
By contrast, some others have a bell shape, and so would not be captured properly
by a peaky and heavy tailed distribution, as done for instance by sparse models
\cite{mallat1989theory,moulin1999analysis,aharon2006rm,elad2006image,elad2007analysis,rubinstein2013analysis,sulam2015expected}.
This shows that one cannot simultaneously decorrelate and sparsify
a cluster of clean patches for all coordinates.
Since some of the coordinates reveal sparsity while
some others reveal Gaussianity, we propose to use a more
flexible model that can capture such variations.
We propose using a multi-variate zero-mean
generalized Gaussian mixture model (GGMM)
\begin{align}
  p\left( \bz \right) =
  \sum_{k=1}^K w_k \Gg(\bz; 0_P, \bSigma_k, \bnu_k)
  \label{eq:prior}
\end{align}
where $K$ is the number of components and
$w_k>0$ are weights such that $\sum_k w_k\!=\!1$.
The notation $\Gg(0_P, \bSigma, \bnu)$ denotes the $P$-dimensional
generalized Gaussian distribution (GGD)
with zero-mean, covariance $\bSigma \in \RR^{P \times P}$ (symmetric positive definite)
and shape parameter $\bnu \in \RR^P$, whose expression is
\begin{align}
  \label{eq:ggmm_vect}
  &
  \Gg(\bz; 0_P, \bSigma, \bnu)
  =
  \frac{\Kk }{2 |\bSigma_{\bnu}|^{1/2}}
  \exp\left[- {\norm{\bSigma_{\bnu}^{-1/2} \bz}_{\bnu}^{\bnu}} \right]
  \qwithq
  \norm{\bx}_{\bnu}^{\bnu} = \sum_{j=1}^P |x_j|^{\nu_j},
  \\
  \qwhereq &
  \Kk =
  \prod_{j=1}^P
  \frac{\nu_j}{\Gamma(1/\nu_j)}
  \qandq
  \bSigma_{\bnu}^{1/2} =
  \bSigma^{1/2}
  \begin{pmatrix}
    \sqrt{\frac{\Gamma(1/\nu_1)}{\Gamma(3/\nu_1)}}\\
    & \ddots \\
    & & \sqrt{\frac{\Gamma(1/\nu_P)}{\Gamma(3/\nu_P)}}
  \end{pmatrix}
  .
\end{align}
Denoting the eigen decomposition of matrix
$\bSigma$ by $\bSigma = \bU \bLambda \bU^t$
such that $\bU \in \RR^{P \times P}$ is unitary and
$\bLambda = \diag(\lambda_1, \lambda_2, \ldots, \lambda_P)^2$ is
diagonal with positive diagonal elements $\lambda_j^2$, 
$\bSigma^{1/2}$ in the above expression is defined
as $\bSigma^{1/2} = \bU \bLambda^{1/2}$
and $\bSigma^{-1/2} =  \bLambda^{-1/2} \bU^t$ is its inverse.

When $\bnu$ is a constant vector with all entries
equal to $\nu_j=2$, $\Gg(0_P, \bSigma, \bnu)$
is the multi-variate Gaussian distribution $\Nn(0_P, \bSigma)$ (as used in EPLL \cite{Zoran11}).
When all $\nu_j=1$, it
is the multi-variate Laplacian distribution and the subsequent
GGMM is a Laplacian Mixture Model (LMM).
When all $\nu_j<1$, it
is the multi-variate hyper-Laplacian distribution and the subsequent
GGMM is a hyper-Laplacian Mixture Model (HLMM).
Choosing $K=1$ with a constant vector $\bnu$ corresponds to
$\ell_\nu$ regularization \cite{elad2007analysis,rubinstein2013analysis}.
But as motivated earlier, unlike classical multivariate GGD models
\cite{boubchir2005multivariate,pascal2013parameter,niknejad2017class},
we allow for the entries of $\bnu$
to vary from one coordinate $j$ to another.
To the best of our knowledge, 
the proposed work is the first one to consider
this fully flexible model.

\begin{prop}\label{prop:sep}
  The multi-variate zero-mean GGD can be decomposed as
  \begin{align}
    &\Gg(\bz; 0_P, \bSigma, \bnu)
    = \prod_{j = 1}^P \Gg((\bU^t \bz)_j ; 0, \lambda_j, \nu_j)
  \end{align}
  where, for $\bx = \bU^t \bz$, the distribution of each of its components is given as
  \begin{align}
    \nonumber
    &
    \Gg(x ; 0, \lambda, \nu)
    =
    \frac{\kappa}{2 \lambda_\nu}
    \exp\left[- \left(\frac{|x|}{\lambda_\nu}\right)^\nu \right]
    \\
    \nonumber
    \qwhereq &
    \kappa =
    \frac{\nu}{\Gamma(1/\nu)}
    \qandq
    \lambda_\nu = \lambda \sqrt{\frac{\Gamma(1/\nu)}{\Gamma(3/\nu)}}~,
  \end{align}
  where $x \mapsto \Gg(x ; 0, \lambda, \nu)$ is a real, even, unimodal, bounded
  and continuous probability density function. It is also differentiable everywhere
  except for $x=0$ when $\nu \leq 1$.
\end{prop}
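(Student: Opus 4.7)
The plan is to prove the proposition in two stages: first establish the factorization by direct substitution of the eigendecomposition, then verify the listed analytic properties of the univariate factor. Both stages are essentially bookkeeping and classical facts about generalized Gaussian densities; there is no deep step.

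For the factorization, I would write $\bSigma^{1/2} = \bU \bLambda^{1/2}$ and observe that $\bSigma_{\bnu}^{1/2} = \bU \bLambda^{1/2} D_{\bnu}$, where $D_{\bnu}$ denotes the diagonal matrix with entries $\sqrt{\Gamma(1/\nu_j)/\Gamma(3/\nu_j)}$. Unitarity of $\bU$ then yields $|\bSigma_{\bnu}^{1/2}| = \prod_{j=1}^P \lambda_j \sqrt{\Gamma(1/\nu_j)/\Gamma(3/\nu_j)} = \prod_{j=1}^P \lambda_{j,\nu_j}$, and $\bSigma_{\bnu}^{-1/2} = D_{\bnu}^{-1} \bLambda^{-1/2} \bU^t$. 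Setting $\bx = \bU^t \bz$, the $j$th coordinate of $\bSigma_{\bnu}^{-1/2} \bz$ is $x_j/\lambda_{j,\nu_j}$, so the $\bnu$-power norm separates as $\sum_j |x_j|^{\nu_j}/\lambda_{j,\nu_j}^{\nu_j}$ and the exponential splits into a product. The prefactor $\Kk = \prod_j \nu_j/\Gamma(1/\nu_j) = \prod_j \kappa_j$ factors by definition, so combining everything gives $\Gg(\bz;0_P,\bSigma,\bnu) = \prod_{j=1}^P \Gg(x_j;0,\lambda_j,\nu_j)$ with the univariate factor as stated.

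For the properties of the univariate density, realness, positivity, evenness, and boundedness by $\kappa/(2\lambda_\nu)$ are immediate from the closed-form expression. Unimodality follows because $t \mapsto t^\nu$ is strictly increasing on $[0,\infty)$ for $\nu>0$, so the density is strictly decreasing in $|x|$. Continuity is a composition argument since $t \mapsto t^\nu$ is continuous on $[0,\infty)$. That $\Gg(\cdot;0,\lambda,\nu)$ integrates to one follows by the change of variables $u = (x/\lambda_\nu)^\nu$, which reduces the tail integral to $\int_0^\infty u^{1/\nu - 1} e^{-u}\,\d u = \Gamma(1/\nu)$ and exactly matches the chosen $\kappa$.

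For differentiability, the chain rule gives, for $x \neq 0$, the derivative $-\nu\,\mathrm{sgn}(x)\,|x|^{\nu-1}\lambda_\nu^{-\nu}\,\Gg(x;0,\lambda,\nu)$, which is well defined on $\RR \setminus \{0\}$. At $x = 0$, a one-sided difference quotient shows that when $\nu > 1$ both one-sided derivatives equal $0$ (since $|x|^{\nu-1} \to 0$), yielding differentiability at the origin; when $\nu = 1$ the one-sided derivatives take opposite nonzero values, and when $0 < \nu < 1$ they diverge. Hence differentiability fails exactly at $x=0$ when $\nu \leq 1$, as claimed. The only point that requires a little care is the matrix bookkeeping in the first stage, namely checking that the change of basis $\bU^t$, the rescaling by $\bLambda^{-1/2}$, and the shape-dependent correction $D_{\bnu}^{-1}$ combine consistently in both the exponent and the determinant; once that is cleared, the rest of the proof is standard real-analysis verification.
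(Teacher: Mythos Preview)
Your proposal is correct and follows exactly the approach the paper indicates: inject the eigendecomposition of $\bSigma$ into the multivariate density to obtain the product form, then read off the claimed properties from elementary facts about $x \mapsto |x|^\nu$. The paper condenses all of this into a single sentence, whereas you spell out the matrix bookkeeping and the differentiability case split explicitly, but the content is the same.
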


The proof follows directly by injecting the eigen decomposition of $\bSigma$
in \eqref{eq:ggmm_vect} and basic properties of $x \mapsto |x|^\nu$.
\Cref{prop:sep} shows that,
for each of the $K$ clusters,
eq.~\eqref{eq:ggmm_vect}, indeed, models
a prior that is separable in a coordinate system
obtained by applying the whitening transform $\bU^t$.
Not only is the prior separable for each coordinate $j$,
but the shape ($\nu_j$) and scale ($\lambda_j$)
of the distribution may vary.

\medskip

Before detailing the usage of GGMM priors in EPLL framework, 
we digress briefly to explain the procedure we used for training
such a mixture of generalized Gaussian distributions
where different scale and shape parameters
are learned for each feature dimension.

\bigskip

\subsection{Learning GGMMs}\label{sec:learning_ggmm}
Parameter estimation is carried out using
a modified version of the Expectation-Maximization (EM) algorithm \cite{dempster1977maximum}.
EM is an iterative algorithm that performs
at each iteration two steps, namely
Expectation step (E-Step) and Maximization stop (M-Step),
and is known to monotonically increase the model
likelihood and converge to a local optimum.
For applying EM to learn a GGMM, we leverage standard strategies used for parameter
estimation for GGD and/or GGMM that are reported in previous works
\cite{mallat1989theory,birney1995modeling,sharifi1995estimation,aiazzi1999estimation,dominguez2003practical,kokkinakis2005exponent,boubchir2005multivariate,mohamed2009generalized,pascal2013parameter,krupinski2015approximated}.
Our M-Step update for the shape parameter $\nu$ is inspired from
Mallat's strategy \cite{mallat1989theory}
using statistics of the first absolute and second moments of GGDs.
Since this strategy uses the method of moments for $\nu$
instead of maximum likelihood estimation,
we refer to this algorithm as modified EM and the M-Step as Moment step.
We also noticed that shape parameters $\nu < .3$ lead to numerical issues
and $\nu > 2$ leads to local minima with several
degenerate components. 
For this reason,  at each step, we impose
the constraint that the learned shape parameters
satisfy $\nu \in [.3, 2]$. This observation is
consistent with earlier works that have attempted to learn
GGMM shape parameters from data \cite{roenko2014estimation}.
Given $n$ training clean patches of size $P$
and an initialization for the $K$ parameters $w_k > 0$, $\bSigma_k \in \RR^{P \times P}$ and $\bnu_k \in \RR^P$, for $k=1,\ldots,K$,
our modified EM algorithm iteratively
alternates between the following two steps:
\medskip
\begin{itemize}
\item {\bf Expectation step (E-Step)}
\begin{itemize} [leftmargin=*]
\item For all components $k=1,\ldots,K$ and training samples $i=1,\ldots,n$, compute:

  \begin{align*}
    \xi_{k,i} \leftarrow \frac{w_k \Gg(\bz_i; 0_P, \bSigma_k, \bnu_k)}{\sum_{l=1}^K w_l \Gg(\bz_i; 0_P, \bSigma_l, \bnu_l)}~.
  \end{align*}
\end{itemize}

\item {\bf Moment step (M-Step)}
\begin{itemize} [leftmargin=*]
\item For all components $k=1,\ldots,K$, update:
  \begin{align*}
    w_k & \leftarrow \frac{\sum_{i=1}^n \xi_{k,i}}{\sum_{l=1}^K \sum_{i=1}^n \xi_{l,i}}
    \qandq
    \bSigma_{k} \leftarrow \frac{\sum_{i=1}^n \xi_{k,i} \bz_i \bz_i^t}{\sum_{i=1}^n \xi_{k,i}}~.
  \end{align*}
\item Perform eigen decomposition of $\bSigma_{k}$: $$\bSigma_{k} = \bU_{\!k} \bLambda_k \bU_{\!k}^t
  \mbox{ \quad where \quad } \bLambda_k = \diag(\lambda_{k,1}, \lambda_{k,2}, \ldots, \lambda_{k,P})^2~.
  $$
\item For all components $k=1,\ldots,K$ and dimensions $j=1,\ldots,P$, compute:
  \begin{align*}
    \chi_{k,j} &\leftarrow \frac{\sum_{i=1}^n \xi_{k,i} |(\bU_{\!k}^t \bz_i)_j|}{\sum_{i=1}^n \xi_{k,i}}
    \qandq
    (\bnu_{k})_j \leftarrow \Pi_{[.3, 2]}\left[F^{-1}\left(\frac{\chi_{k,j}^2}{\lambda_{k,j}^2}\right)\right]~.
  \end{align*}
\end{itemize}
 \end{itemize}
where $\Pi_{[a, b]}[x] = \min(\max(x, a), b)$
and $F(x) = \frac{\Gamma(2/x)^2}{\Gamma(3/x)\Gamma(1/x)}$
is a monotonic invertible function that was introduced in \cite{mallat1989theory}
(we used a lookup table to perform its inversion as done in
\cite{birney1995modeling,sharifi1995estimation}).
Note that $\chi_{k,j}^2$ and $\lambda_{k,j}^2$ corresponds to
the first absolute and second moments for component $k$ and dimension $j$, respectively.

For consistency purposes, we keep the training data and the number of mixture
components in the models the same as that used in the
original EPLL algorithm \cite{Zoran11}. Specifically, we train our models on
$n= 2$ million clean patches randomly extracted from Berkeley Segmentation Dataset (BSDS) \cite{martin2001database}.
We learn $K = 200$ zero-mean generalized Gaussian mixture components from patches
of size $P = 8 \times 8$.
We opted for a warm-start training
by initializing our GGMM model with the GMM model from \cite{Zoran11} and
with initial values of shape parameters as 2.
We run our modified EM algorithm for 100 iterations.
As observed in \Cref{fig:ggmm_vs_gmm},
the obtained GGMM models the underlying distributions
of a cluster of clean patches much better than a GMM.
In addition, we will see
in \Cref{sec:exp_eval} that our GGMM estimation did not lead to overfitting
as it is also a better fit than a GMM for unseen clean patches.

\subsection{Patch denoising with GGMM priors}\label{sec:den_ggmm}
We now explain why solving step \eqref{eq:qhs_optim_paest} in EPLL
is non-trivial when using a GGMM patch prior.
In this case, for a noisy patch $\tilde\bz$ with variance $\sigma^2$,
equation \eqref{eq:qhs_optim_paest} becomes
\begin{align}\label{eq:patch_denoising_ggmm_prior}
    \hat \bz &\leftarrow \uargmin{\bz \in \RR^P}
    \frac{1}{2 \sigma^2} \norm{\tilde \bz - \bz}^2
    - \log \left[ \sum_{k=1}^K w_k \Gg(\bz; 0_P, \bSigma_k, \bnu_k) \right]~.
\end{align}
As for GMMs, due to the multi-modality of the GGMM prior,
this optimization problem is highly non-convex.
To circumvent this issue, we follow the strategy used by EPLL \cite{Zoran11}
in the specific case of Gaussian mixture model prior.
The idea is to restrict the sum involved in the logarithm in
eq.~\eqref{eq:patch_denoising_ggmm_prior} to only
one component $k^\star$.

If we consider the best $k^\star$ to be given
(the strategy to select the best  $k^\star$ will be discussed
next), then eq.~\eqref{eq:patch_denoising_ggmm_prior} is approximated
by the following simpler problem
\begin{multline}
    \hat \bz \leftarrow \uargmin{\bz \in \RR^P}
    \left\{ \frac{\norm{\tilde \bz - \bz}^2}{2\sigma^2}
    - \log \Gg(\bz; 0_P, \bSigma_{k^\star}, \bnu_{k^\star})
    =
    \frac{\norm{\tilde \bz - \bz}^2}{2\sigma^2}
    + \norm{\bSigma_{\bnu_{k^\star}}^{-1/2} \bz }_\nu^\nu
    \right\}~.
\end{multline}
The main advantage of this simplified version is that, by virtue of \Cref{prop:sep},
the underlying optimization
becomes tractable and can be separated into $P$ one-dimensional
optimization problems, as:
\begin{align}
  &\hat \bz = \bU_{\!k^\star} \hat \bx
  \qwhereq \hat x_j = s(\tilde x_j; \sigma, \lambda_{k^\star,j}, \nu_{k^\star,j})
  \qwithq
  \tilde \bx = \bU_{\! k^\star}^t \tilde \bz\\
  \qandq
  &
  \label{eq:shrinkage_func}
  s(x; \sigma, \lambda, \nu)
  =
  s_{\sigma,\lambda}^\nu(x)
  \in
  \uargmin{t \in \RR}
  \frac{1}{2\sigma^2} (x - t)^2
  + \frac{|t|^\nu}{\lambda_\nu^\nu}
  \qwhereq
  \lambda_\nu = \lambda \sqrt{\frac{\Gamma(1/\nu)}{\Gamma(3/\nu)}}~,
\end{align}
where for all $k$, $\nu_{k,j} = (\bnu_k)_j$ and $\lambda_{k,j} = (\blambda_k)_j$.
While the problem is not necessarily convex, its solution
$s_{\sigma,\lambda}^\nu$ is always uniquely defined almost everywhere (see, \Cref{sec:shrinkage}). We call this almost everywhere real function
$s_{\sigma,\lambda}^\nu : \RR \to \RR$ shrinkage function.
When $\nu = 2$, it is a linear function that is often referred to as Wiener shrinkage.
When $\nu \neq 2$, as we will discuss in \Cref{sec:shrinkage},
it is a non-linear shrinkage function 
that can be computed in closed form for some cases or
with some approximations.

\begin{figure}
  \includegraphics[width=\linewidth]{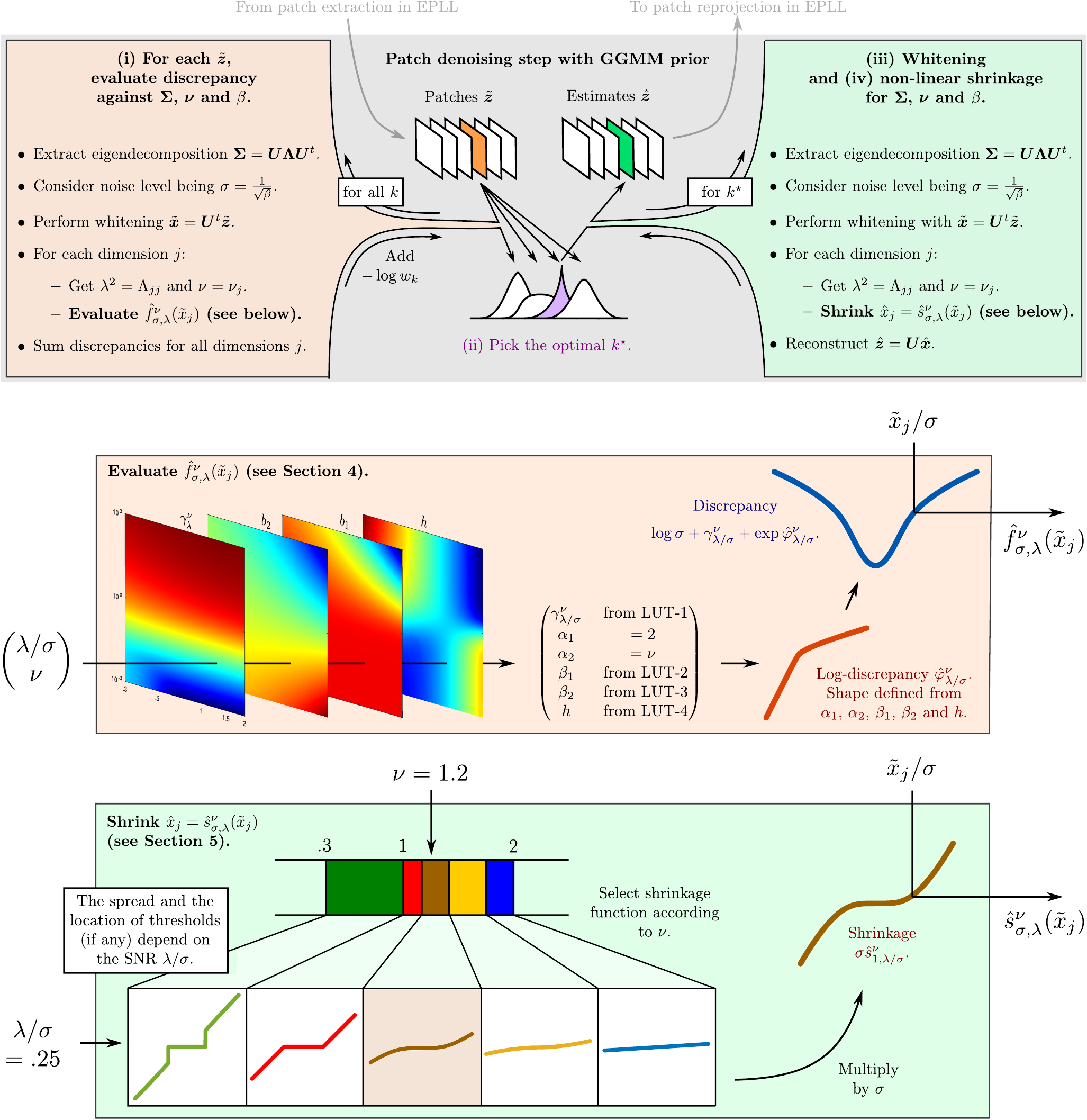}\\[-1em]
  \caption{Illustration of our extension of EPLL to GGMM priors.
    The general procedure, illustrated in the top row,
    is similar to the original EPLL scheme described
    in \Cref{fig:epll-scheme} but relies on generalized Gaussian distributions instead
    of Gaussian distributions.
    The shape of the discrepancy function, illustrated in the second row,
    depends on the given scale and shape parameters ($\lambda$ and $\nu$)
    of the GGD components. In \Cref{sec:discrepancy},
    we will see that it can be approximated based on six parameters,
    four of them retrieved from lookup tables (LUTs).
    Finally, the shrinkage function, illustrated in the bottom row,
    can be non-linear and depends on the selected GGD component.
    In \Cref{sec:shrinkage}, we will see that it can be approximated
    by one of five predefined parametric functions depending on the range
    in which the scale parameter $\nu$ lies.
    The values $\nu = 1.2$ and $\lambda/\sigma = .25$, shown in the bottom row, were chosen
    for the sake of illustration.
  }
  \label{fig:ggmm_epll}
  \vspace{1em}
\end{figure}

Now, we address the question of finding a strategy for choosing a relevant component
$k^\star$ to replace the mixture distribution inside the logarithm.
The optimal component $k^\star$ can be obtained by maximizing the posterior as
\begin{align}
  k^\star
  &\in \uargmax{1 \leq k \leq K}
  p(k \;|\; \tilde\bz)
  =
  \uargmax{1 \leq k \leq K}
  w_{k} p(\tilde \bz \;|\; k)
  =
  \uargmin{1 \leq k \leq K}
  -\log w_{k}
  -\log p(\tilde \bz \;|\; k)
\end{align}
where the weights
of the GGMM corresponds to the prior probability $w_{k} = p(k)$.
We next use the fact that the patch $\tilde \bz$ (conditioned on $k$)
can be expressed as $\tilde \bz = \bz + \bn$ where $\bz$ and $\bn$ are
two independent random variables from distributions
$\Gg(0_P, \bSigma_k, \bnu_k)$ and $\Nn(0_P, \sigma^2\Id_P)$
respectively. It follows that the distribution of $\tilde \bz$
is the convolution of these latter two, and then
\begin{align}
  -\log p(\tilde \bz \;|\; k)
  =
  -\log
  \int_{\RR^P}
  \Gg(\tilde \bz - \bz; 0_P, \bSigma_k, \bnu_k) \cdot \Nn(\bz; 0_P, \sigma^2\Id_P)
  \;\d \bz~.
\end{align}
We next use \Cref{prop:sep} to separate this integration problem
into $P$ one-dimensional integration problems. We obtain
\begin{align}
  &
  -\log
  p(\tilde \bz \;|\; k)
  =
  \sum_{j=1}^P f((\bU_{\!k}^t \tilde \bz)_j; \sigma, \lambda_{k,j}, \nu_{k,j})
\end{align}
where, for $\tilde \bx = \bU_{\!k}^t \tilde \bz$, the integration problem
of each of its components reads as
\begin{align}
  f(x; \sigma, \lambda, \nu)
  =
  f_{\sigma,\lambda}^\nu(x)
  =
  -\log
  \int_{\RR}
  \Gg(x - t; 0, \lambda, \nu) \cdot \Nn(t; 0, \sigma^2)
  \;\d t~.
  \label{eq:discrepancy_func}
\end{align}
We call the real function
$f_{\sigma,\lambda}^\nu : \RR \to \RR$ the \textit{discrepancy function} which
measures the goodness of fit of a GGD to the noisy value $x$.
When $\nu=2$, this function is quadratic with $x$.
For $\nu \neq 2$, as we will discuss in \Cref{sec:discrepancy}, it is a non-quadratic
function, that can be efficiently approximated based on
an in-depth analysis of its asymptotic behavior.

\bigskip

\Cref{fig:ggmm_epll} illustrates the details of the patch denoising step under
the GGMM-EPLL framework. It shows that the method relies on fast approximations
$\hat{f}_{\sigma,\lambda}^\nu$ and $\hat{s}_{\sigma,\lambda}^\nu$ of the discrepancy
and shrinkage functions, respectively.

\bigskip

The next two sections are dedicated to the analysis and approximations
of the discrepancy function $f_{\sigma,\lambda}^\nu$ and
the shrinkage function $s_{\sigma,\lambda}^\nu$, respectively.

\section{Discrepancy function: analysis and approximations}
\label{sec:discrepancy}

From its definition given in eq.~\eqref{eq:discrepancy_func},
the discrepancy function reads for $\nu > 0$, $\sigma >0$
and $\lambda >0$, as
\begin{align}
  f_{\sigma,\lambda}^\nu(x)
  =&
  -\log \frac{1}{\sqrt{2 \pi} \sigma} \frac{\nu}{2 \lambda_\nu \Gamma(1/\nu)}
  -\log
  \int_{-\infty}^\infty
  \exp\left( -\frac{(x - t)^2}{2 \sigma^2} \right)
  \exp\left[ -\left(\frac{|t|}{\lambda_\nu}\right)^\nu \right] \; \d t~.
\end{align}
It corresponds to the negative logarithm of the distribution
of the sum of a zero-mean generalized Gaussian and
a zero-mean Gaussian random variables.
When $\nu=2$, the generalized Gaussian random variable becomes
Gaussian, and the resulting distribution is also Gaussian
with zero-mean and variance $\sigma^2 + \lambda^2$, and then
  \begin{align}\label{eq:discrepancy_l2}
    f_{\sigma,\lambda}^2(x)
    =
    \frac12\left[
      \log 2\pi
      +
      \log(\sigma^2 + \lambda^2)
      +
      \frac{x^2}{\sigma^2 + \lambda^2}
      \right]~.
  \end{align}
  \begin{rem}
    For $\nu = 2$, a direct consequence of \eqref{eq:discrepancy_l2} is that
    $-\log p(\tilde \bz \;|\; k)$ is as an affine function
    of the Mahanalobis distance between $\tilde \bz$ and $0_P$
    for the covariance matrix $\bSigma_k + \sigma^2 \Id_P$:
    \begin{align}
      -\log
      p(\tilde \bz \;|\; k)
      =
      \frac12\left[ P \log 2 \pi + \log |\bSigma_k + \sigma^2 \Id_P|
        +
        \tilde \bz^t (\bSigma_k + \sigma^2 \Id_P)^{-1} \tilde \bz
        \right]~.
    \end{align}
  \end{rem}
When $\nu=1$, the generalized Gaussian random variable becomes
Laplacian, and the distribution resulting from the convolution
also has a closed form which
leads to the following discrepancy function
\begin{align}\label{eq:discrepancy_l1}
  f_{\sigma, \lambda}^1(x)
  &=
  \log (2 \sqrt{2} \lambda)
  -
  \frac{\sigma^2}{\lambda^2}
  -
  \log
  \left[
    e^{\frac{\sqrt{2} x}{\lambda}}
    \erfc\left(
    \frac{x}{\sqrt{2} \sigma}
    +
    \frac{\sigma}{\lambda}
    \right)
    +
    e^{-\frac{\sqrt{2} x}{\lambda}}
    \erfc\left(
    -\frac{x}{\sqrt{2}\sigma}
    +
    \frac{\sigma}{\lambda}
    \right)
    \right]~,
\end{align}
refer to \Cref{proof:discrepancy_l1} for derivation
(note that this expression is given in \cite{gazor2003soft}).

To the best of our knowledge, there are no simple expressions for
other values of $\nu$. One solution proposed by \cite{soury2015new}
is to express this in terms of the bi-variate Fox-H function \cite{mittal1972integral}. This, rather cumbersome expression, is computationally demanding. In practice, this special function requires numerical integration
techniques over complex lines \cite{peppas2012new}, and is thus difficult to numerically evaluate it efficiently.
Since, in our application, we need to evaluate this function
a large number of times, we cannot utilize this solution.

In \cite{soury2015new}, the authors have also proposed to approximate
this non-trivial distribution by another GGD. For fixed values
of $\sigma$, $\lambda$ and $\nu$,
they proposed three different numerical techniques to estimate its parameters
$\lambda'$ and $\nu'$ that best approximate either
the kurtosis, the tail or the cumulative distribution function.
Based on their approach, the discrepancy function
$f_{\sigma,\lambda}^\nu(x)$ would thus be a power function of the form $|x|^{\nu'}$.

In this paper, we show that $f_{\sigma,\lambda}^\nu$ does, indeed, asymptotically behave
as a power function for small and large values of $x$,
but the exponent can be quite different for these two asymptotics.
We believe that these different behaviors are important to be preserved
in our application context. For this reason, $f_{\sigma,\lambda}^\nu$ cannot
be modeled as a power function through a GGD distribution.
Instead, we provide an alternative solution that
is able to capture the correct behavior for both of these
asymptotics, and that also permits fast computation.

\bigskip

\subsection{Theoretical analysis}

In this section, we perform a thorough theoretical analysis of the
discrepancy function, in order to approximate it accurately.
Let us first introduce some basic properties
regarding the discrepancy function.
\begin{prop}\label{prop:disc_basic_prop}
  Let $\nu >0$, $\sigma>0$, $\lambda >0$
  and $f_{\sigma,\lambda}^\nu$ as defined in eq.~\eqref{eq:discrepancy_func}.
  The following relations hold true
  \begin{align}
    \label{eq:disc_reduction}
    \tag{reduction}
     f_{\sigma,\lambda}^\nu(x)
    &=
     \log \sigma
     +
     f_{1,\lambda/\sigma}^\nu(x/\sigma)~,
    \\
    \label{eq:disc_symmetry}
    \tag{even}
    f_{\sigma,\lambda}^\nu(x)
    &= f_{\sigma,\lambda}^\nu(-x)~,\\
    \label{eq:disc_unimodal}
    \tag{unimodality}
    |x| \geq |y| &\Leftrightarrow f_{\sigma,\lambda}^\nu(|x|) \geq f_{\sigma,\lambda}^\nu(|y|)~,
    \\
    \label{eq:disc_bound}
    \tag{lower bound at 0}
    \min_{x\in \RR} f_{\sigma,\lambda}^\nu(x)
    &= f_{\sigma,\lambda}^\nu(0) > -\infty~.
  \end{align}
\end{prop}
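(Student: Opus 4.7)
The plan is to work with the density $p_{\sigma,\lambda}^\nu(x) = \exp(-f_{\sigma,\lambda}^\nu(x))$, which by construction equals the density of $X+N$ for independent $X \sim \Gg(0,\lambda,\nu)$ and $N \sim \Nn(0,\sigma^2)$. Each of the four identities then reduces to a standard property of this convolution. For \eqref{eq:disc_reduction} I would substitute $t = \sigma u$ in the defining integral: the Gaussian factor becomes $\Nn(\sigma u;0,\sigma^2)\,\sigma\,\d u = \Nn(u;0,1)\,\d u$, and the scaling rule $\Gg(\sigma y;0,\lambda,\nu) = \sigma^{-1}\Gg(y;0,\lambda/\sigma,\nu)$ (read directly from the closed form in \Cref{prop:sep}, using $(\lambda/\sigma)_\nu = \lambda_\nu/\sigma$) turns the GGD factor into $\sigma^{-1}\Gg(x/\sigma - u;0,\lambda/\sigma,\nu)$; collecting the $\sigma^{-1}$ prefactor and taking $-\log$ yields the claim. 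Property \eqref{eq:disc_symmetry} is immediate from the substitution $t \mapsto -t$ combined with the evenness of $\Gg(\cdot;0,\lambda,\nu)$ and $\Nn(\cdot;0,\sigma^2)$. Finally, finiteness of $f_{\sigma,\lambda}^\nu(0)$ in \eqref{eq:disc_bound} follows because the integrand at $x=0$ is pointwise positive, bounded above by $\Gg(0;0,\lambda,\nu)\,\Nn(t;0,\sigma^2)$, and hence integrable with a strictly positive value, so $p_{\sigma,\lambda}^\nu(0) \in (0,\infty)$.

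The substantive work is in \eqref{eq:disc_unimodal}, from which the fact that the minimum in \eqref{eq:disc_bound} is attained at $x=0$ follows for free. I would show that $p_{\sigma,\lambda}^\nu$ is strictly decreasing on $[0,\infty)$ by the classical folding argument for the convolution of two symmetric unimodal densities. Write $h = \Gg(\cdot;0,\lambda,\nu)$ and $\phi = \Nn(\cdot;0,\sigma^2)$; for $0 \le x < y$ one has $p(x)-p(y) = \int_\RR h(s)[\phi(x-s) - \phi(y-s)]\,\d s$. Splitting the domain at the midpoint $(x+y)/2$ and substituting $s \mapsto x+y-s$ on the upper half rewrites this as
\begin{align*}
p(x) - p(y)
= \int_{-\infty}^{(x+y)/2}
    \bigl[h(s) - h(x+y-s)\bigr]\,\bigl[\phi(x-s) - \phi(y-s)\bigr]\,\d s.
\end{align*}
An elementary case analysis shows that for every $s < (x+y)/2$ both $|s| < |x+y-s|$ and $|x-s| < |y-s|$ hold. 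Since $h$ and $\phi$ are strictly decreasing in their absolute-value arguments (by \Cref{prop:sep} for $h$, and by direct inspection for $\phi$), each bracket is strictly positive on the entire domain of integration, so $p(x) > p(y)$. Combined with \eqref{eq:disc_symmetry}, this yields \eqref{eq:disc_unimodal} with strict monotonicity and pins the unique minimizer of $f_{\sigma,\lambda}^\nu$ at $x=0$.

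The only genuine obstacle is this strict monotonicity step: a direct pointwise comparison fails because the sign of $\phi(x-s) - \phi(y-s)$ flips at $s = (x+y)/2$, so the folding substitution is essential to restore a single sign under the integral. One could instead invoke Wintner's theorem on convolutions of symmetric unimodal distributions as a black box, but the self-contained folding argument above is short and directly delivers the strict version of the equivalence required for ``$\Leftrightarrow$'' in \eqref{eq:disc_unimodal}.
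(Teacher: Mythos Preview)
Your proposal is correct. For \eqref{eq:disc_reduction}, \eqref{eq:disc_symmetry}, and the finiteness in \eqref{eq:disc_bound} you do exactly what the paper does: the same change of variable $t\mapsto\sigma t$, the same symmetry substitution, and a boundedness argument for the convolution.

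The only real difference is how you handle \eqref{eq:disc_unimodal}. The paper's proof simply invokes the classical fact that the convolution of two real even unimodal densities is even unimodal, citing Wintner \cite{wintner1938asymptotic} and Purkayastha \cite{purkayastha1998simple}, and then reads off \eqref{eq:disc_bound} from unimodality plus boundedness. You instead give a self-contained folding argument: split the integral for $p(x)-p(y)$ at the midpoint $(x+y)/2$, reflect the upper half, and observe that both factors $h(s)-h(x+y-s)$ and $\phi(x-s)-\phi(y-s)$ are strictly positive on $(-\infty,(x+y)/2)$. This is essentially a direct proof of the special case of Wintner's theorem needed here. What it buys you is that the argument is elementary and immediately delivers \emph{strict} monotonicity, which is what the biconditional in \eqref{eq:disc_unimodal} actually requires; the paper's citation-based route leaves that strictness implicit. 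Conversely, the paper's approach is shorter on the page and makes explicit the connection to a well-known structural result about unimodal distributions.
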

The proofs can be found in \Cref{proof:disc_basic_prop}.
Based on \Cref{prop:disc_basic_prop}, we can now express the
discrepancy function $f_{\sigma,\lambda}^\nu(x):\RR\to\RR$ in terms of
a constant $\gamma_\lambda^\nu$ and
another function $\varphi_{\lambda}^{\nu}:\RR_+^* \to \RR$,
both of which can be parameterized by
only two parameters $\lambda>0$ and $\nu>0$, as
\begin{align}\label{eq:f_vs_varphi}
  f_{\sigma,\lambda}^\nu(x)
  &=
    \log \sigma
    +
    \gamma_{\lambda/\sigma}^\nu
    +
    \choice{
    e^{\varphi_{\lambda/\sigma}^\nu(|x/\sigma|)}
    &
    \ifq x \ne 0~,\\
    0
    &
    \otherwise~,
  }
  \\
  \qwhereq
  \varphi_\lambda^\nu(x)
  &=
  \log\left[ f_{1,\lambda}^\nu(x) - \gamma_\lambda^\nu \right]
  \qandq
  \gamma_\lambda^\nu = f_{1,\lambda}^\nu(0)~.
\end{align}
We call $\varphi_{\lambda}^{\nu}$ the log-discrepancy function.

\begin{figure}%
  \includegraphics[width=.32\linewidth,viewport=0 -4 305 345,clip]{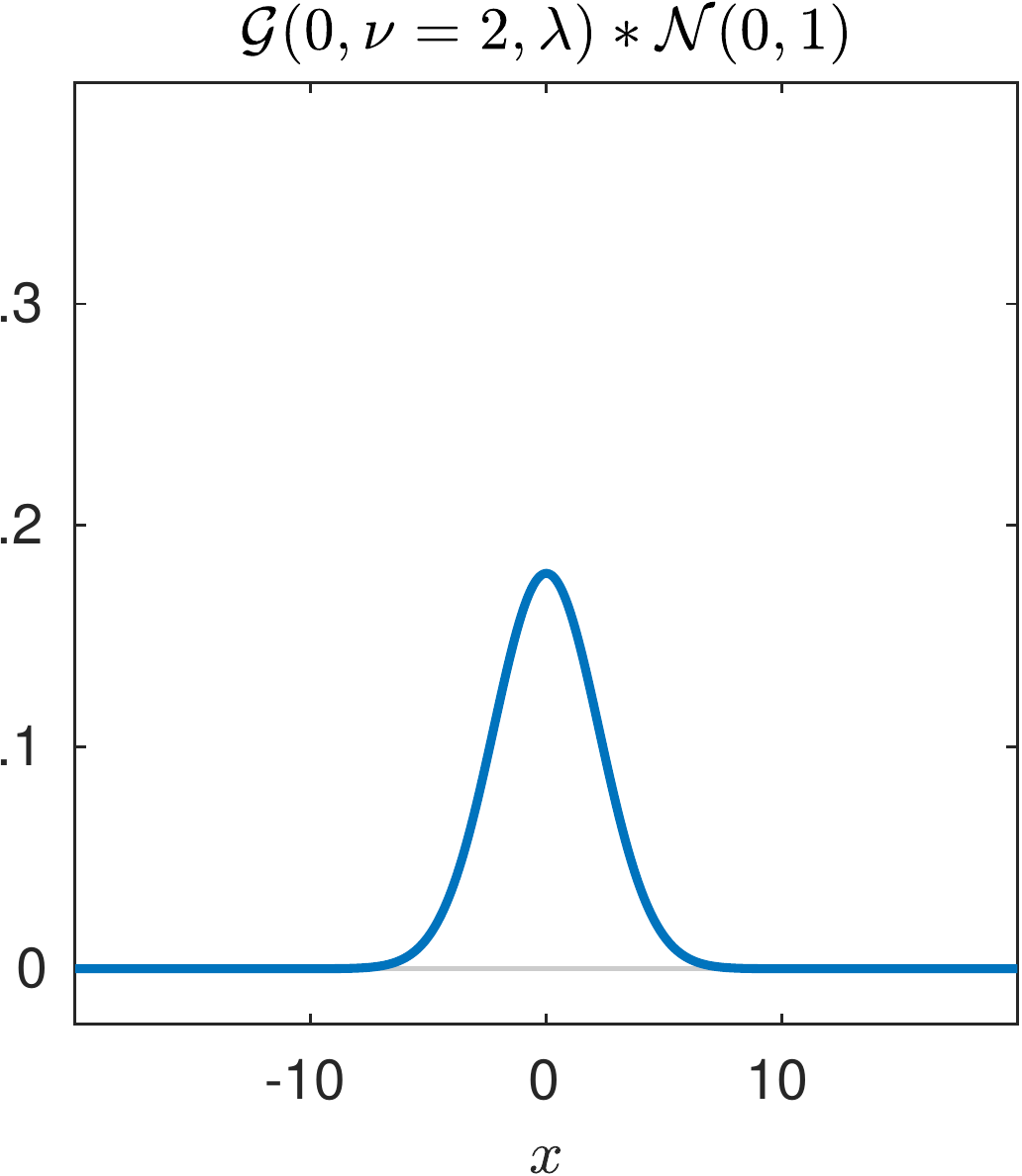}\hfill%
  \includegraphics[width=.32\linewidth,viewport=0 -4 300 345,clip]{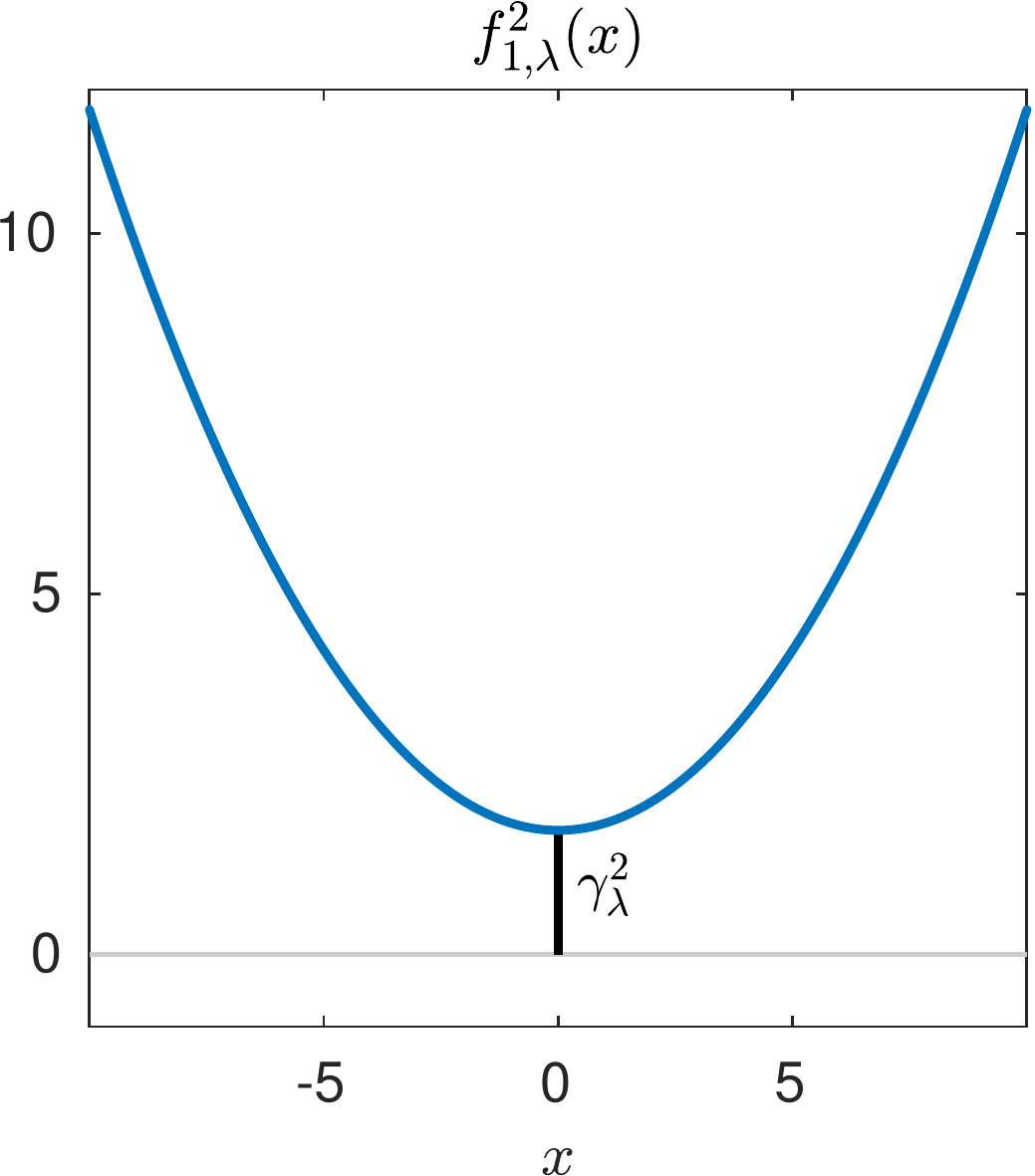}\hfill%
  \includegraphics[width=.32\linewidth,viewport=0 0 300 349,clip]{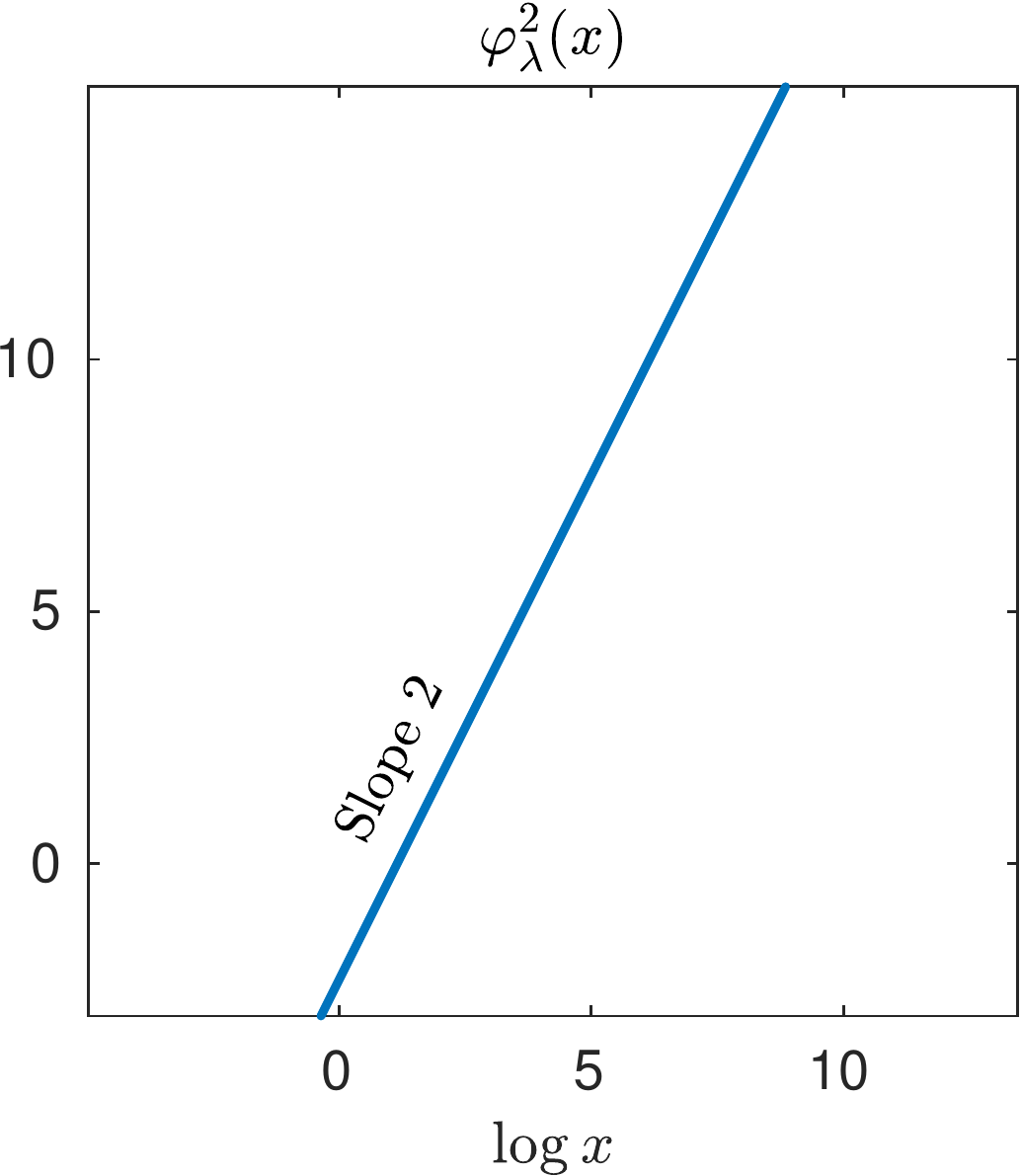}%
  \caption{From left to right: the convolution of a Gaussian distribution
    with standard deviation $\lambda=2$ with a Gaussian distribution
    with standard deviation $\sigma=1$, the corresponding discrepancy
    function and log-discrepancy function.}
  \label{fig:disc_nu2}
  \vskip1em
\end{figure}

At this point, let us consider an instructive toy example for the case when $\nu=2$. In this case, from
eq.~\eqref{eq:discrepancy_l2}, we can deduce that
the log-discrepancy function is a log-linear function
(\ie a linear function of $\log x$)
\begin{align}
  &\varphi^2_\lambda(x)
  =
  \alpha \log x + \beta~,\\
  \qandq
  &\gamma^2_\lambda =
  \tfrac12\left[
    \log 2\pi
    +
    \log(1 + \lambda^2)
    \right]~,
  \\
  \qwhereq
  &\alpha = 2
  \qandq
  \beta = -\log 2 - \log (1 + \lambda^2) ~.
\end{align}
Here, the slope $\alpha = 2$ reveals the quadratic behavior
of the discrepancy function.
\Cref{fig:disc_nu2} gives an illustration
of the resulting convolution (a Gaussian distribution),
the discrepancy function (a quadratic function) and
the log-discrepancy (a linear function with slope $2$).
Note that quadratic metrics are well-known to
be non-robust to outliers, which is in complete
agreement with the fact that Gaussian priors have thin tails.

Another example is the case of $\nu=1$.
From eq.~\eqref{eq:discrepancy_l1},
the log-discrepancy is given by 
\begin{align}
  \label{eq:log_discrepancy_l1}
  \varphi^{1}_\lambda(x)
  &=
    \log \left[
    \log
    \left[
      2
      \erfc{\left(
        \textstyle \frac{1}{\lambda}
      \right)}
      \right]
    -
    \log
    \left[
      e^{\frac{\sqrt{2} x}{\lambda}}
      \erfc{\left(
      \textstyle \frac{x}{\sqrt{2}}
      +
      \frac{1}{\lambda}
      \right)}
      +
      e^{-\frac{\sqrt{2} x}{\lambda}}
      \erfc{\left(
        \textstyle
      -\frac{x}{\sqrt{2}}
      +
      \frac{1}{\lambda}
      \right)}
      \right]
    \right]~,\\
  \qandq
  &\gamma_\lambda^1
  =
  \frac12 \log 2
  +\log \lambda
  -
  \frac1{\lambda^2}
  -
  \log
  \left[
    \erfc{\left(
      \textstyle
      \frac{1}{\lambda}
    \right)}
    \right]~.
\end{align}
Unlike for $\nu = 2$, this function is not log-linear
and thus $f_{\sigma,\lambda}^1$ is not a power function.
Nevertheless, as shown by the next two theorems,
it is also asymptotically log-linear for small and large values of $x$.
\begin{thm}\label{thm:assignment_nu1_left}
  The function $\varphi_\lambda^1$ is
  asymptotically log-linear in the vicinity of $0$
  \begin{align}
    & \varphi_\lambda^1(x)
    \usim{0}
    \alpha_1 \log x + \beta_1~,
    \\
    \qwhereq
    &\alpha_1 = 2
    \qandq
    \beta_1 =
    -\log \lambda
    +
    \log\left[
      \frac{1}{\sqrt{\pi}}
      \frac{
        \exp\left(
        -
        \frac1{\lambda^{2}}
        \right)
      }{
        \erfc\left( \frac{1}{\lambda} \right)
      }
      -
      \frac1{\lambda}
      \right]~.
  \end{align}
\end{thm}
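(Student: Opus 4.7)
The plan is to work directly from the closed-form expression \eqref{eq:log_discrepancy_l1} for $\varphi_\lambda^1$, perform a Taylor expansion of the inner bracket around $x=0$, and then extract the log-linear asymptotic. First I would introduce the auxiliary function
\[
  g(x) = e^{\sqrt{2}x/\lambda}\,\erfc\!\left(\tfrac{x}{\sqrt{2}} + \tfrac{1}{\lambda}\right)
       + e^{-\sqrt{2}x/\lambda}\,\erfc\!\left(-\tfrac{x}{\sqrt{2}} + \tfrac{1}{\lambda}\right),
\]
so that $\varphi_\lambda^1(x) = \log\!\bigl[\log g(0) - \log g(x)\bigr]$ with $g(0) = 2\,\erfc(1/\lambda)$. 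The two summands defining $g$ are exchanged under $x\mapsto -x$, so $g$ is even; hence $g(x) = g(0) + \tfrac{1}{2}g''(0)\,x^{2} + O(x^{4})$ with $g'(0) = 0$ automatically. Moreover, \Cref{prop:disc_basic_prop} (lower bound at $0$ and unimodality) gives $g(x)\le g(0)$ on a neighborhood of $0$, so $g''(0)\le 0$, which will guarantee that the argument of the outer logarithm in $\varphi_\lambda^1$ is nonnegative.

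The second step is the explicit computation of $g''(0)$. Differentiating twice using $\frac{d}{dx}\erfc(u) = -\frac{2}{\sqrt{\pi}} e^{-u^{2}}u'$, collecting the four terms (two coming from the $e^{\pm\sqrt{2}x/\lambda}\erfc$ products and two from the chain-rule derivatives of $\erfc$), and evaluating at $x=0$ (where the symmetric contributions combine), I expect to obtain
\[
  g''(0) = \frac{4}{\lambda^{2}}\,\erfc\!\left(\tfrac{1}{\lambda}\right)
          - \frac{4}{\lambda\sqrt{\pi}}\,\exp\!\left(-\tfrac{1}{\lambda^{2}}\right).
\]
Dividing by $2g(0) = 4\,\erfc(1/\lambda)$ and taking the negative yields
\[
  -\frac{g''(0)}{2g(0)} \;=\; \frac{1}{\lambda}\left[\frac{1}{\sqrt{\pi}}\,\frac{\exp(-1/\lambda^{2})}{\erfc(1/\lambda)} - \frac{1}{\lambda}\right],
\]
which is precisely $e^{\beta_{1}}/\lambda$ times $\lambda$, matching the stated formula for $\beta_{1}$.

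The third step is to propagate this expansion through the two logarithms. Writing $g(x)/g(0) = 1 + \tfrac{g''(0)}{2g(0)}x^{2} + O(x^{4})$ and expanding the outer logarithm gives
\[
  \log g(0) - \log g(x) = -\frac{g''(0)}{2g(0)}\,x^{2} + O(x^{4}),
\]
so, taking another logarithm,
\[
  \varphi_\lambda^1(x) = \log\!\left(-\frac{g''(0)}{2g(0)}\right) + 2\log x + \log\!\bigl(1 + O(x^{2})\bigr) = 2\log x + \beta_{1} + o(1),
\]
which is the claimed $\alpha_{1}=2$, $\beta_{1}$ asymptotic in the vicinity of $0$.

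The main obstacle I expect is the bookkeeping in computing $g''(0)$: the four differentiated terms produce six contributions, and the cancellations have to be done carefully to land on a formula consistent with $g''(0)\le 0$. A secondary, minor point is justifying that the bracket inside the outer logarithm of $\varphi_\lambda^1$ is strictly positive for small $x\neq 0$ so that the asymptotic is well-defined; this follows from $g''(0) < 0$, which can be verified using the Mills-ratio bound $\erfc(y) < \frac{e^{-y^{2}}}{y\sqrt{\pi}}$ at $y=1/\lambda$ to conclude $\frac{1}{\sqrt{\pi}}\,\frac{e^{-1/\lambda^{2}}}{\erfc(1/\lambda)} > \frac{1}{\lambda}$, so that $-g''(0)/(2g(0))>0$ and $\beta_{1}$ is a real number.
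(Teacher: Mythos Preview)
Your proposal is correct. Both your argument and the paper's proof rest on the same idea---a second-order Taylor expansion of the bracketed expression around $x=0$---but the organization differs. The paper first factors $e^{\sqrt{2}x/\lambda}$ out of the two summands and packages the resulting expansion into a separate lemma (\Cref{lem:erfc}), which gives the linear behavior of $\frac{1}{2abx}\log[\cdots]$ and is then substituted back into a rewritten form of $\varphi_\lambda^1$. You instead exploit the evenness of $g$ from the outset, so that $g'(0)=0$ is automatic and only $g''(0)$ needs to be computed; this is slightly more direct for the present theorem and makes the appearance of $x^2$ (hence $\alpha_1=2$) transparent. The paper's lemma-based route has the organizational advantage of mirroring the structure used for the $x\to\infty$ asymptotic (\Cref{thm:assignment_nu1_right} via \Cref{lem:erfc2}), while your approach trades that parallelism for a shorter, self-contained computation. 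Your use of the Mills-ratio bound to confirm $g''(0)<0$ is a nice addition that the paper leaves implicit.
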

The proof can be found in \Cref{proof:nu1_left}.

\begin{figure}%
  \includegraphics[width=.32\linewidth,viewport=0 -4 305 345,clip]{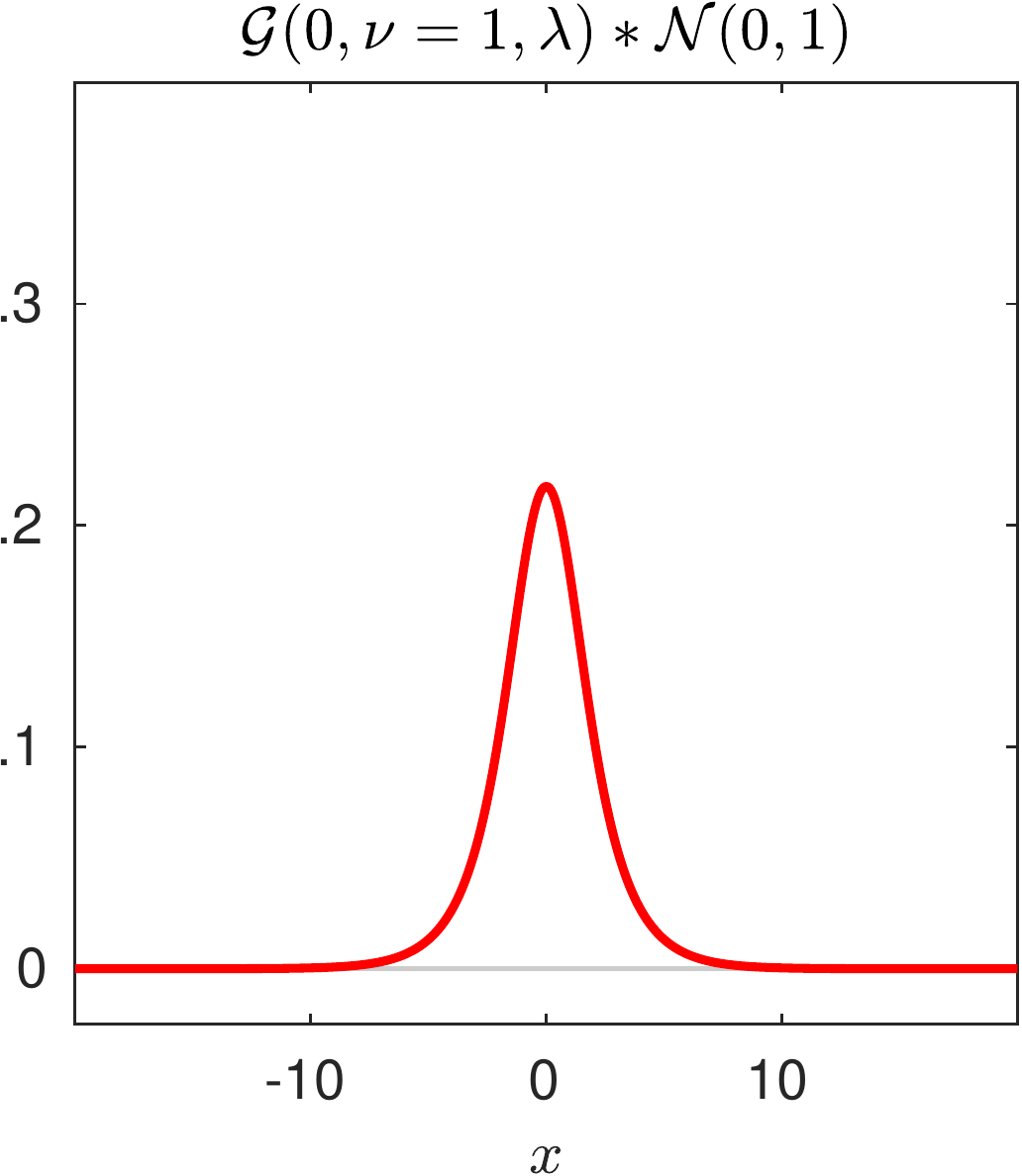}\hfill%
  \includegraphics[width=.32\linewidth,viewport=0 -4 300 345,clip]{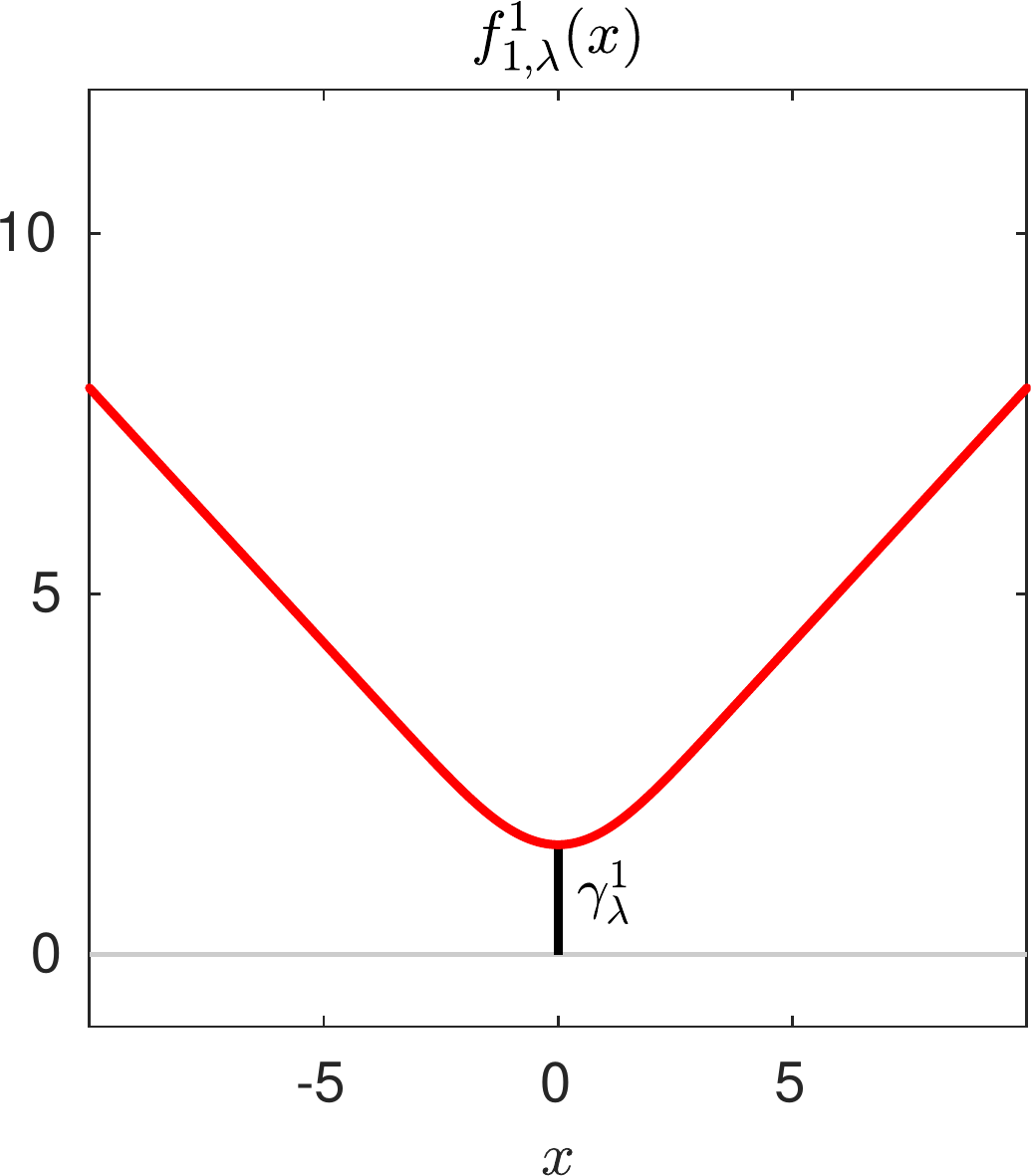}\hfill%
  \includegraphics[width=.317\linewidth,viewport=0 0 300 349,clip]{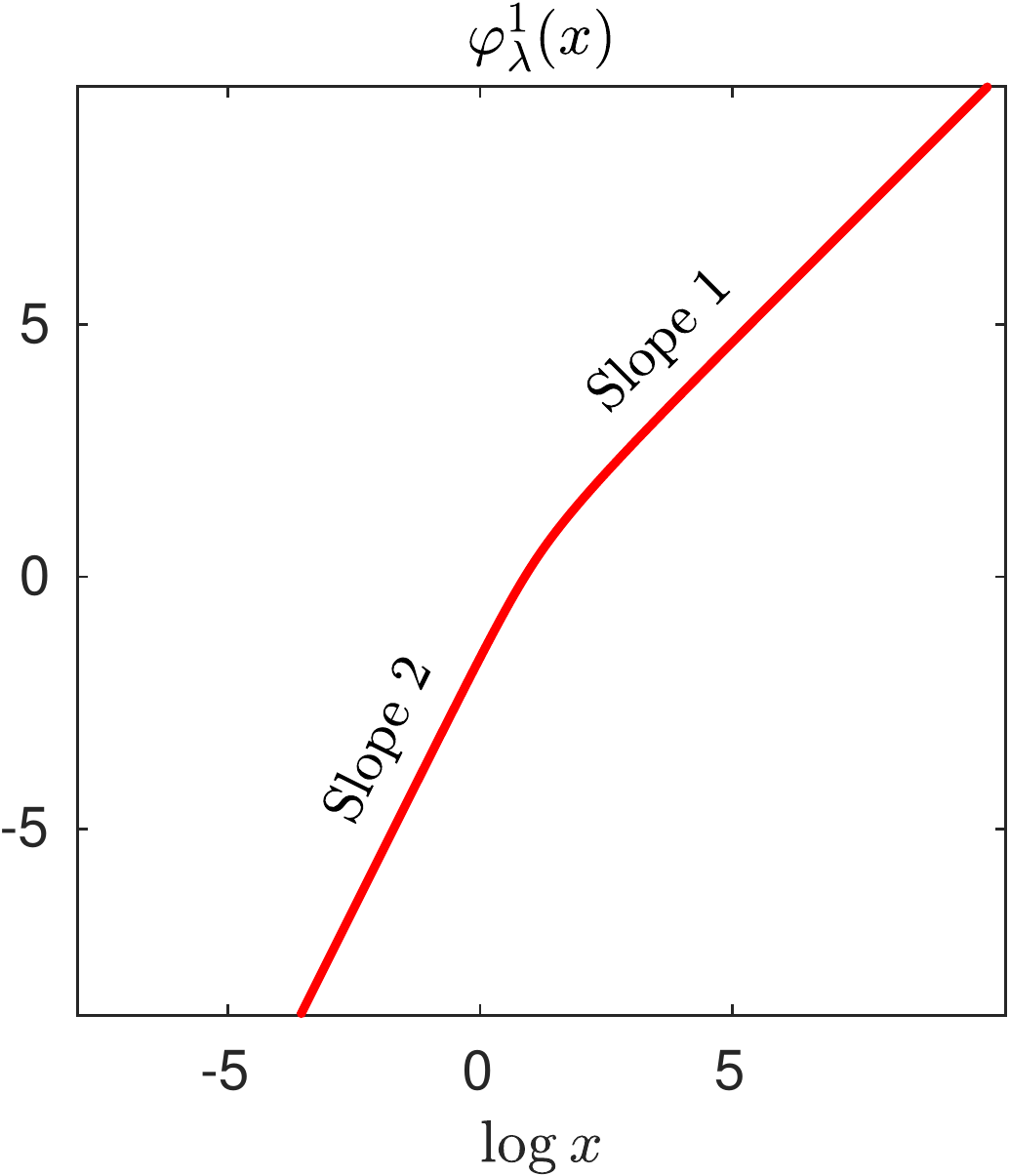}%
  \caption{From left to right: the convolution of a Laplacian distribution
    with standard deviation $\lambda=2$ with a Gaussian distribution
    with standard deviation $\sigma=1$, the corresponding discrepancy
    function and log-discrepancy function.}
  \label{fig:disc_nu1}
  \vskip1em
\end{figure}

\begin{thm}\label{thm:assignment_nu1_right}
  The function $\varphi_\lambda^1$ is
  asymptotically log-linear in the vicinity of $+\infty$
  \begin{align}
    & \varphi_\lambda^1(x)
    \usim{\infty}
    \alpha_2 \log x + \beta_2~,
    \\
    \qwhereq
    &\alpha_2 = 1
    \qandq
    \beta_2 = \frac12 \log 2 - \log \lambda~.
  \end{align}
\end{thm}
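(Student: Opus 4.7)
The plan is to substitute $\sigma = 1$ into the closed-form expression \eqref{eq:discrepancy_l1} and analyze the large-$x$ behavior of the bracketed sum of $\erfc$ terms. Writing
\begin{align}
B(x) = e^{\sqrt{2}x/\lambda}\erfc\!\left(\tfrac{x}{\sqrt{2}} + \tfrac{1}{\lambda}\right) + e^{-\sqrt{2}x/\lambda}\erfc\!\left(-\tfrac{x}{\sqrt{2}} + \tfrac{1}{\lambda}\right)~,
\end{align}
the two ingredients I would invoke are the Mills-ratio estimate $\erfc(y) \sim e^{-y^2}/(\sqrt{\pi}\,y)$ as $y\to+\infty$ and the reflection identity $\erfc(-y) = 2 - \erfc(y)$.

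First, I would control the first summand of $B(x)$ via Mills, obtaining $e^{\sqrt{2}x/\lambda}\erfc(x/\sqrt{2}+1/\lambda) = O(x^{-1} e^{-x^2/2 - 1/\lambda^2})$, which decays super-exponentially. For the second summand, the reflection identity combined with a second application of Mills at the argument $x/\sqrt{2}-1/\lambda$ shows that $\erfc(-x/\sqrt{2}+1/\lambda) \to 2$ with super-exponentially small error. Together these give
\begin{align}
B(x) = 2 e^{-\sqrt{2}x/\lambda}\bigl(1 + \epsilon(x)\bigr)\qquad\text{with}\qquad \epsilon(x) \to 0
\end{align}
super-exponentially as $x\to+\infty$. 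Substituting back into \eqref{eq:discrepancy_l1} and simplifying the additive constants via $\log(2\sqrt{2}\lambda)-\log 2 = \log(\sqrt{2}\lambda)$ produces
\begin{align}
f_{1,\lambda}^1(x) = \log(\sqrt{2}\lambda) - \tfrac{1}{\lambda^2} + \tfrac{\sqrt{2}x}{\lambda} - \log\bigl(1+\epsilon(x)\bigr)~,
\end{align}
and hence $f_{1,\lambda}^1(x) - \gamma_\lambda^1 = \tfrac{\sqrt{2}x}{\lambda} + C_\lambda + o(1)$ for a finite constant $C_\lambda$ depending only on $\lambda$.

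Finally I would apply the outer logarithm in the definition $\varphi_\lambda^1 = \log(f_{1,\lambda}^1 - \gamma_\lambda^1)$ and factor out the dominant linear term:
\begin{align}
\varphi_\lambda^1(x) = \log\!\left(\tfrac{\sqrt{2}x}{\lambda}\right) + \log\!\left(1 + \tfrac{C_\lambda + o(1)}{\sqrt{2}x/\lambda}\right) = \log x + \tfrac{1}{2}\log 2 - \log\lambda + o(1)~,
\end{align}
which is exactly $\alpha_2 \log x + \beta_2$ with $\alpha_2=1$ and $\beta_2 = \tfrac{1}{2}\log 2 - \log\lambda$, up to a vanishing error.

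The only delicate point is the composition of two successive logarithms: the outer one requires $f_{1,\lambda}^1(x) - \gamma_\lambda^1$ to be \emph{genuinely} of order $x$ (not merely unbounded), and this is precisely what the explicit linear term $\sqrt{2}x/\lambda$ provides. Beyond that, no refined $\erfc$ asymptotics beyond the leading Mills term are needed, because all remainder terms are super-exponentially small and are comfortably absorbed by the $o(1)$ in the final statement.
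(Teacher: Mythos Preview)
Your proof is correct and follows essentially the same route as the paper: both identify that the second summand $e^{-\sqrt{2}x/\lambda}\erfc(-x/\sqrt{2}+1/\lambda)\sim 2e^{-\sqrt{2}x/\lambda}$ dominates $B(x)$, then take logarithms to extract the linear-in-$x$ term and finally the $\log x$ asymptotic. The paper packages the asymptotic analysis into a separate lemma (\Cref{lem:erfc2}) after first rewriting $\varphi_\lambda^1$ via eq.~\eqref{eq:phi1_rewritten}, and invokes only the crude limits $\erfc(\pm\infty)\in\{0,2\}$ rather than the Mills ratio you use, but the substance is identical.
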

The proof can be found in \Cref{proof:nu1_right}.

\Cref{thm:assignment_nu1_left} and
\Cref{thm:assignment_nu1_right} show that
$\varphi_\lambda^1$ has two different asymptotics that can
be approximated by a log-linear function.
Interestingly, the exponent $\alpha_1 = 2$
in the vicinity of $0$ shows that
the Gaussian distribution involved in the convolution
prevails over the Laplacian distribution
and thus, the behavior of $f_{\sigma,\lambda}^1$ is quadratic.
Similarly, the exponent $\alpha_2 = 1$
in the vicinity of $+\infty$ shows that
the Laplacian distribution involved in the convolution
prevails over the Gaussian distribution
and the behavior of $f_{\sigma,\lambda}^1$ is then linear.
These results are supported by \Cref{fig:disc_nu1}
which illustrates the resulting convolution,
the discrepancy function (eq.~\eqref{eq:discrepancy_l1}) and
the log-discrepancy function (eq.~\eqref{eq:log_discrepancy_l1}).
Furthermore, 
the discrepancy function
$f_{\sigma,\lambda}^1$ shares a similar behavior with the well-known
Huber loss function \cite{huber1964robust} (also called smoothed $\ell_1$), known to be
more robust to outliers.
This is again in complete
agreement with the fact that Laplacian priors have heavier tails.

In the case $\frac23 < \nu<2$,
even though $\varphi_\lambda^\nu$ has no simple closed form expression,
the similar conclusions can be made
as a result of the next two theorems.

\begin{thm}\label{thm:assignment_general_left}
  Let $\nu >0$. The function
  $\varphi^\nu_\lambda$ is
  asymptotically log-linear in the vicinity of $0$
  \begin{align*}
    & \varphi^\nu_\lambda(x)
    \usim{0}
    \alpha_1 \log x + \beta_1~,
    \\
    \qwhereq
    &\alpha_1 = 2
    \qandq
    \beta_1 =
    - \log 2
    +
    \log\left(
      1-
    \frac{
      \displaystyle
      \int_{-\infty}^\infty
      t^2
      e^{-\frac{t^2}{2}}
      \exp\left[ -\left(\frac{|t|}{\lambda_\nu}\right)^\nu \right] \; \d t
    }{
      \displaystyle
      \int_{-\infty}^\infty
      e^{-\frac{t^2}{2}}
      \exp\left[ -\left(\frac{|t|}{\lambda_\nu}\right)^\nu \right] \; \d t
    }
    \right)~.
    \end{align*}
\end{thm}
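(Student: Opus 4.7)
The plan is to carry out a Taylor expansion of the integral defining $f_{1,\lambda}^\nu$ around $x=0$ and then compose it with two logarithms. From the definition and the \eqref{eq:disc_reduction} relation, it suffices to write
\begin{align*}
f_{1,\lambda}^\nu(x) - \gamma_\lambda^\nu
= -\log \frac{Z(x)}{Z(0)},
\qquad
Z(x) := \int_{-\infty}^{\infty} \exp\!\left(-\tfrac{(x-t)^2}{2}\right)\exp\!\left[-\left(\tfrac{|t|}{\lambda_\nu}\right)^\nu\right] dt,
\end{align*}
and produce the leading order of $Z(x)/Z(0) - 1$ as $x\to 0$.

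The main move is to factor out $e^{-x^2/2}$ from the Gaussian kernel, writing $Z(x) = e^{-x^2/2} h(x)$ where
\begin{align*}
h(x) := \int_{-\infty}^{\infty} e^{xt}\, e^{-t^2/2}\exp\!\left[-\left(\tfrac{|t|}{\lambda_\nu}\right)^\nu\right] dt.
\end{align*}
Since the non-exponential weight $t\mapsto e^{-t^2/2}\exp[-(|t|/\lambda_\nu)^\nu]$ is even, I Taylor-expand $e^{xt} = 1 + xt + \tfrac{x^2 t^2}{2} + \tfrac{x^3 t^3}{6} + \cdots$ and note that all odd moments vanish by parity. Letting $M_n := \int t^n e^{-t^2/2}\exp[-(|t|/\lambda_\nu)^\nu]\,dt$, I obtain $h(x) = M_0 + \tfrac{x^2}{2}M_2 + O(x^4)$, where the $O(x^4)$ bound is produced by dominated convergence, using the fact that $|e^{xt} - 1 - xt - \tfrac{x^2 t^2}{2}| \leq \tfrac{x^4 t^4}{24} e^{|x||t|}$ and that, for $|x|$ bounded, $e^{|x||t|} e^{-t^2/2}$ is integrable against a bounded weight. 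Combining with the $e^{-x^2/2}$ factor gives
\begin{align*}
\frac{Z(x)}{Z(0)} = e^{-x^2/2}\left(1 + \frac{x^2}{2}\frac{M_2}{M_0} + O(x^4)\right) = 1 + \frac{x^2}{2}\!\left(\frac{M_2}{M_0}-1\right) + O(x^4).
\end{align*}

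Applying $-\log(1+u) = -u + O(u^2)$ yields $f_{1,\lambda}^\nu(x) - \gamma_\lambda^\nu \sim \tfrac{x^2}{2}(1 - M_2/M_0)$ as $x\to 0$. Before taking the outer logarithm that defines $\varphi_\lambda^\nu$, I need to verify that $1 - M_2/M_0 > 0$. This is where I expect the only mildly delicate point: interpreting $M_2/M_0$ as $\mathbb{E}[Z^2]$ for a random variable $Z$ with density proportional to $e^{-t^2/2}\exp[-(|t|/\lambda_\nu)^\nu]$, and observing that this is a standard Gaussian reweighted by a strictly decreasing function of $|t|$, so $\mathrm{Cov}(Z^2, \exp[-(|Z|/\lambda_\nu)^\nu]) < 0$ against the $N(0,1)$ measure, forcing $M_2/M_0 < 1 = \mathbb{E}_{N(0,1)}[Z^2]$.

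Finally, taking $\log$ of the asymptotic gives
\begin{align*}
\varphi_\lambda^\nu(x) = \log\bigl(f_{1,\lambda}^\nu(x) - \gamma_\lambda^\nu\bigr) \usim{0} 2\log x + \log\!\left(\frac{1 - M_2/M_0}{2}\right),
\end{align*}
which recovers $\alpha_1 = 2$ and the claimed $\beta_1$ after substituting the definitions of $M_0$ and $M_2$. The only real obstacle is justifying the uniform remainder estimate for the Taylor expansion inside the integral (routine by dominated convergence) and the strict positivity of $1 - M_2/M_0$ (handled by the negative-correlation argument sketched above); the rest of the argument is a two-step application of $\log(1+u) = u + O(u^2)$.
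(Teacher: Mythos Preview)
Your proposal is correct and follows essentially the same approach as the paper: factor $e^{-x^2/2}$ out of the Gaussian kernel, expand $e^{xt}$, kill the odd moments by parity, and then compose two logarithms. The only differences are cosmetic or in your favor: the paper justifies the series--integral swap via Fubini on the full power series whereas you truncate with a dominated-convergence remainder bound, and you additionally supply the positivity argument for $1 - M_2/M_0$ (needed for the outer $\log$) that the paper leaves implicit.
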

The proof can be found in \Cref{proof:nugen_left}.

\begin{figure}%
  \includegraphics[width=.32\linewidth,viewport=0 20 305 341,clip]{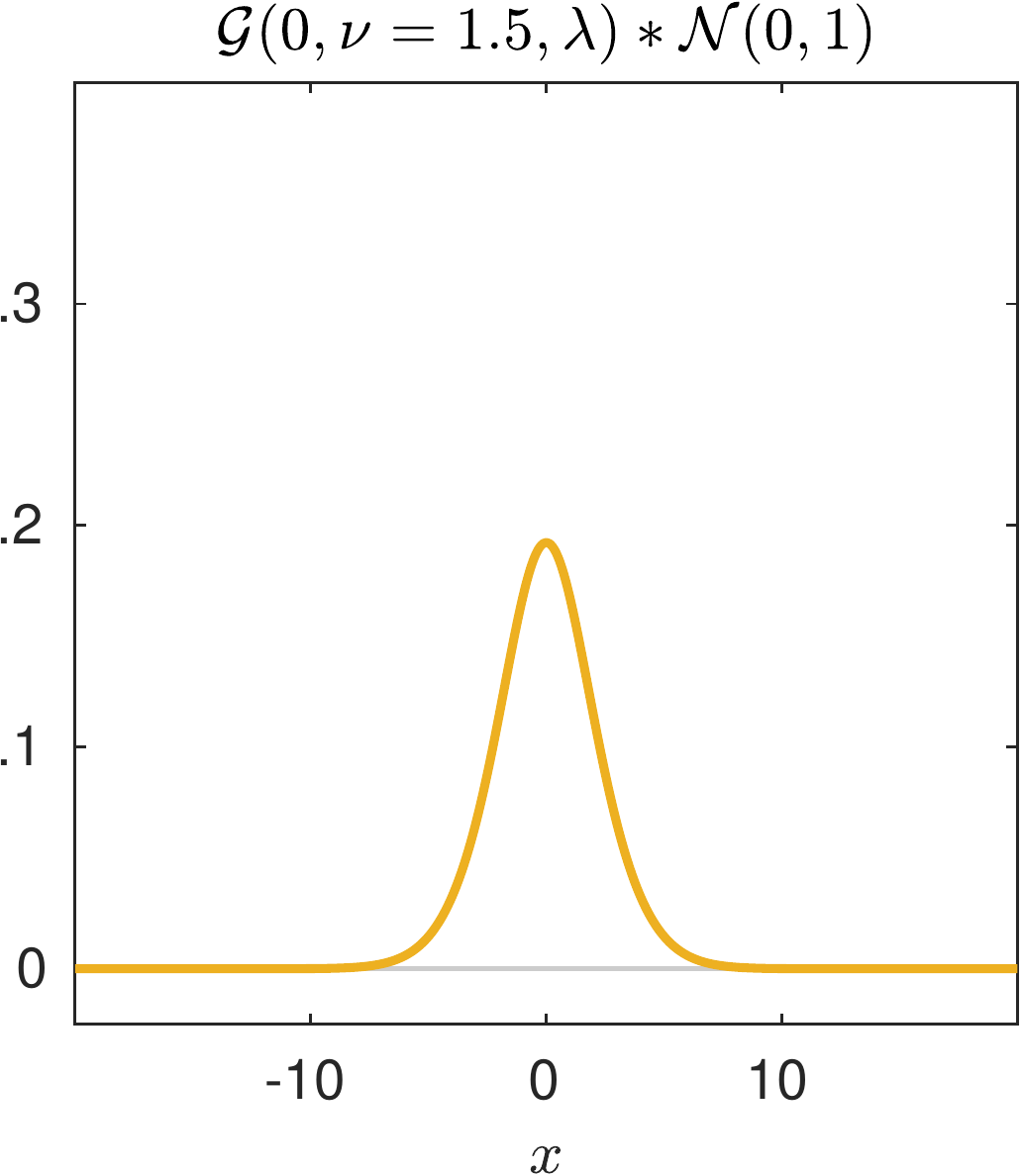}\hfill
  \includegraphics[width=.32\linewidth,viewport=0 20 300 341,clip]{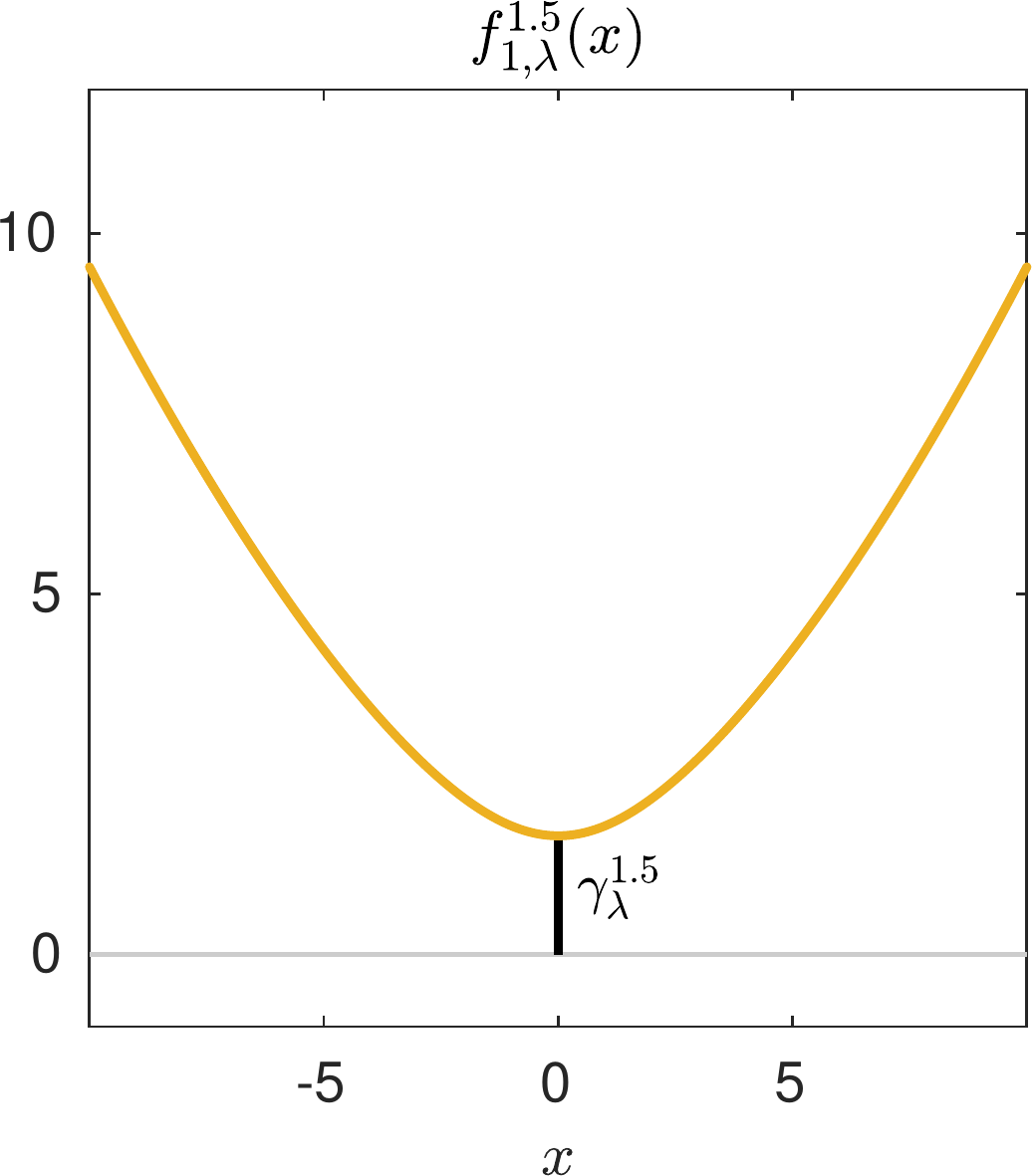}\hfill%
  \includegraphics[width=.317\linewidth,viewport=0 24  300 347,clip]{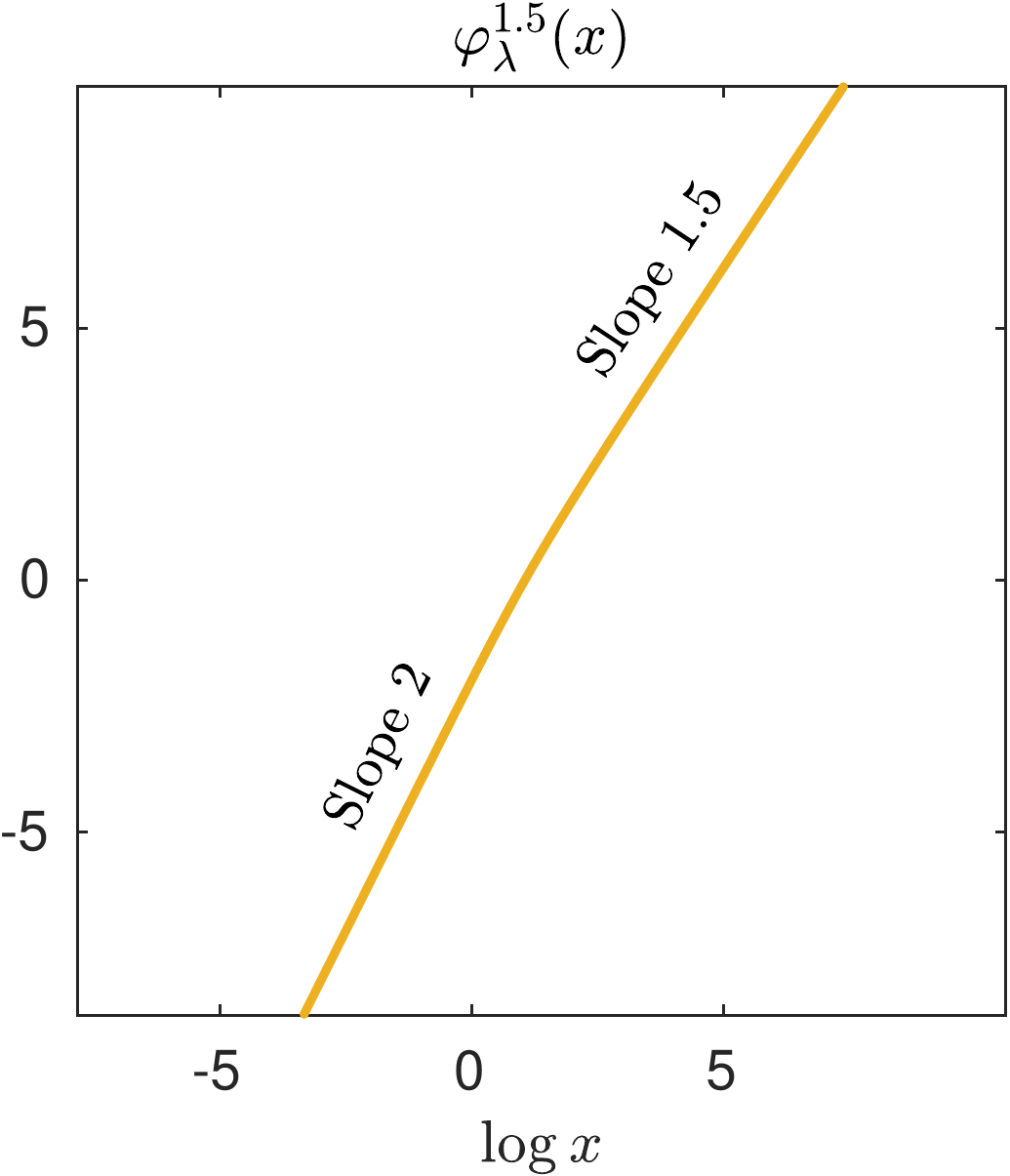}\\[1em]%
\includegraphics[width=.32\linewidth,viewport=0 -4 305 341,clip]{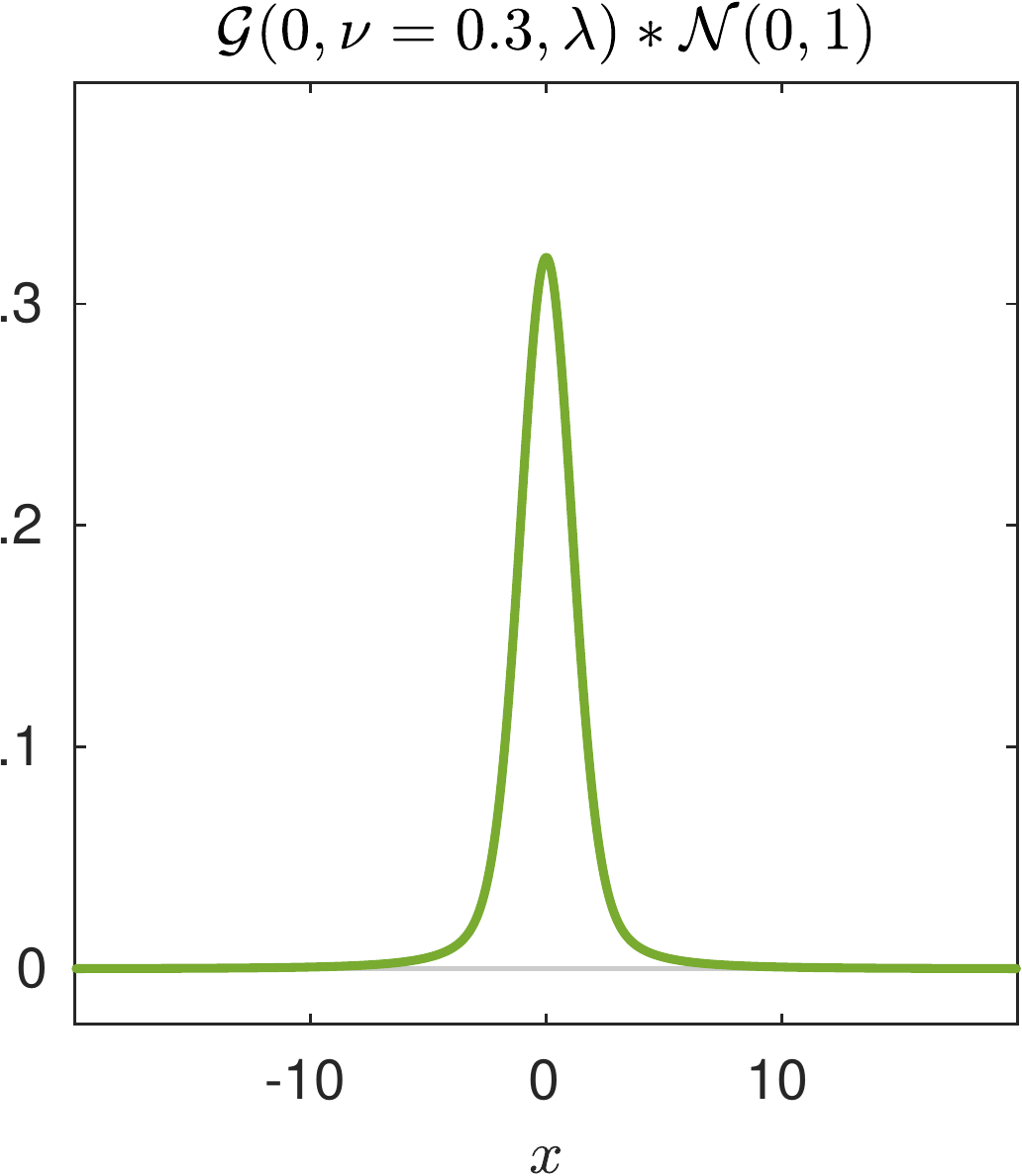}\hfill%
  \includegraphics[width=.32\linewidth,viewport=0 -4 300 341,clip]{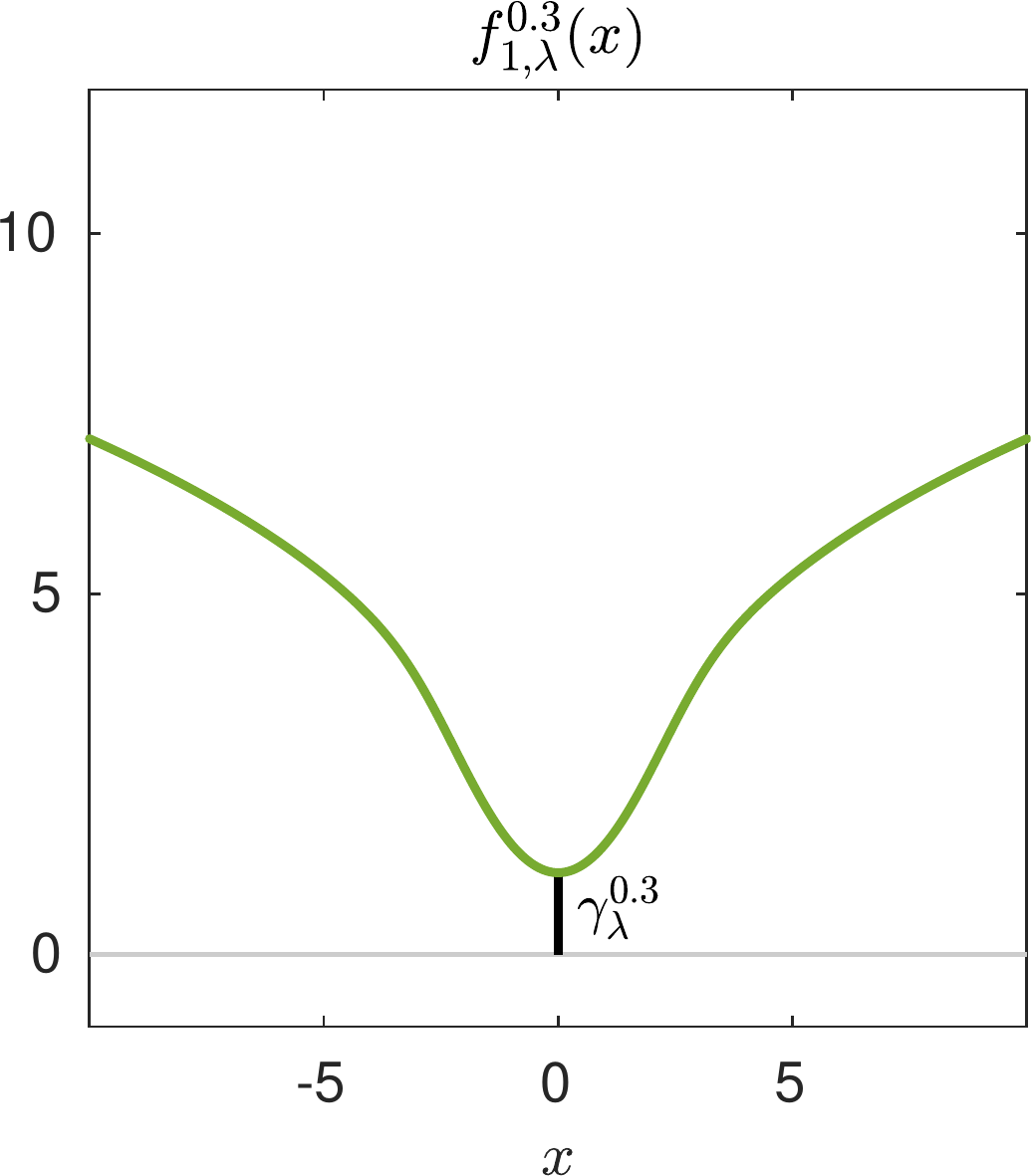}\hfill%
  \includegraphics[width=.317\linewidth,viewport=0 0 300 347,clip]{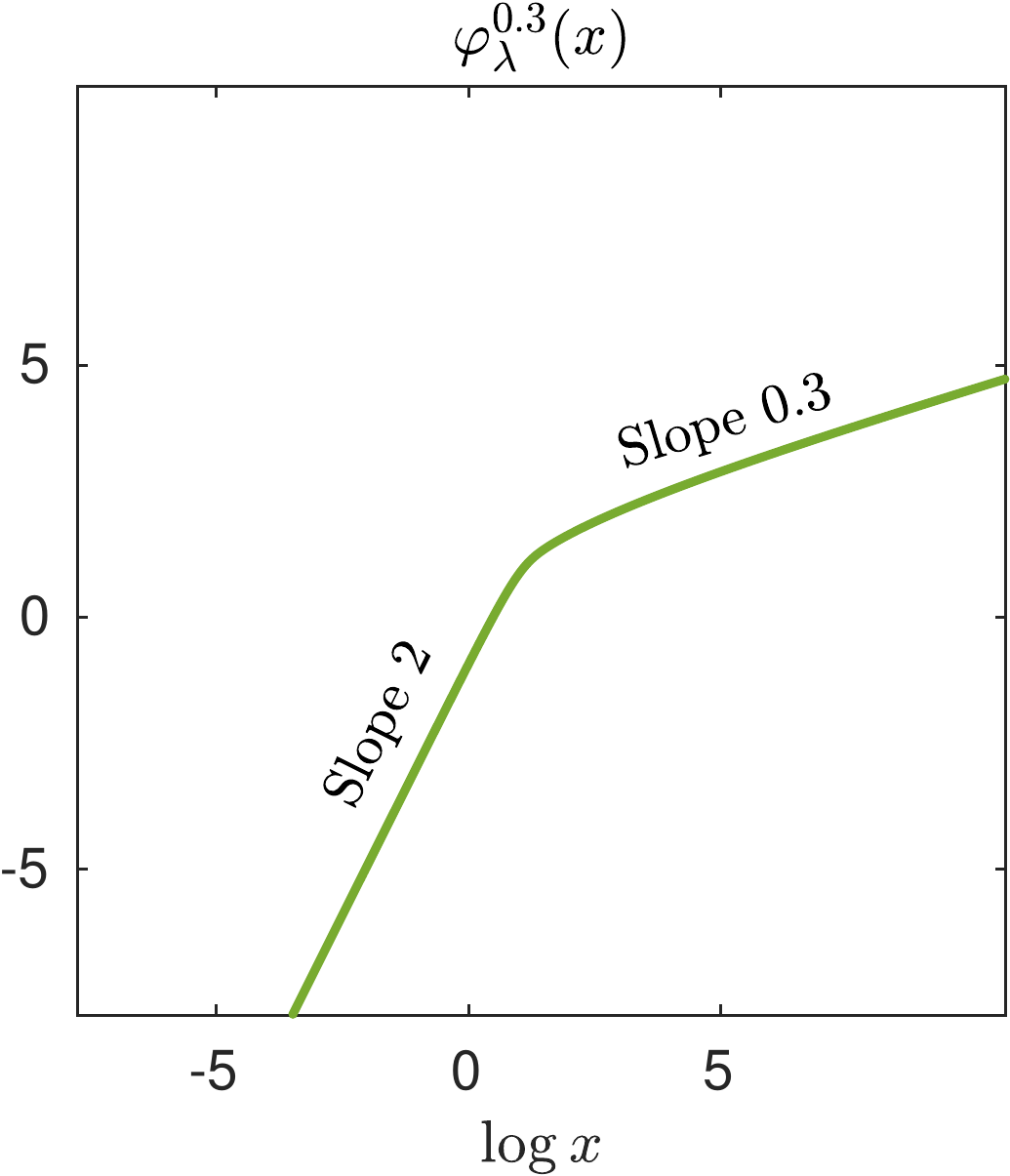}%
  \caption{From left to right: the convolution of a generalized Gaussian distribution
    with standard deviation $\lambda=2$ with a Gaussian distribution
    with standard deviation $\sigma=1$, the corresponding discrepancy
    function and log-discrepancy function.
    From top to bottom: the GGD has a shape parameter $\nu=1.5$ and $.3$, respectively.}
  \label{fig:disc_nuother}
  \vskip1em
\end{figure}

\begin{thm}\label{thm:assignment_general_right}
  Let $\frac23 < \nu < 2$, then
  $\varphi^\nu_\lambda$ is
  asymptotically log-linear in the vicinity of $+\infty$
  \begin{align*}
    & \varphi^\nu_\lambda(x)
    \usim{\infty}
    \alpha_2 \log x + \beta_2~,
    \\
    \qwhereq
    &\alpha_2 = \nu
    \qandq
    \beta_2 =
    -\nu \log \lambda
    -
    \frac{\nu}{2} \log \frac{\Gamma(1/\nu)}{\Gamma(3/\nu)}~.
    \end{align*}
\end{thm}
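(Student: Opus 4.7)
The asymptotic claim $\varphi_\lambda^\nu(x) \usim{\infty} \nu\log x + \beta_2$ is equivalent, upon exponentiation and using $e^{\beta_2} = 1/\lambda_\nu^\nu$, to the equivalence $f_{1,\lambda}^\nu(x) - \gamma_\lambda^\nu \usim{\infty} x^\nu/\lambda_\nu^\nu$. Since the convolution integral in eq.~\eqref{eq:discrepancy_func} admits no elementary closed form for $\nu\notin\{1,2\}$, my plan is to analyse
\[
J(x) \;:=\; \int_{-\infty}^{\infty}\exp\!\Bigl(-\tfrac{(x-t)^2}{2} - (|t|/\lambda_\nu)^\nu\Bigr)\,\d t,
\]
so that $f_{1,\lambda}^\nu(x) = -\log J(x) + C$ with $C$ an $x$-independent constant, and apply a saddle-point (Laplace) expansion of $J(x)$ as $x\to+\infty$.

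First I would truncate the integral: for $x>0$ large, the piece of $J(x)$ over $t\le 0$ is bounded by $e^{-x^2/2}\int_{-\infty}^{0} e^{-(|t|/\lambda_\nu)^\nu}\,\d t$, which is exponentially smaller than the expected main contribution of order $\exp(-x^\nu/\lambda_\nu^\nu)$ since $\nu<2$. On $t>0$, the exponent $g_x(t):=(x-t)^2/2 + (t/\lambda_\nu)^\nu$ has a unique minimiser $t^*(x)$ satisfying $t^* = x - \nu(t^*)^{\nu-1}/\lambda_\nu^\nu$; a short fixed-point iteration then yields $t^* = x - \nu x^{\nu-1}/\lambda_\nu^\nu + o(x^{\nu-1})$, hence $t^*/x\to 1$ and $s:=x-t^* = O(x^{\nu-1})$. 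Taylor-expanding $g_x(t^*) = s^2/2 + (x-s)^\nu/\lambda_\nu^\nu$ via $(x-s)^\nu = x^\nu - \nu x^{\nu-1}s + O(x^{\nu-2}s^2)$ gives
\[
g_x(t^*) \;=\; \frac{x^\nu}{\lambda_\nu^\nu} - \frac{\nu^2}{2\lambda_\nu^{2\nu}}\,x^{2(\nu-1)} + O\bigl(x^{3\nu-4}\bigr),
\]
whose dominant term is $x^\nu/\lambda_\nu^\nu$ precisely because $2\nu-2<\nu$ for $\nu<2$. The Hessian satisfies $g_x''(t^*) = 1 + \nu(\nu-1)(t^*)^{\nu-2}/\lambda_\nu^\nu \to 1$ and higher derivatives $g_x^{(k)}(t^*) = O(x^{\nu-k})\to 0$ for $k\ge 3$, so the Laplace estimate
\[
J(x) \;=\; e^{-g_x(t^*)}\sqrt{2\pi/g_x''(t^*)}\,\bigl(1+o(1)\bigr)
\]
applies. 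Taking $-\log$, collecting the normalisation constants into $C$ and into $\gamma_\lambda^\nu = f_{1,\lambda}^\nu(0)$, I obtain $f_{1,\lambda}^\nu(x)-\gamma_\lambda^\nu = (x^\nu/\lambda_\nu^\nu)(1+o(1))$, whence $\varphi_\lambda^\nu(x) = \nu\log x - \nu\log\lambda_\nu + \log(1+o(1)) = \nu\log x + \beta_2 + o(1)$, as claimed.

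The main obstacle is that $g_x''(t^*)\to 1$ rather than diverging, so the integrand does not concentrate around $t^*$ in the classical Laplace regime and one cannot absorb the $x$-dependence into a single rescaling parameter; every remainder term must be tracked explicitly via the Taylor expansion of $g_x$ near $t^*$ together with a Gaussian-type majorisation of $J(x)$ outside a bounded neighbourhood of $t^*$. The hypothesis $\nu>2/3$ enters at this bookkeeping stage, ensuring that the cubic and higher Taylor remainders of $g_x$ at $t^*$, as well as the next-order term $O(x^{3\nu-4})$ coming from expanding $(x-s)^\nu$, stay sufficiently small relative to the quadratic fluctuations for the Laplace approximation -- and hence for the $o(1)$ statement on $\varphi_\lambda^\nu(x) - (\nu\log x + \beta_2)$ -- to actually close via the direct saddle-point computation outlined above.
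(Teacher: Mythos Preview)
Your saddle-point route is plausible and genuinely different from the paper's argument. The paper does not perform any Laplace expansion; instead it invokes a tail-convolution lemma of Berman (Corollary~3.3 in \cite{berman1992tail}), which states that if $p,q$ are densities with $v=-(\log p)'$ and $w=-(\log q)'$ positive, continuous and regularly oscillating, and if $w(x)/v(x)\to 0$ and $u(x)w(x)\to\infty$ with $u=p^{-1}\circ q$, then $\log(p*q)(x)\sim\log q(x)$. Taking $p=\Nn(0,1)$ and $q=\Gg(0,\lambda,\nu)$ gives $v(x)=x$, $w(x)\propto x^{\nu-1}$, and $u(x)w(x)\propto x^{3\nu/2-1}$; the hypothesis $u(x)w(x)\to\infty$ is exactly $3\nu/2-1>0$, i.e.\ $\nu>2/3$. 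One then reads off $f_{1,\lambda}^\nu(x)\sim(x/\lambda_\nu)^\nu$ and hence $\varphi_\lambda^\nu(x)\sim\nu\log x-\nu\log\lambda_\nu$ directly. So in the paper the constraint $\nu>2/3$ has a crisp provenance: it is precisely the condition needed to feed Berman's black box.

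Your approach trades this black box for an explicit expansion, which is more elementary and in principle more informative (you get the next-order correction $-\tfrac{\nu^2}{2\lambda_\nu^{2\nu}}x^{2(\nu-1)}$, which Berman does not). The weak point of your sketch is the last paragraph: the computations you actually display---$g_x''(t^*)\to1$, $g_x^{(k)}(t^*)=O(x^{\nu-k})\to0$ for $k\ge3$, and the remainder $O(x^{3\nu-4})$ in $g_x(t^*)$---all behave well for every $\nu\in(0,2)$, so none of them single out the threshold $2/3$. If your Laplace estimate $J(x)=e^{-g_x(t^*)}\sqrt{2\pi/g_x''(t^*)}(1+o(1))$ can really be justified from those ingredients alone, you would have proved the theorem for all $0<\nu<2$, which the paper explicitly could not (see their Remark following the theorem). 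Either your argument is stronger than you realise, or the place where it actually breaks for small $\nu$ is not the one you name; in either case you should locate precisely which tail or remainder bound fails when $\nu\le2/3$, or else drop the claim that the restriction is needed for your method.
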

The proof relies on a result of Berman (1992) \cite{berman1992tail}
and is detailed in \Cref{proof:nugen_right}.

\begin{rem}
  For $\nu > 2$, an asymptotic log-linear
  behavior with $\alpha_2 = 2$ and $\beta_2 = -\log 2$
  can be obtained using exactly the same sketch of proof as the one of
  \Cref{thm:assignment_general_right}.
\end{rem}

\begin{rem}
  For $\nu =2$, we have $\varphi_\lambda^2$ is linear,
  $\beta_1 = -\log 2 - \log \left( 1 + \lambda^2 \right)$
  and $\beta_2 = -\log 2 - \log \lambda^2$,
  which shows that \Cref{thm:assignment_general_right}
  cannot hold true for $\nu=2$.
\end{rem}

\begin{rem}
  For $\nu = 1$, \Cref{thm:assignment_nu1_left} and
  \Cref{thm:assignment_nu1_right} coincide with
  \Cref{thm:assignment_general_left} and
  \Cref{thm:assignment_general_right}.
\end{rem}

\begin{rem}\label{rem:conjecture}
  For $0 < \nu \leq \frac23$, though we did not succeed in proving it,
  our numerical simulations also revealed
  a log-linear asymptotic behavior for $x \to \infty$
  in perfect agreement with the expression of $\alpha_2$ and $\beta_2$
  given in \Cref{thm:assignment_general_right}.
\end{rem}



Again, the exponent $\alpha_1 = 2$
in the vicinity of $0$ shows that
the Gaussian distribution involved in the convolution
prevails over the generalized Gaussian distribution
and the behavior of $f_{\sigma,\lambda}^\nu$ is then quadratic.
Similarly, the exponent $\alpha_2 = \nu$
in the vicinity of $+\infty$ shows that
the generalized Gaussian distribution involved in the convolution
prevails over the Gaussian distribution
and the behavior of $f_{\sigma,\lambda}^\nu$ is then
a power function of the form $x^\nu$.
These results are supported by \Cref{fig:disc_nuother}
that illustrates the resulting convolution,
the discrepancy function and
the log-discrepancy function for $\nu=1.5$ and $\nu=.3$.
Moreover, 
the discrepancy function
$f_{\sigma,\lambda}^\nu$ with $\nu \leq 1$ shares a similar behavior with well-known
robust M-estimator loss functions \cite{huber2011robust}.
In particular, the asymptotic case for $\nu \to 0$ resembles the
Tukey's bisquare loss, known to be insensitive to outliers.
This is again in complete
agreement with GGD priors having larger tails
as $\nu$ goes to $0$.

\begin{figure}
  {\includegraphics[height=.39\linewidth]{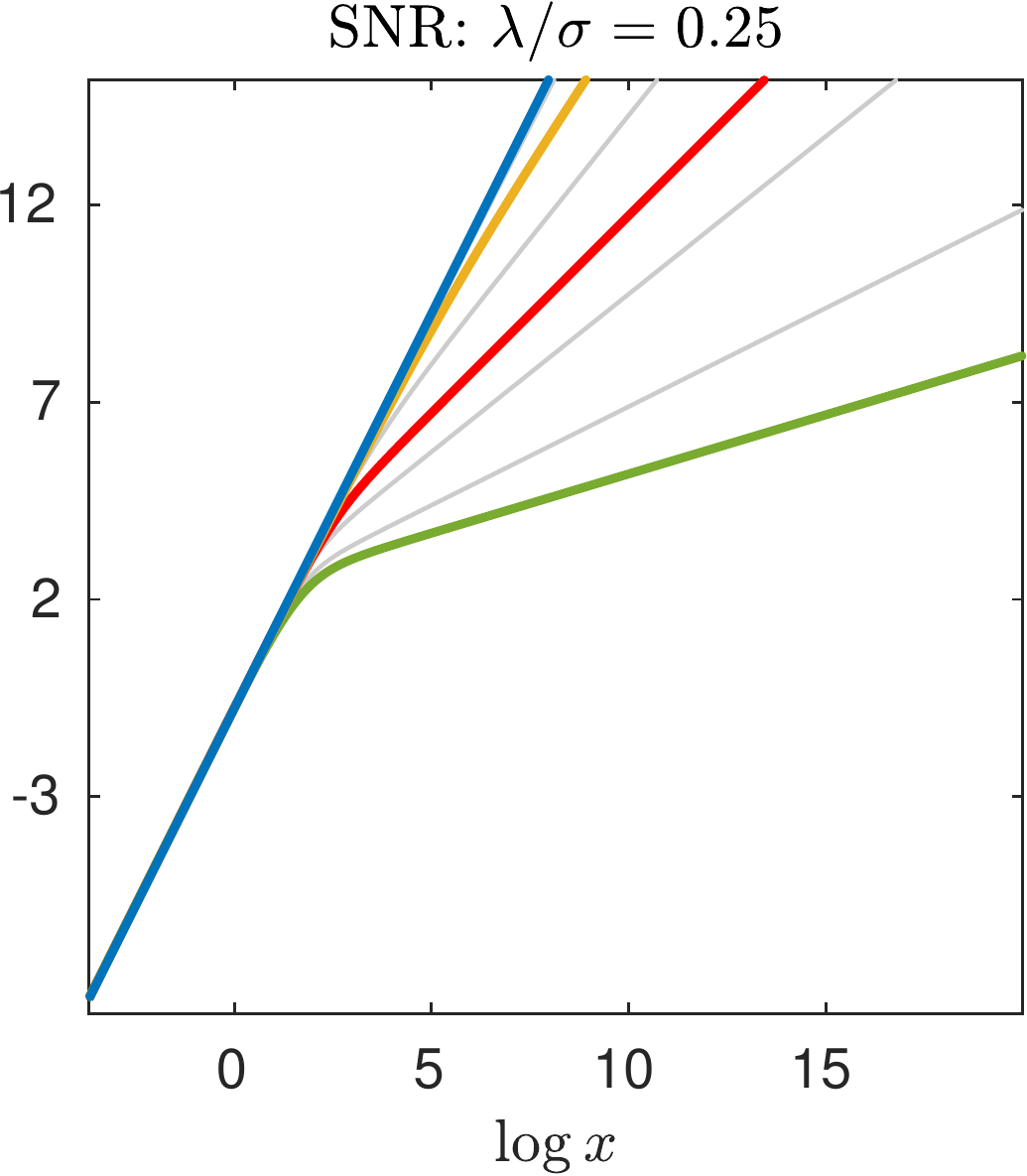}}\hfill%
  {\includegraphics[height=.39\linewidth,viewport=25 0 298 341,clip]{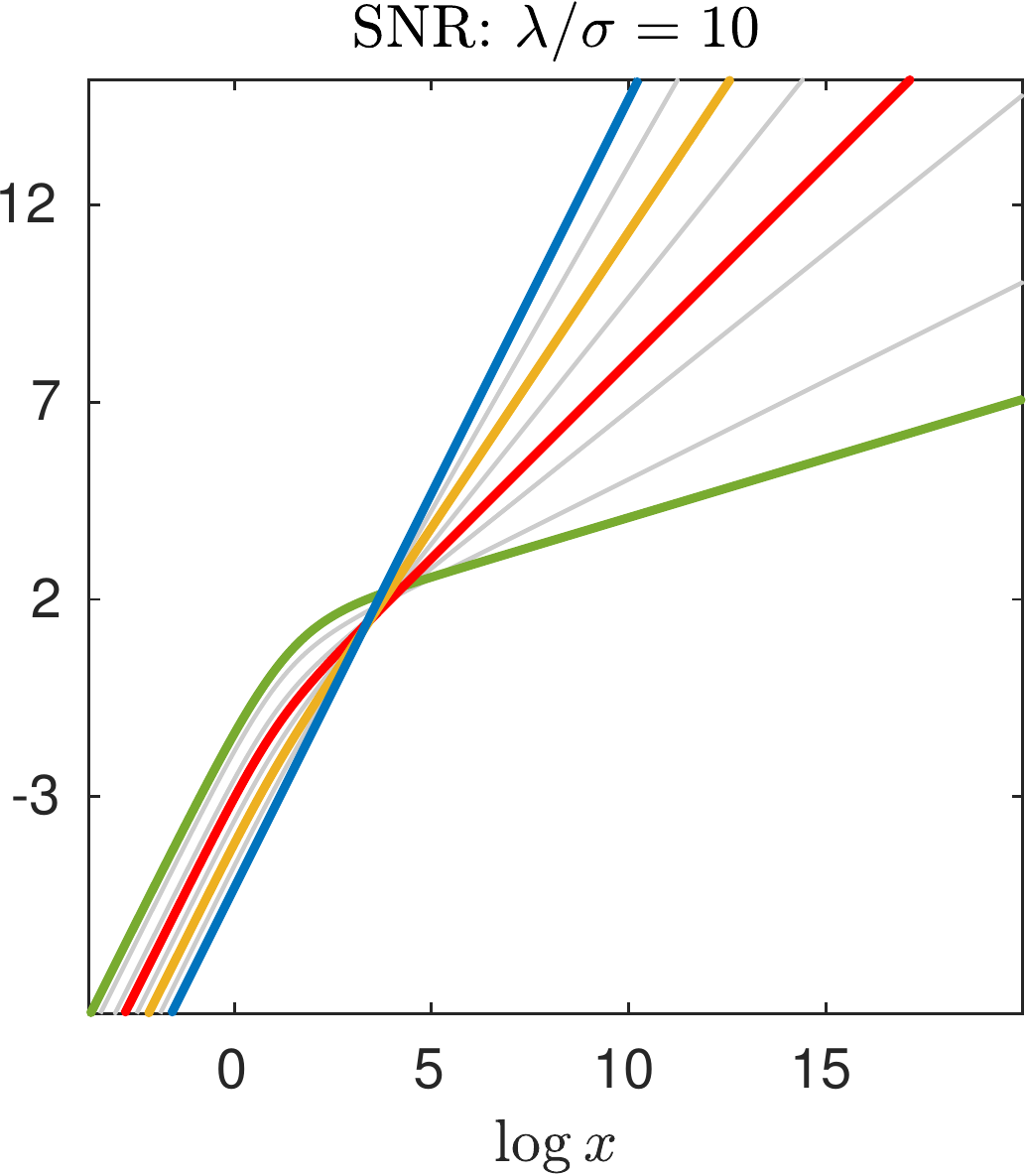}}\hfill%
  {\includegraphics[height=.39\linewidth,viewport=25 0 298 341,clip]{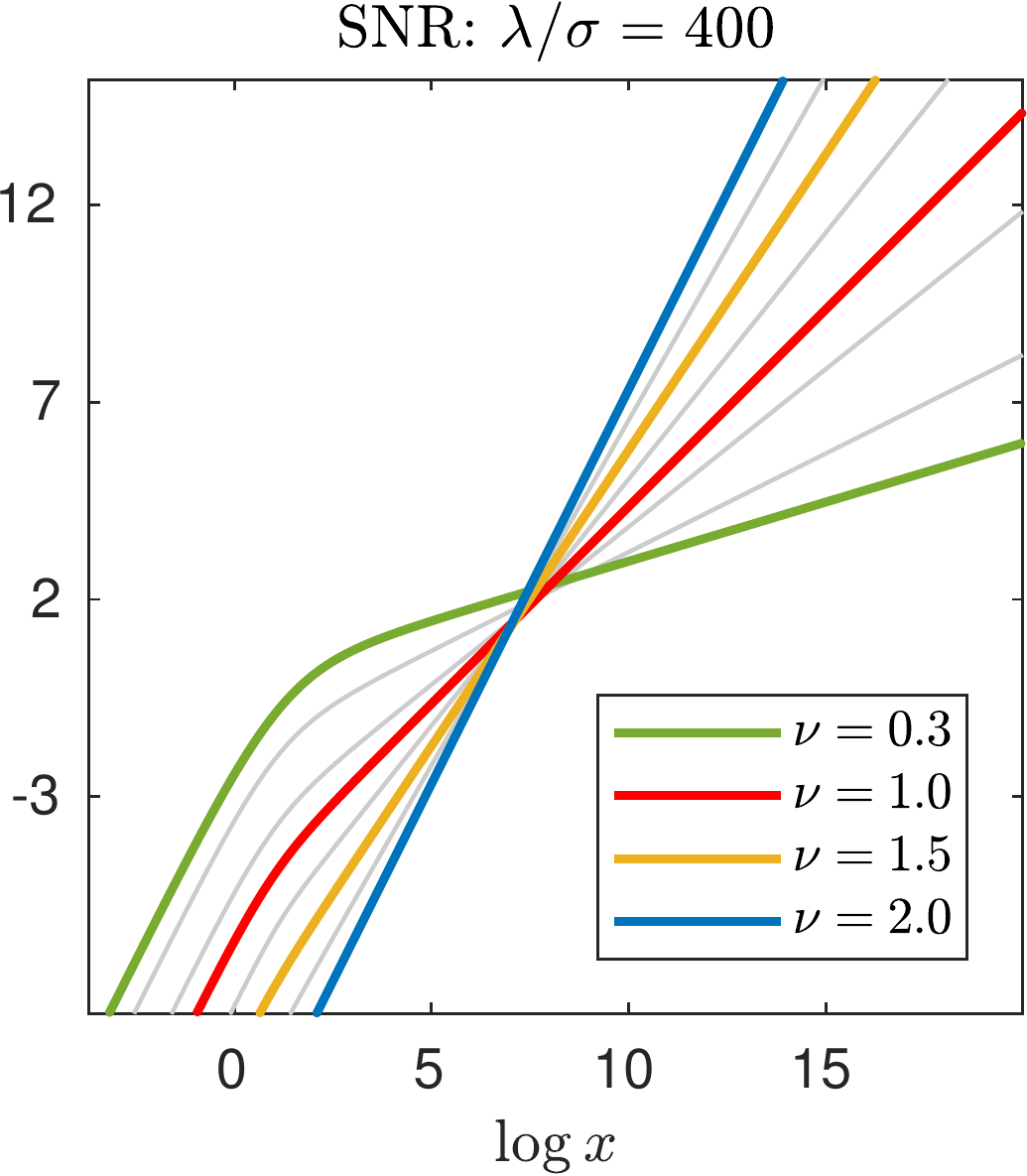}}%
  \caption{Illustrations of the log-discrepancy function
    for various $0.3 \leq \nu \leq 2$ and SNR $\lambda/\sigma$.
  }
  \label{fig:disc_snr}
  \vskip1em
\end{figure}

\Cref{fig:disc_snr} shows the evolution of
the log-discrepancy function for various values of $\nu$
in the context of three different signal-to-noise ratios $\lambda/\sigma$ (SNR).
One can observe that as the SNR decreases (resp., increases), the left (resp., right)
asymptotic behavior starts dominating over the right (resp., left) asymptotes.
In other words, for $\nu < 2$, the intersection of the two asymptotes goes to $+\infty$
(resp., $-\infty$). Last but not least, for $0 < \nu \leq 2$,
the log-discrepancy function $\varphi_\lambda^\nu$ is always concave and since
$\alpha_2 \leq \alpha_1$ it is thus upper-bounded by its left and right asymptotes.

\bigskip

From \Cref{thm:assignment_general_left},
\Cref{thm:assignment_general_right} and \Cref{rem:conjecture},
we can now build two asymptotic
log-linear approximations for $\varphi_\lambda^\nu$, with $0 < \nu \leq 2$,
and subsequently an asymptotic power approximation
for  $f_{\sigma,\lambda}^\nu$ by using 
the relation \eqref{eq:f_vs_varphi}.
Next, we explain the approximation process of the in-between behavior,
as well as its efficient evaluation.

\bigskip

\subsection{Numerical approximation}

We now describe the proposed approximation of the discrepancy function
$f_{1,\lambda}^\nu$ through 
an approximation
$\hat \varphi_\lambda^\nu$ of the log-discrepancy
function as
\begin{align}
  & \hat f_{1,\lambda}^\nu(x)
  =
  \gamma_\lambda^\nu
  + \exp \hat \varphi_\lambda^\nu(x)
  \qwhereq
  \gamma_\lambda^\nu
  = f_{1,\lambda}^\nu(0)~.
\end{align}
Based on our previous theoretical analysis,
a solution preserving the asymptotic, increasing and concave behaviors
of $\varphi_{1,\lambda}^\nu$ can be defined
by making use of the following approximations
\begin{align}
  \hat \varphi_\lambda^\nu(x)
  =
  \alpha_1 \log |x| + \beta_1 -
  \texttt{rec}(\alpha_1 \log |x| + \beta_1 - \alpha_2 \log |x| - \beta_2)~,
\end{align}
where $\texttt{rec}$ is a so-called rectifier function
that is positive, increasing, convex and satisfies 
\begin{align}
  \ulim{x \to -\infty} \texttt{rec}(x) = 0~ \qandq
  \texttt{rec}(x) \usim{x \to \infty} x~.
\end{align}
In this paper, we consider the two following rectifying functions
\begin{align}
  &
  \texttt{relu}(x)
  =
  \max(0, x)
  \qandq
  \texttt{softplus}(x)
  =
  h\log\left[1 + \exp\left(\frac{x}{h}\right)\right],~
  h > 0~,
\end{align}
as coined respectively in \cite{nair2010rectified}
and \cite{dugas2001incorporating}.
Using the function \texttt{relu}
(Rectified linear unit) leads to an approximation
$\hat\varphi_\lambda^\nu$ that is exactly equal to the asymptotes
of $\varphi_\lambda^\nu$ with a singularity at their crossing point.
In this paper, we will instead use the function \texttt{softplus}
as it allows the approximation of $\varphi_\lambda^\nu$ to converge smoothly to the asymptotes
without singularity. Its behavior is controlled by the parameter $h > 0$.
The smaller the value of $h$ is, the faster the convergence speed to the asymptotes.

\begin{figure}
  \centering
  \includegraphics[height=.5\linewidth]{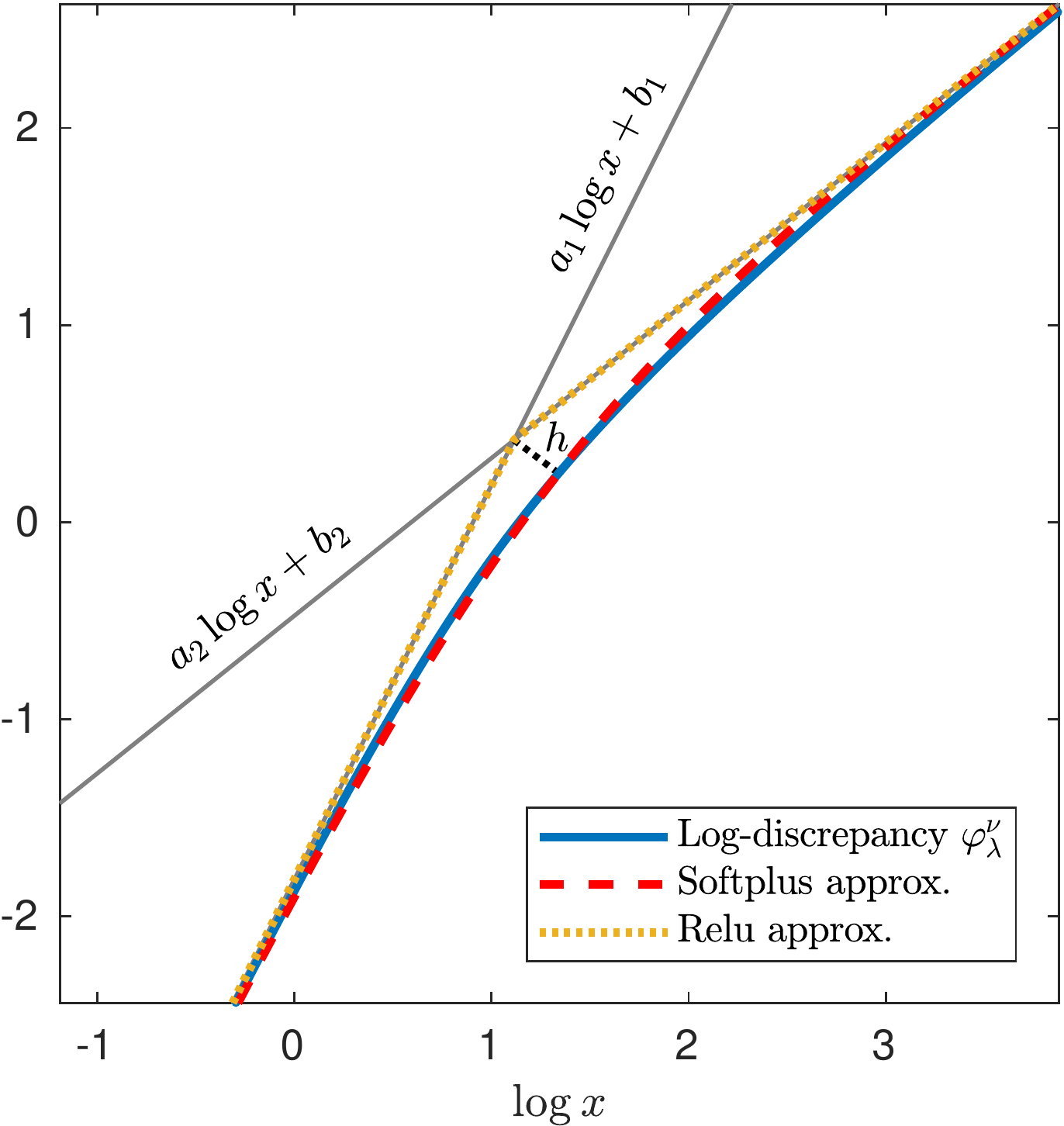}\hfill
  \includegraphics[height=.5\linewidth]{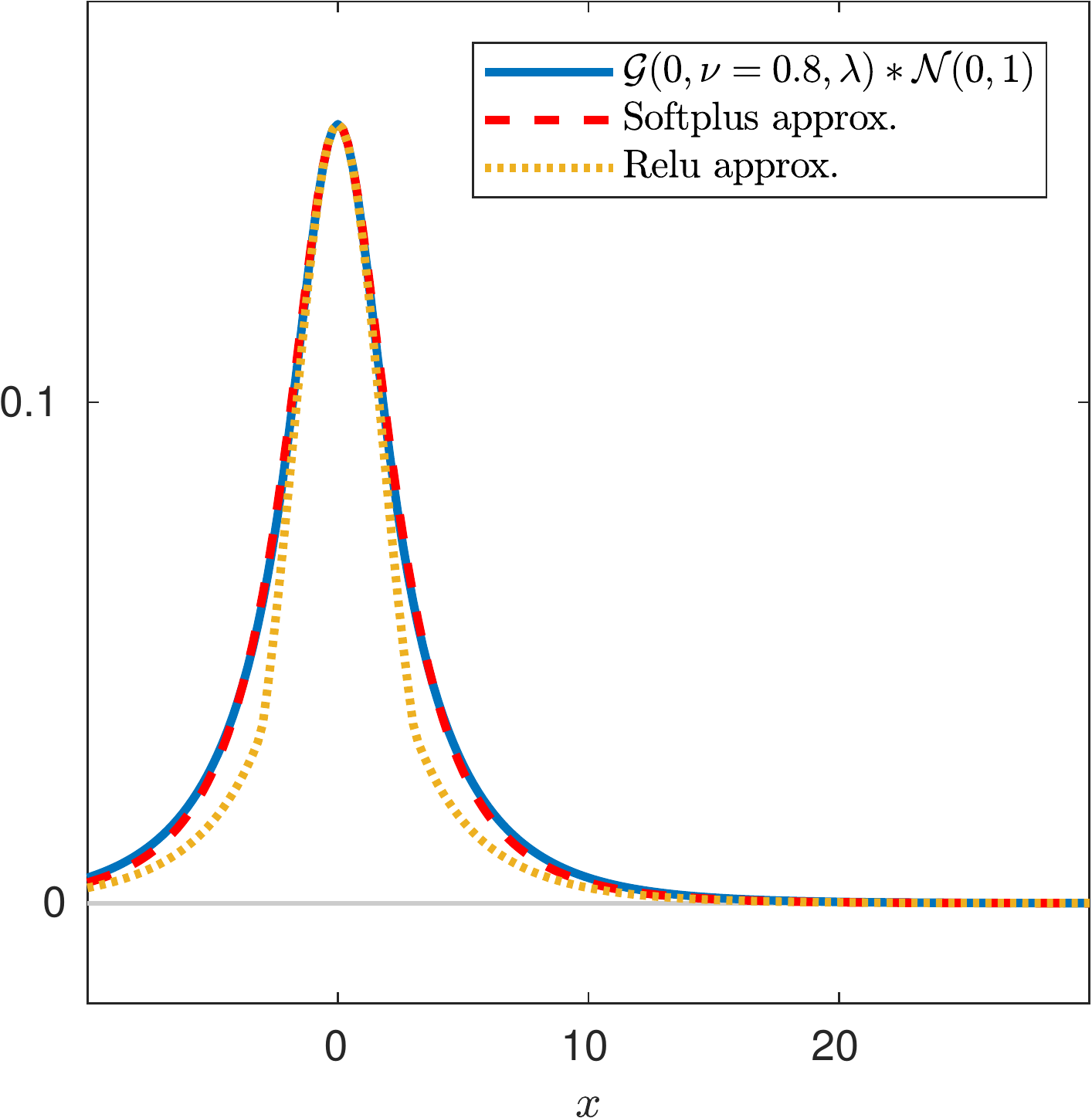}%
  \caption{Illustrations of our approximations of $\varphi_\lambda^\nu$
    and the corresponding underlying posterior distribution
    $\Nn(0, \nu, \lambda) * \Gg(0, 1)$
    (where $\nu=.8$ and $\lambda=4$).
    The blue curves have been obtained by evaluating the convolution
    using numerical integration techniques for all $x$. The dashed
    curves are obtained using the proposed
    \texttt{relu}- and \texttt{softplus}-based approximations
    that have closed-form expressions.
  }
  \label{fig:relu_vs_sp}
  \vskip1em
\end{figure}

\begin{figure}
  {\includegraphics[height=.399\linewidth,viewport=0 42 598 507,clip]{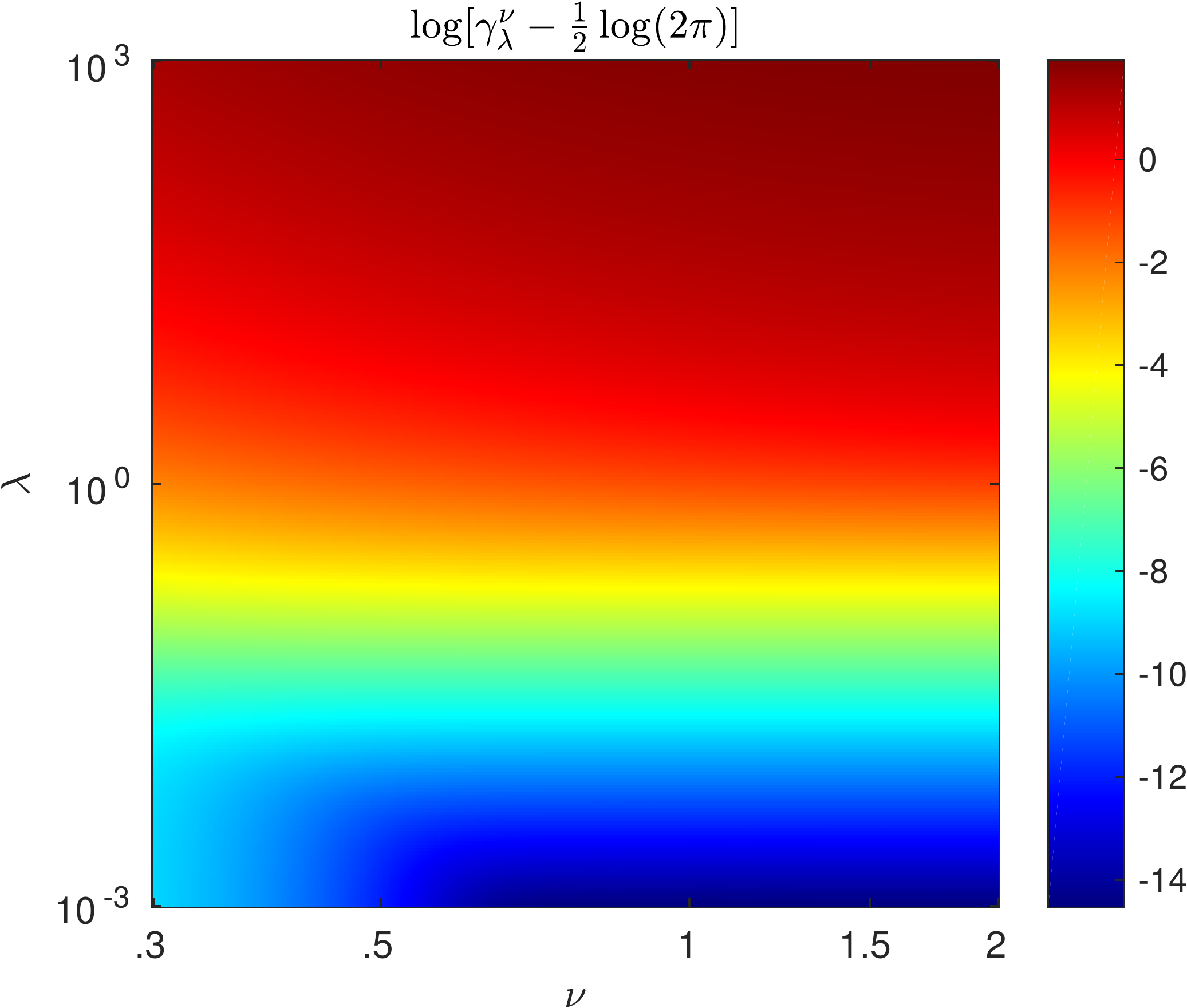}}\hfill%
  {\includegraphics[height=.399\linewidth,viewport=68 42 598 507,clip]{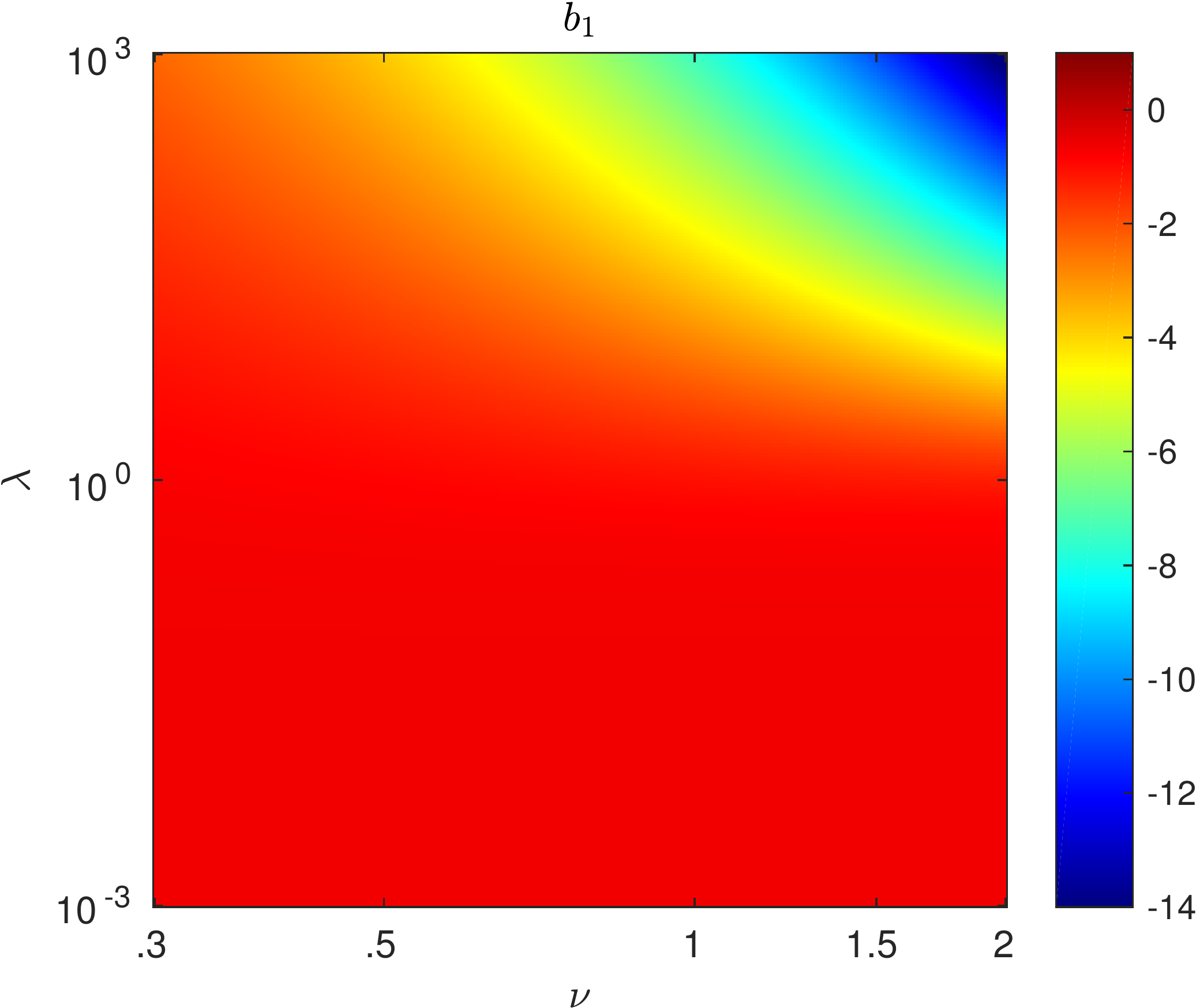}}\\[1em]
  {\includegraphics[height=.43\linewidth]{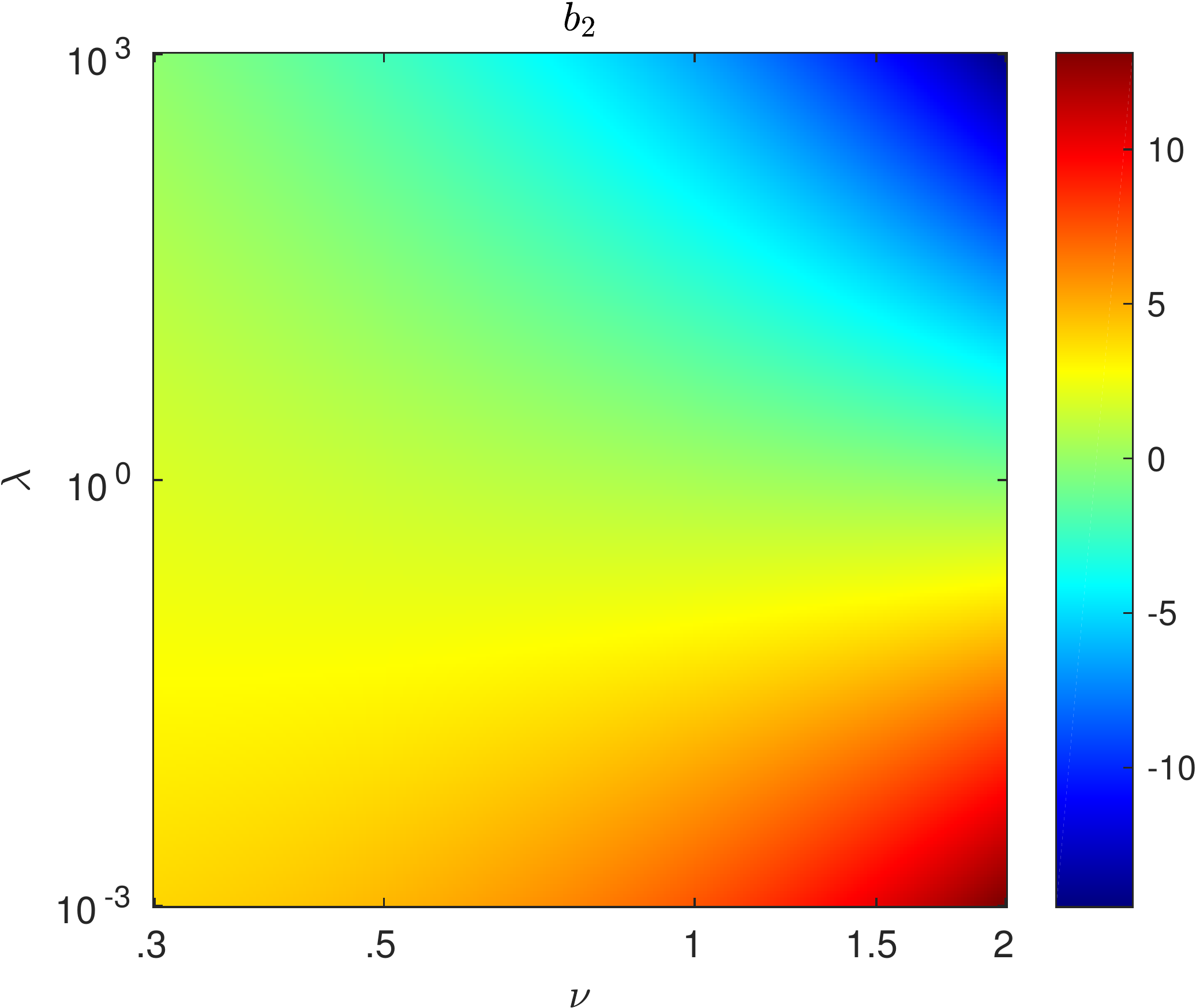}}\hfill%
  {\includegraphics[height=.43\linewidth,viewport=68 0 601 507,clip]{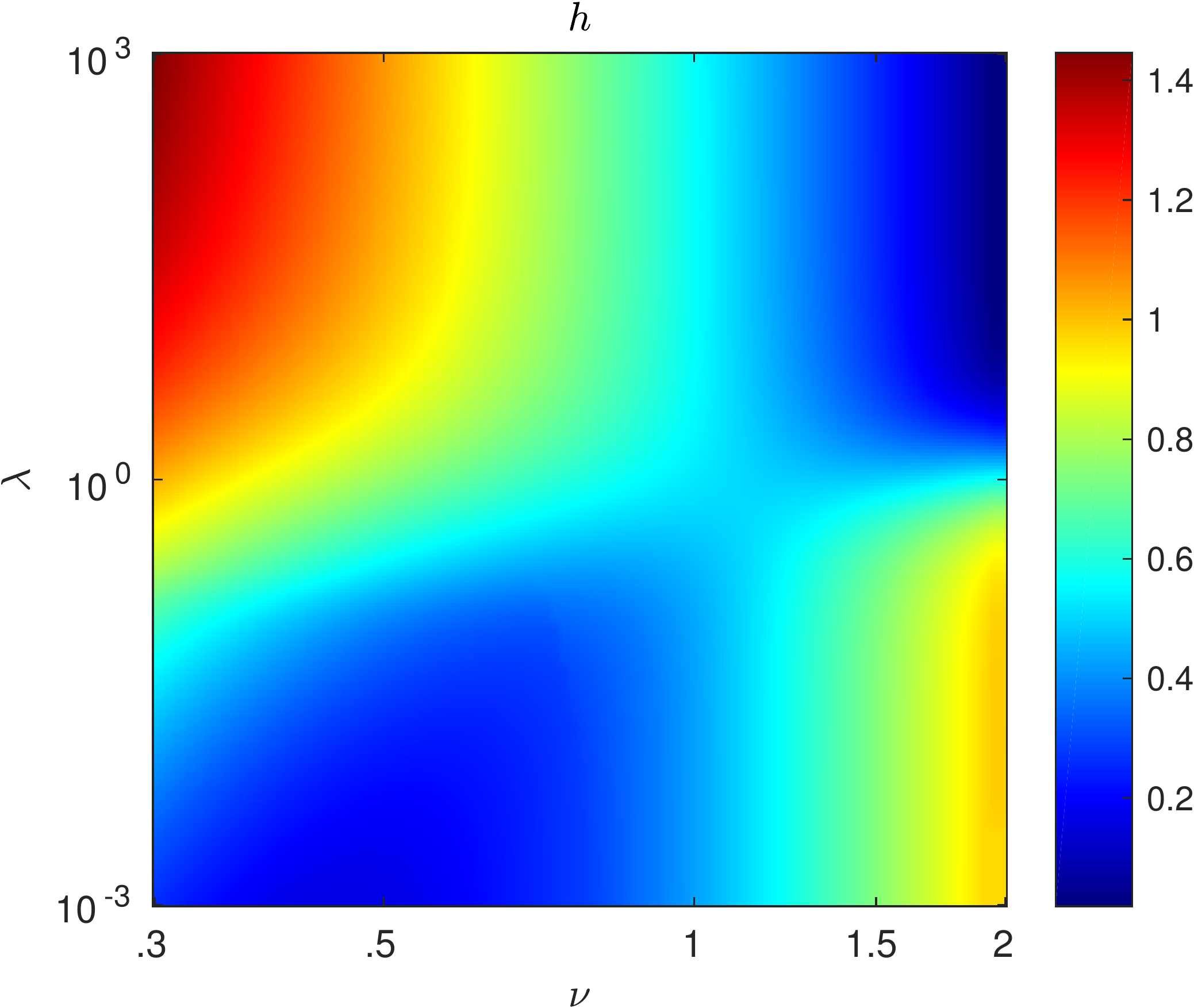}}%
  \caption{Lookup tables used to store the values of the parameters
    $\gamma_\lambda^\nu$, $\beta_1$, $\beta_2$ and $h$
    for various $.3 \leq \nu \leq 2$ and $10^{-3} \leq \lambda \leq 10^3$.
    A regular grid of $100$ values has been used for $\nu$ and
    a logarithmic grid of $100$ values has been used for $\lambda$.
    This leads to a total of $10, 000$ combinations for each of the
    four lookup tables.
  }
  \label{fig:lookuptables}
  \vskip1em
\end{figure}

The parameter $h$ should be chosen such that the approximation error
between $\hat \varphi_\lambda^\nu(x)$ and $\varphi_\lambda^\nu(x)$ is as small as possible.
This can be done numerically by first evaluating
$\varphi_\lambda^\nu(x)$ with integration techniques
for a large range of values $x$,
and then selecting the parameter $h$ by least square.
Of course, the optimal value for $h$ depends on the parameter $\lambda$ and $\nu$.

\Cref{fig:relu_vs_sp} gives an illustration of our approximations
of the log-discrepancy and the corresponding distribution
obtained with \texttt{relu} and \texttt{softplus}.
On this figure the underlying functions have been obtained by numerical
integration for a large range of value of $x$.
One can observe that using \texttt{softplus} provides a better approximation
than \texttt{relu}.
\\

Our approximation for $\hat f_{1,\lambda}^\nu(x)$
is parameterized by six scalar values: $\gamma_\nu^\lambda$, $\alpha_1$, $\beta_1$,
$\alpha_2$, $\beta_2$ and $h$ that depend only
on the original parameters $\lambda$ and $\nu$.
From our previous analysis, we have that $\alpha_1=2$ and $\alpha_2=\nu$.
The other parameters are non-linear functions of $\lambda$ and $\nu$.
The parameters $\gamma_\nu^\lambda$, $\beta_1$
and $\beta_2$ require either performing numerical integration or
evaluating the special function $\Gamma$. As discussed, the parameter $h$
requires numerical integration for various $x$ and then optimization.
For these reasons, these values cannot be computed during runtime.
Instead, we pre-compute these four parameters offline
for $10,000$ different combinations of $\lambda$ and $\nu$
values in the intervals $[10^{-3},10^3]$ and $[0.3, 2]$, respectively
(the choice for this range was motivated in \Cref{sec:learning_ggmm}).
The resulting values are then stored in four corresponding lookup tables.
During runtime, these parameters are retrieved online
by bi-linear extrapolation and interpolation.
The four lookup tables are given in \Cref{fig:lookuptables}.
We will see in \Cref{sec:exp_eval} that using the approximation
$\hat f_{1,\lambda}^\nu$ results in substantial
acceleration
without significant loss of performance as compared to
computing $f_{1,\lambda}^\nu$ directly by numerical integration during runtime.

\begin{figure}
  {\includegraphics[height=.37\linewidth]{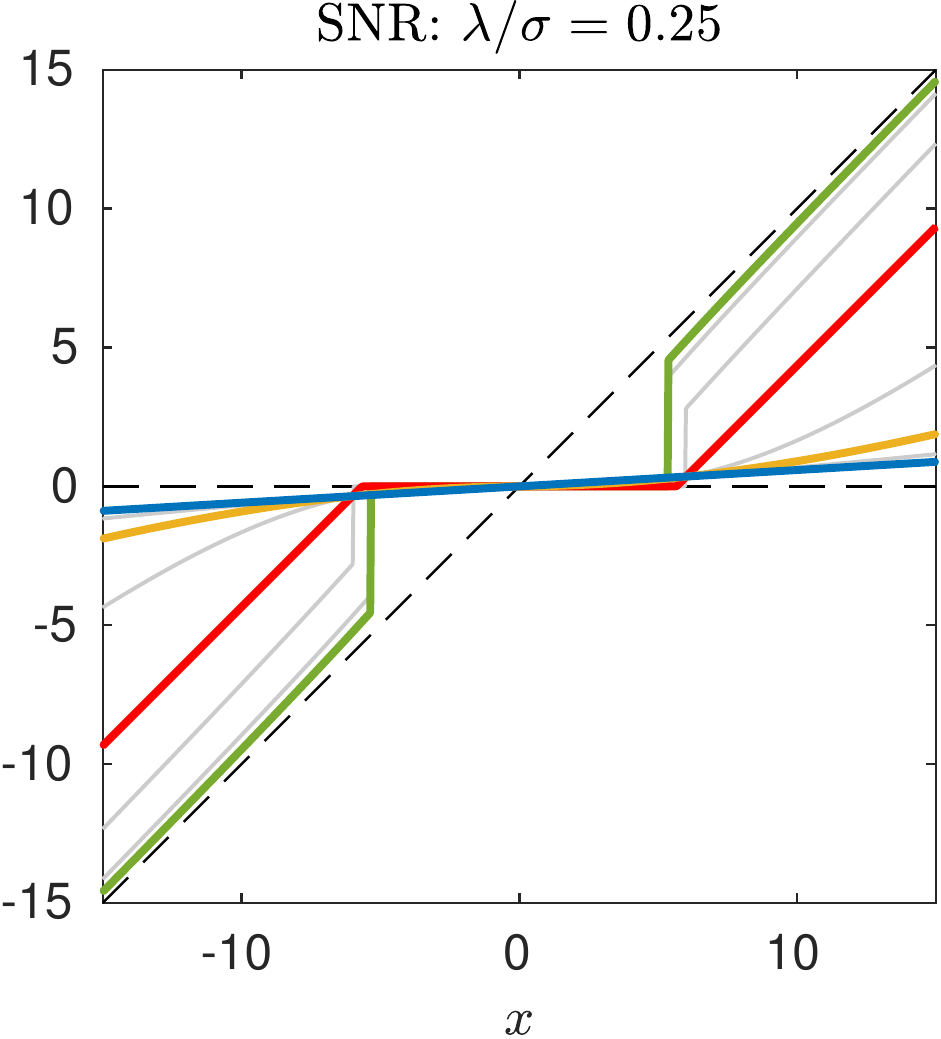}}\hfill%
  {\includegraphics[height=.37\linewidth,viewport=25 0 271 298,clip]{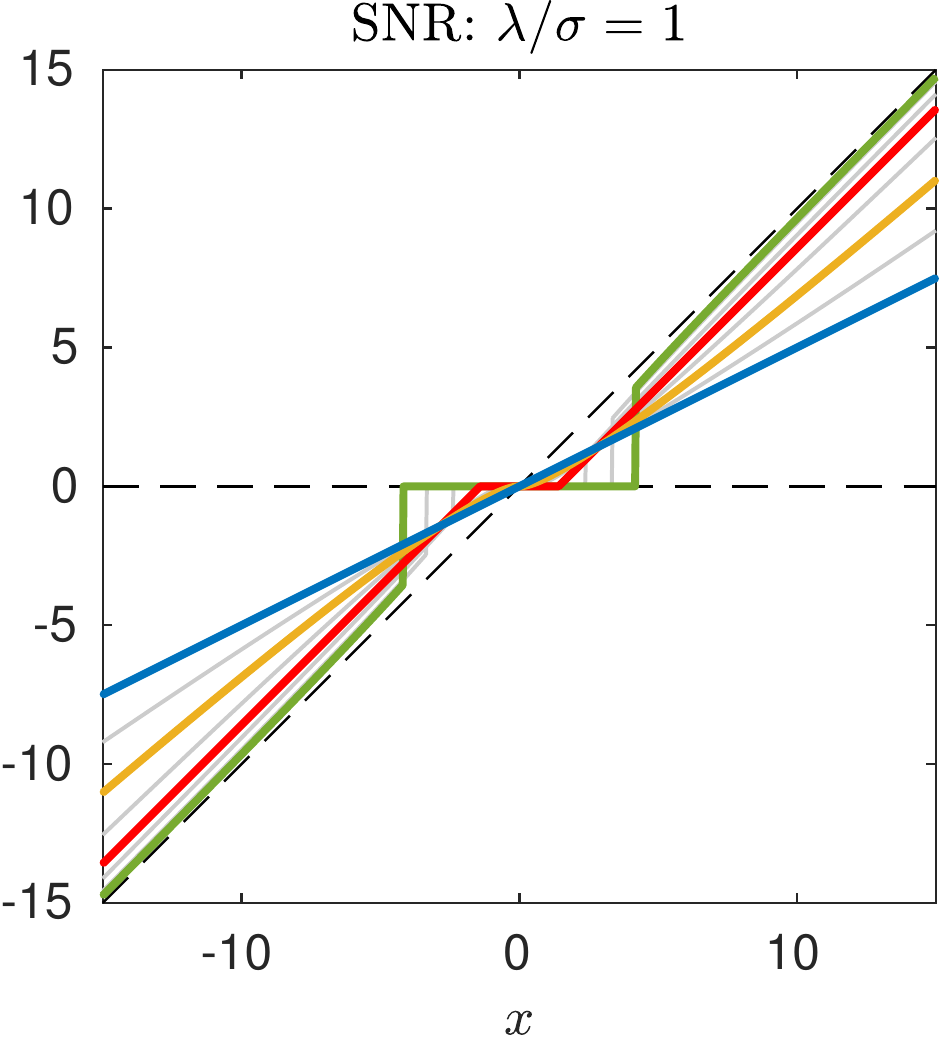}}\hfill%
  {\includegraphics[height=.37\linewidth,viewport=25 0 271 297,clip]{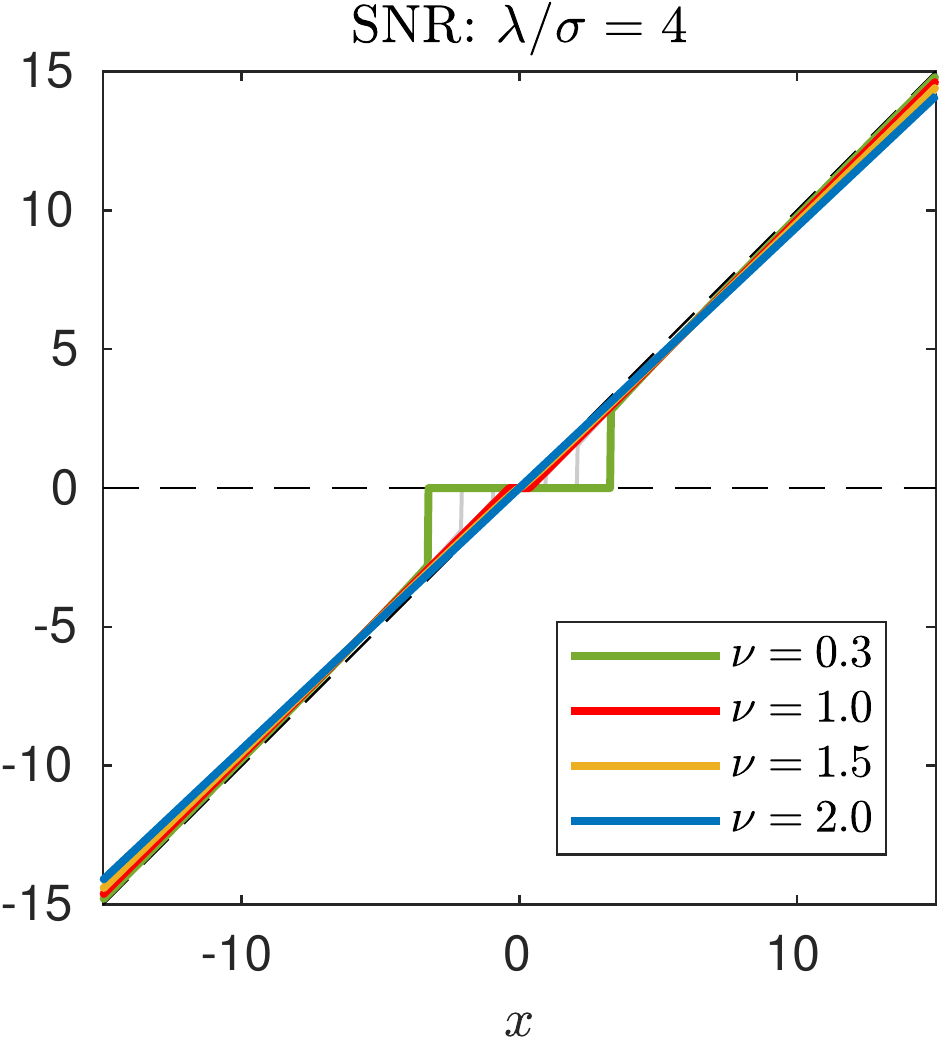}}%
  \caption{Illustrations of the shrinkage function
    for various $0.3 < \nu \leq 2$ and SNR $\lambda/\sigma$.
  }
  \label{fig:shrinkage}
  \vskip1em
\end{figure}

\section{Shrinkage functions: analysis and approximations}
\label{sec:shrinkage}

Recall that from its definition given in eq.~\eqref{eq:shrinkage_func},
the shrinkage function is defined for $\nu > 0$, $\sigma >0$
and $\lambda >0$, as
\begin{align}\label{eq:shrinkage_func_bis}
  s_{\sigma,\lambda}^\nu(x)
  &\in
  \uargmin{t\in \RR}
  \frac{(x - t)^2}{2 \sigma^2}
  +
  \lambda_\nu^{-\nu}|t|^\nu~.
\end{align}

\subsection{Theoretical analysis}

Except for some particular values of $\nu$ (see, \Cref{sec:shrinn_num}),
Problem \eqref{eq:shrinkage_func_bis} does not have explicit solutions.
Nevertheless, as shown in \cite{moulin1999analysis}, Problem \eqref{eq:shrinkage_func_bis}
admits two (not necessarily distinct) solutions. One of them is implicitly characterized as
\begin{align}\label{eq:map}
  & s_{\sigma,\lambda}^\nu(x) =
  \choice{
    0 & \ifq 0 < \nu \leq 1 \qandq |x| \leq \tau_\lambda^\nu~,\\
    t^\star & \text{otherwise}~,
  }\\
  \qwhereq
  &
  t^\star = x - \sign(t^\star) \nu \sigma^2 \lambda_\nu^{-\nu} |t^\star|^{\nu - 1}~,
  \nonumber
  \\
  \qandq &
  \tau_\lambda^\nu =
  \choice{
    (2 - \nu) (2 - 2 \nu)^{-\frac{1-\nu}{2-\nu}}
    (\sigma^2 \lambda_\nu^{-\nu})^{\frac{1}{2-\nu}}
    & \ifq \nu < 1~,\\
    \sigma^2 \lambda^{-1}
    & \otherwise \;\text{($\nu = 1$)}~.
  }
  \nonumber
\end{align}
The other one is obtained by changing $|x| \leq \tau_\lambda^\nu$ to $|x| < \tau_\lambda^\nu$
in \eqref{eq:map}, and so they coincide for almost every $(x, \lambda, \sigma, \nu)$.
As discussed in \cite{moulin1999analysis}, for $\nu > 1$,
$s_{\sigma,\lambda}^\nu(x)$ is differentiable, and for $\nu \leq 1$,
the shrinkage exhibits a threshold $\tau_\lambda^\nu$ that produces
sparse solutions.
\Cref{prop:shrink_basic_prop} summarizes a few
important properties.

\begin{prop}\label{prop:shrink_basic_prop}
  Let $\nu >0$, $\sigma>0$, $\lambda >0$
  and $s_{\sigma,\lambda}^\nu$ as defined in eq.~\eqref{eq:shrinkage_func}.
  The following relations hold true
  \begin{align}
    \label{eq:shrink_reduction}
  \tag{reduction}
  s_{\sigma,\lambda}^\nu(x)
  &=
  \sigma
  s_{1,\frac{\lambda}{\sigma}}^\nu\left(\frac{x}{\sigma}\right)
  ~,
  \\[.8em]
    \label{eq:shrink_anti_symmetry}
  \tag{odd}
  s_{\sigma,\lambda}^\nu(x)
  &=
  -s_{\sigma,\lambda}^\nu(-x)
  ~,
  \\[.8em]
  \label{eq:shrinkage}
  \tag{shrinkage}
  s_{\sigma,\lambda}^\nu(x)
  &\in
  \choice{
    {[}0, x{]} & \ifq x \geq 0\\
    {[}x, 0{]} & \otherwise\\
  }
  ~,
  \\[.8em]
  \label{eq:shrink_increasing}
  \tag{increasing with $x$}
  x_1 \geq x_2 & \Leftrightarrow
  s_{\sigma,\lambda}^\nu(x_1)
  \geq
  s_{\sigma,\lambda}^\nu(x_2)
  ~,
  \\[.8em]
  \label{eq:shrink_increasing2}
  \tag{increasing with $\lambda$}
  \lambda_1 \geq \lambda_2 & \Leftrightarrow
  s_{\sigma,\lambda_1}^\nu(x)
  \geq
  s_{\sigma,\lambda_2}^\nu(x)
  ~,
  \\[.8em]
  \tag{kill low SNR}
  \label{eq:shrink_kill}
  \ulim{\frac\lambda\sigma \to 0}
  &s_{\sigma,\lambda}^\nu(x)
  =
  0
  ~,
  \\
  \tag{keep high SNR}
  \label{eq:shrink_keep}
  \ulim{\frac\lambda\sigma \to +\infty}
  &s_{\sigma,\lambda}^\nu(x)
  =
  x
  ~.
\end{align}
\end{prop}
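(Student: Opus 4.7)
}
Write $g_{x,\sigma,\lambda,\nu}(t)=\tfrac{(x-t)^2}{2\sigma^2}+\tfrac{|t|^\nu}{\lambda_\nu^\nu}$ for the objective in \eqref{eq:shrinkage_func_bis}. The plan is to derive all seven statements from elementary variational arguments applied to $g_{x,\sigma,\lambda,\nu}$, without appealing to the (not everywhere valid) first-order characterization \eqref{eq:map}; this uniformizes the treatment across all $\nu>0$. Because $\lambda_\nu$ differs from $\lambda$ only by the $\sigma$-free factor $\sqrt{\Gamma(1/\nu)/\Gamma(3/\nu)}$, we have $(\lambda/\sigma)_\nu=\lambda_\nu/\sigma$; substituting $t=\sigma u$ in $g_{x,\sigma,\lambda,\nu}$ yields $g_{x/\sigma,1,\lambda/\sigma,\nu}(u)$, so the minimizers correspond via $t^\star=\sigma u^\star$, proving \eqref{eq:shrink_reduction}. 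Oddness \eqref{eq:shrink_anti_symmetry} follows from the identity $g_{-x,\sigma,\lambda,\nu}(-t)=g_{x,\sigma,\lambda,\nu}(t)$, which sends minimizers to their negatives.

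For the bound \eqref{eq:shrinkage}, assume $x\ge 0$ (the other case follows from \eqref{eq:shrink_anti_symmetry}) and let $t^\star$ be any minimizer. If $t^\star<0$, then $(x-t^\star)^2>x^2$ and $|t^\star|^\nu>0$, so $g(t^\star)>g(0)$, contradicting optimality; if $t^\star>x$, then $(x-t^\star)^2>0$ and $|t^\star|^\nu>|x|^\nu$, so $g(t^\star)>g(x)$, again a contradiction. The two monotonicity statements follow from the classical comparative-statics trick of adding the two optimality inequalities for distinct parameter values. For \eqref{eq:shrink_increasing}, given $x_1\ge x_2$ with minimizers $t_1^\star,t_2^\star$, adding $g_{x_1}(t_1^\star)\le g_{x_1}(t_2^\star)$ and $g_{x_2}(t_2^\star)\le g_{x_2}(t_1^\star)$ cancels the $|t|^\nu$ terms and, after expanding the squares, leaves $2(x_1-x_2)(t_1^\star-t_2^\star)\ge 0$. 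For \eqref{eq:shrink_increasing2}, the same trick cancels the quadratic terms and yields $(|t_1^\star|^\nu-|t_2^\star|^\nu)(\lambda_{1,\nu}^{-\nu}-\lambda_{2,\nu}^{-\nu})\le 0$.

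The two SNR limits use the reduction \eqref{eq:shrink_reduction} to set $\sigma=1$ and send $\rho:=\lambda/\sigma$ to $0$ or $\infty$. For \eqref{eq:shrink_kill}, $g_\rho(t^\star)\le g_\rho(0)=\tfrac{x^2}{2}$ forces $|t^\star|^\nu\le \tfrac{x^2}{2}\rho_\nu^\nu\to 0$; for \eqref{eq:shrink_keep}, $g_\rho(t^\star)\le g_\rho(x)=|x|^\nu/\rho_\nu^\nu$ forces $(x-t^\star)^2/2\le |x|^\nu/\rho_\nu^\nu\to 0$. The only mildly delicate step is the $\lambda$-monotonicity \eqref{eq:shrink_increasing2}: the additive-inequalities argument gives $|t_1^\star|\ge|t_2^\star|$ rather than the signed statement; this is where the previously established shrinkage bound \eqref{eq:shrinkage} does the work, ensuring $t_1^\star$ and $t_2^\star$ share the sign of $x$, so $|t_1^\star|\ge|t_2^\star|$ upgrades to $t_1^\star\ge t_2^\star$ after a case split on the sign of $x$ (with \eqref{eq:shrink_anti_symmetry} handling $x<0$). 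Throughout, because $g$ may have two minimizers on a measure-zero set of $(x,\sigma,\lambda,\nu)$ when $\nu\le 1$, every relation is understood as holding for any selector $s_{\sigma,\lambda}^\nu$, which is consistent with the ``$\in$'' notation used in the statement.
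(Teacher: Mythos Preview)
Your proposal is correct and largely parallels the paper's own proof: the reduction and oddness via change of variables, the shrinkage bound via comparison with $t=0$ and $t=x$, and the two monotonicity statements via the ``sum the two optimality inequalities'' trick are all the same core moves the paper uses. Two differences are worth noting. First, for the SNR limits \eqref{eq:shrink_kill} and \eqref{eq:shrink_keep} the paper takes a longer route: it invokes the just-proved monotonicity in $\lambda$ together with the bound $s\in[0,x]$ to deduce that the limit exists, then rules out any interior limit point by a contradiction argument comparing objective values at carefully chosen nearby points. Your direct bounds $|t^\star|^\nu\le\tfrac{x^2}{2}\rho_\nu^\nu$ and $(x-t^\star)^2\le 2|x|^\nu/\rho_\nu^\nu$ are shorter and avoid the detour through monotone convergence entirely. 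Second, in the $\lambda$-monotonicity step the paper's argument stops at $|t_1|^\nu\ge|t_2|^\nu$ without explicitly converting this to the signed inequality; you are right to flag that this upgrade requires the shrinkage bound \eqref{eq:shrinkage} to pin down the common sign, and your proof plan handles this more carefully than the paper does.
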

The proofs can be found in \Cref{proof:shrink_basic_prop}.
These properties show that $s_{\sigma,\lambda}^\nu$ is indeed
a shrinkage function (non-expansive). It shrinks the input coefficient $x$
according to the model $\nu$ and the
modeled signal to noise ratio $\frac\lambda\sigma$ (SNR).
When $x$ is small in comparison to the SNR, it is likely that
its noise component dominates the underlying signal, and
is, therefore, shrunk towards $0$. Similarly, when $x$ is large,
it will likely be preserved. This is even more likely when $\nu$ is small, since
in this case large coefficients are favored by the {\it prior}.
Illustrations of shrinkage functions for various SNR and $\nu$
are given in \Cref{fig:shrinkage}.

\begin{table}[t]
  \centering
  \caption{Shrinkage function under generalized Gaussian priors
  }
  \label{tab:shrinkage}
  \begin{tabular*}{\textwidth}{c@{\quad\quad}l@{\extracolsep{\fill}}r}
    \hline
    $\nu$ & Shrinkage $s_{\sigma,\lambda}^\nu(x)$ & Remark\\
    \hline
    \\[-.5em]
    $< 1$
    &
    $\displaystyle \choice{\displaystyle
      x - \gamma x^{\nu - 1} + O(x^{2(\nu - 1)})
      &
      \displaystyle
      \ifq |x| \geq \tau_\lambda^\nu\\
      0 & \otherwise\\
      }$
    &
    $\begin{array}{r}
      \approx \text{Hard-thresholding}\\
              \text{\cite{moulin1999analysis}}\\
    \end{array}$\hspace*{-.5em}\\[2em]
    $1$ &
    $\displaystyle \sign(x) \max\left( |x| - \frac{\sqrt{2} \sigma^2}{\lambda}, 0\right)$
    &
    $\begin{array}{r}
      \text{Soft-thresholding}\\
      \text{\cite{donoho1994ideal}}\\
    \end{array}$\hspace*{-.5em}\\[2em]
    $4/3$ &
    $\displaystyle x + \gamma \left(\sqrt[3]{\frac{\zeta - x}{2}} - \sqrt[3]{\frac{\zeta + x}2} \right)
    $
    &
    \cite{chaux2007variational}
    \\[2em]
    $3/2$ &
    $\displaystyle \sign(x)
    \frac{\left(\sqrt{\gamma^2 + 4 |x|} - \gamma\right)^2}{4}$
    &
    \cite{chaux2007variational}
    \\[2em]
    $2$ &
    $\displaystyle \frac{\lambda^2}{\lambda^2 + \sigma^2} \cdot
    x$
    &
    Wiener (LMMSE)
    \\[1em]
    \hline
  \end{tabular*}
  \begin{align*}
    \text{with} \quad
    \gamma = \nu \sigma^2 \lambda_\nu^{-\nu}
    \qandq
    \zeta = \sqrt{x^2 + 4 \left(\frac{\gamma}{3}\right)^3}~.
  \end{align*}
  \hrule
  \vskip1em
\end{table}

\subsection{Numerical approximations}
\label{sec:shrinn_num}

The shrinkage function $s_{\sigma,\lambda}^\nu$,
implicitly defined in \eqref{eq:map} does not have
a closed form expression in general. Nevertheless,
for fixed values of $x$, $\sigma$, $\lambda$, $\nu$,
$s_{\sigma,\lambda}^\nu(x)$ can be estimated using iterative solvers
such as Newton descent or Halley’s root-finding method.
These approaches converge quite fast and, in practice, reach a satisfying
solution within ten iterations.
However, since in our application of interest we need to evaluate this function
a large number times, we will follow a different path in order to reduce computation time 
(even though we have implemented this strategy).

As discussed earlier, $s_{\sigma,\lambda}^\nu$ is known in closed form
for some values of $\nu$, more precisely: $\nu = \{1, 4/3, 3/2, 2\}$ (as well as $\nu = 3$
but this is out of the scope of this study),
see for instance \cite{chaux2007variational}.
When $\nu = 2$, we retrieve the linear minimum mean square estimator
(known in signal processing as Wiener filtering) and related to
Tikhonov regularization and ridge regression.
This shrinkage is linear and the slope of the shrinkage goes from $0$
to $1$ as the SNR increases (see \Cref{fig:shrinkage}).
When $\nu = 1$, the shrinkage is the well-known soft-thresholding
\cite{donoho1994ideal},
corresponding to the maximum a posteriori estimator under a Laplacian prior.
When $\nu < 1$, the authors of \cite{moulin1999analysis} have shown that
(i) the shrinkage admits a threshold with a closed-form expression (given
in eq.~\eqref{eq:map}), and (ii) the shrinkage is approximately equal to
hard-thresholding with an error term that vanishes when $|x| \to \infty$.
All these expressions are summarized in \Cref{tab:shrinkage}.

In order to keep our algorithm as fast as possible, we propose to
use the approximation of the shrinkage given for $\nu < 1$ in \cite{moulin1999analysis}.
Otherwise, we pick one of the four shrinkage functions
corresponding to $\nu = \{1, 4/3, 3/2, 2\}$ by nearest neighbor
on the actual value of $\nu \in [1, 2]$. Though this approximation may seem
coarse compared to the one based on iterative solvers,
we did not observe any significant loss of quality
in our numerical experiments (see \Cref{sec:exp_eval}).
Nonetheless, this alternative leads
to 6 times speed-up while evaluating shrinkage.

\section{Experimental evaluation}
\label{sec:exp_eval}

In this section we explain the methodology used to evaluate the GGMM model,
and present numerical experiments to compare the performance of the proposed GGMM
model over existing GMM-based image denoising algorithms.
To demonstrate the advantage of allowing for
a flexible GGMM model, we also present results using GGMM models
with fixed shape parameters,
$\nu = 1$ (Laplacian mixture model) and $\nu = 0.5$ (Hyper-Laplacian mixture model).
For learning Laplacian mixture model (LMM) and
hyper-Laplacian mixture model (HLMM), we use the same procedure
as described in \Cref{sec:learning_ggmm}
but force all shape parameters to be equal to $1$ or $0.5$, respectively.

\begin{figure}
  \centering
  \subfigure[GMM ($\nu=2$)]{\includegraphics[width=.245\linewidth]{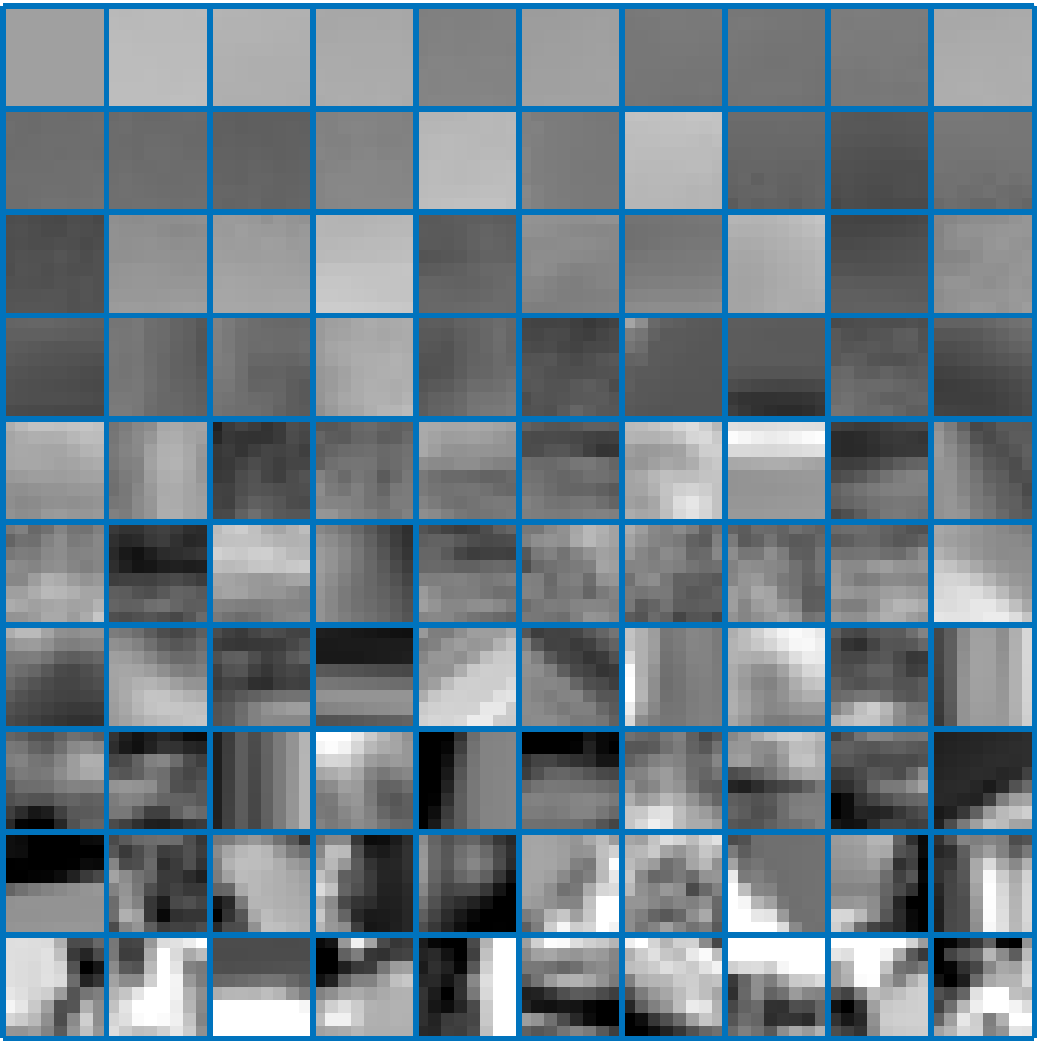}}\hfill
  \subfigure[GGMM ($.3 \leq \nu \leq 2$)]{\includegraphics[width=.245\linewidth]{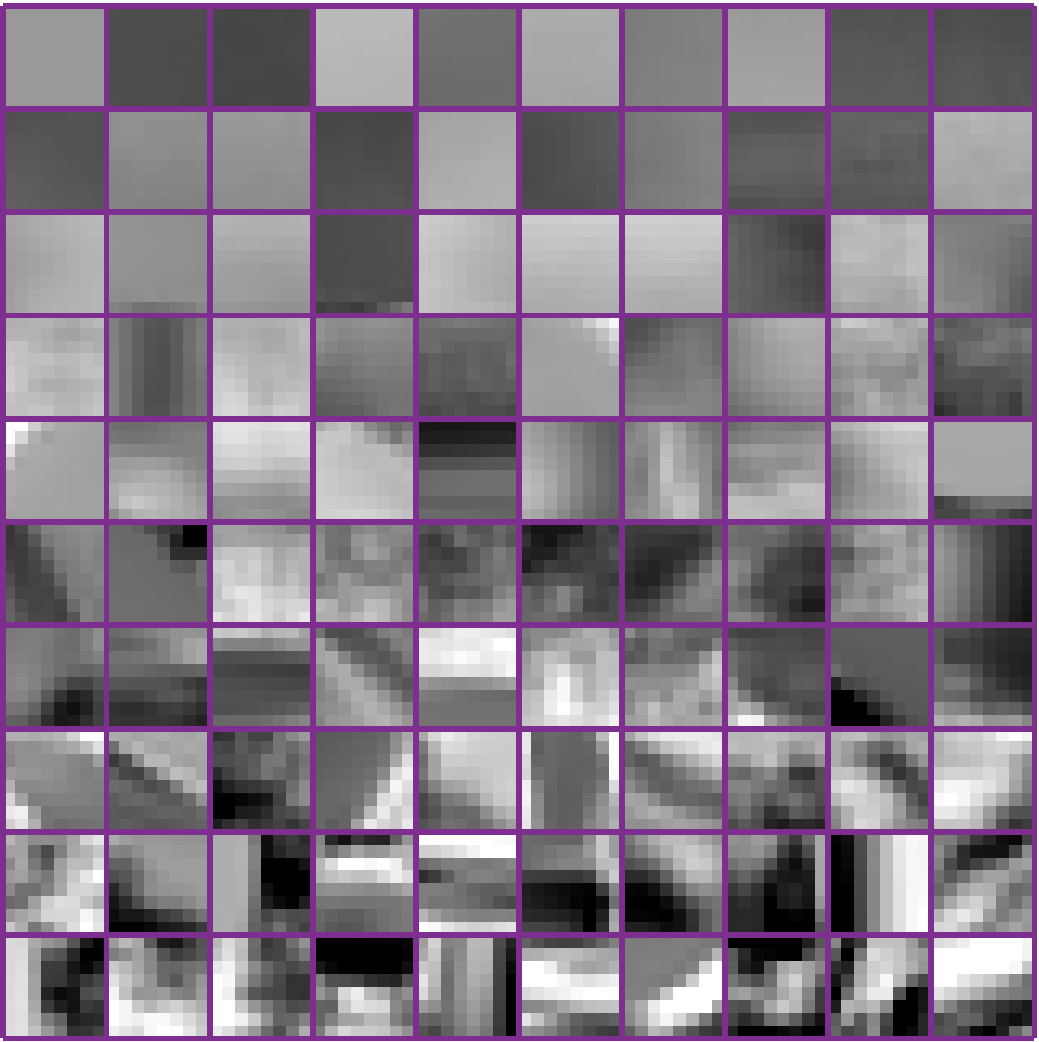}}\hfill
  \subfigure[LMM ($\nu=1$)]{\includegraphics[width=.245\linewidth]{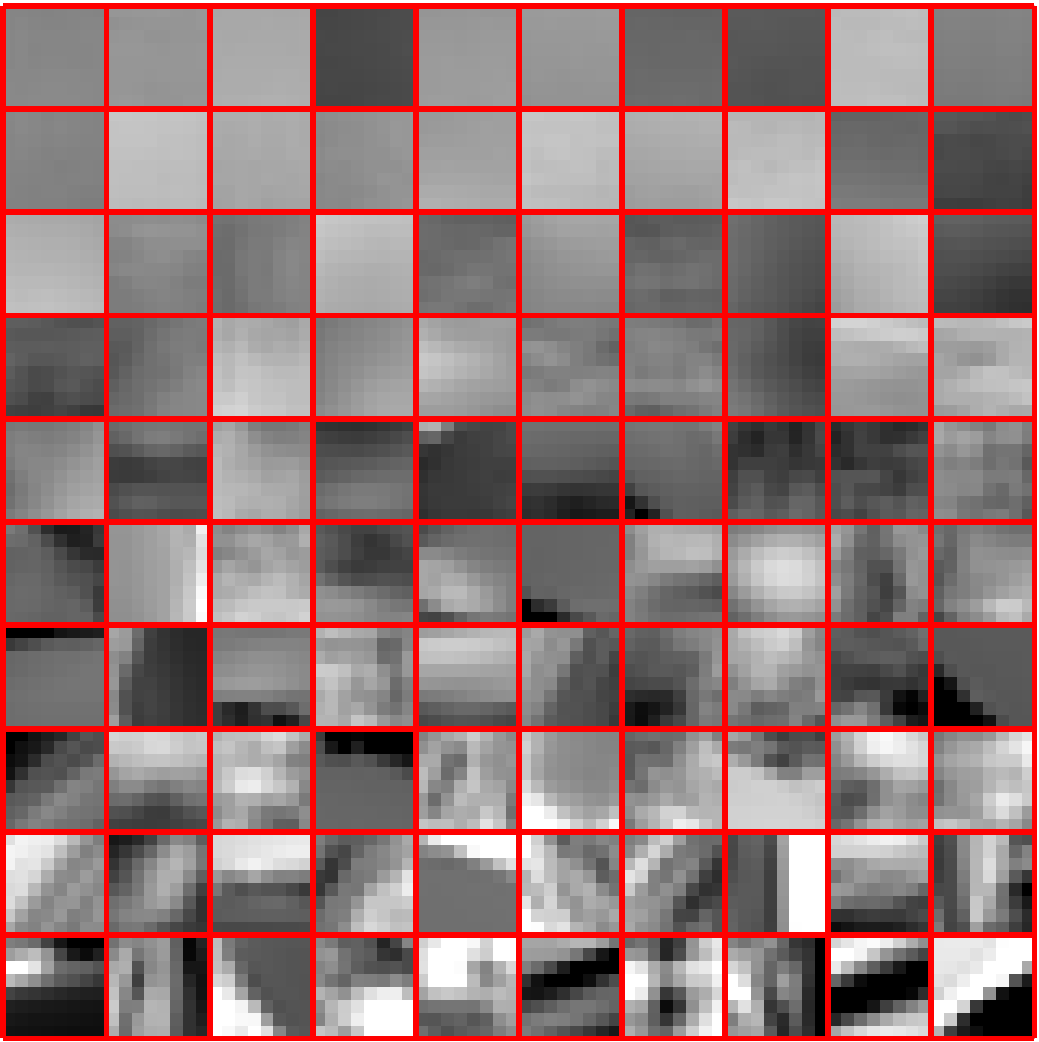}}\hfill
  \subfigure[HLMM ($\nu=.5$)]{\includegraphics[width=.245\linewidth]{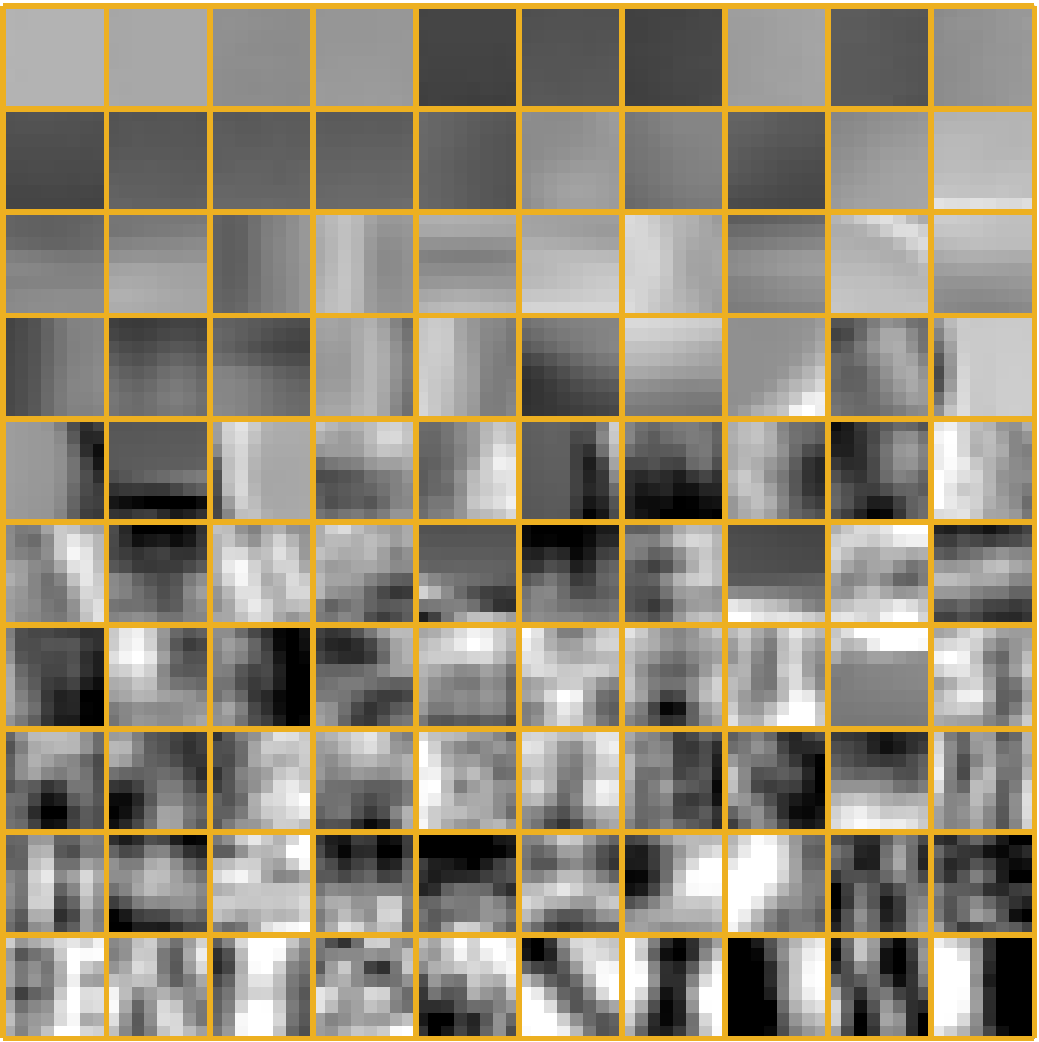}}\\
  \caption{Set of 100 patches, sorted by the norm of their gradient,
    and generated to be independently distributed according to
    (from left to right) a GMM, GGMM, LMM and HLMM.
    For ease of visualization, only the top eigendirections corresponding to $80\%$ of the
    variance have been chosen. Near-constant patches with variance smaller
    than $\tfrac2P \norm{\bSigma_k}_F^2$ have also been discarded.
  }
  \label{fig:random_patches}
  \vskip1em
\end{figure}

\paragraph{Model validation}
As discussed in \Cref{sec:ggmm},
\Cref{fig:ggmm_vs_gmm} illustrates the validity of our model choices with
histograms of different dimensions of a single patch cluster. It clearly
shows the importance of allowing the shape and scale parameter
to vary across dimensions for capturing underlying patch distributions.
Since GGMM (and obviously, GMM) falls into the class of \textit{generative} models,
one can also assess the \textit{expressivity} of a model by
analyzing the variability of generated patches 
and its
ability to generate relevant image features (edges, texture elements etc.).
This can be tested by selecting a component $k$ of
the GGMM (or GMM) with probability $w_k$ and sampling patches from
the GGD (or GD) as described in \cite{nardon2009simulation}.
\Cref{fig:random_patches} presents a collage of 100 patches
independently generated by this procedure using GMM, GGMM, LMM and HLMM.
As observed, patches generated from GGMM show greater balance
between strong/faint edges, constant patches and subtle textures than
the models that use constant shape parameters such as GGM, LMM and HLMM.


The superiority of our GGMM model over GMM, LMM or HLMM models can also be
illustrated by comparing the log-likelihood (LL) achieved by these
models over a set of clean patches from natural images.
Note that, to maintain objectivity, the models have to be tested
on data that is different than the dataset used during training.
To this end, we compute the LL of the four above-mentioned models on
all non-overlapping patches of $40$ randomly selected images
extracted from BSDS testing set \cite{martin2001database},
which is a different set than the training images used in the
EM algorithm (parameter estimation/model learning).
One can observe that not only GGMM is a better fit than GMM, LMM and HLMM
on average for the $40$ images,
but it is also a better fit on each single image.
Given that GGMM have a larger degree of freedom than GMM,
this study proves that our learning procedure did not
fall prey to over-fitting, and that the
extra flexibility provided by GGMM was used to capture
relevant and accurate image patterns.

\begin{figure}
  \centering
  \includegraphics[width=1\linewidth]{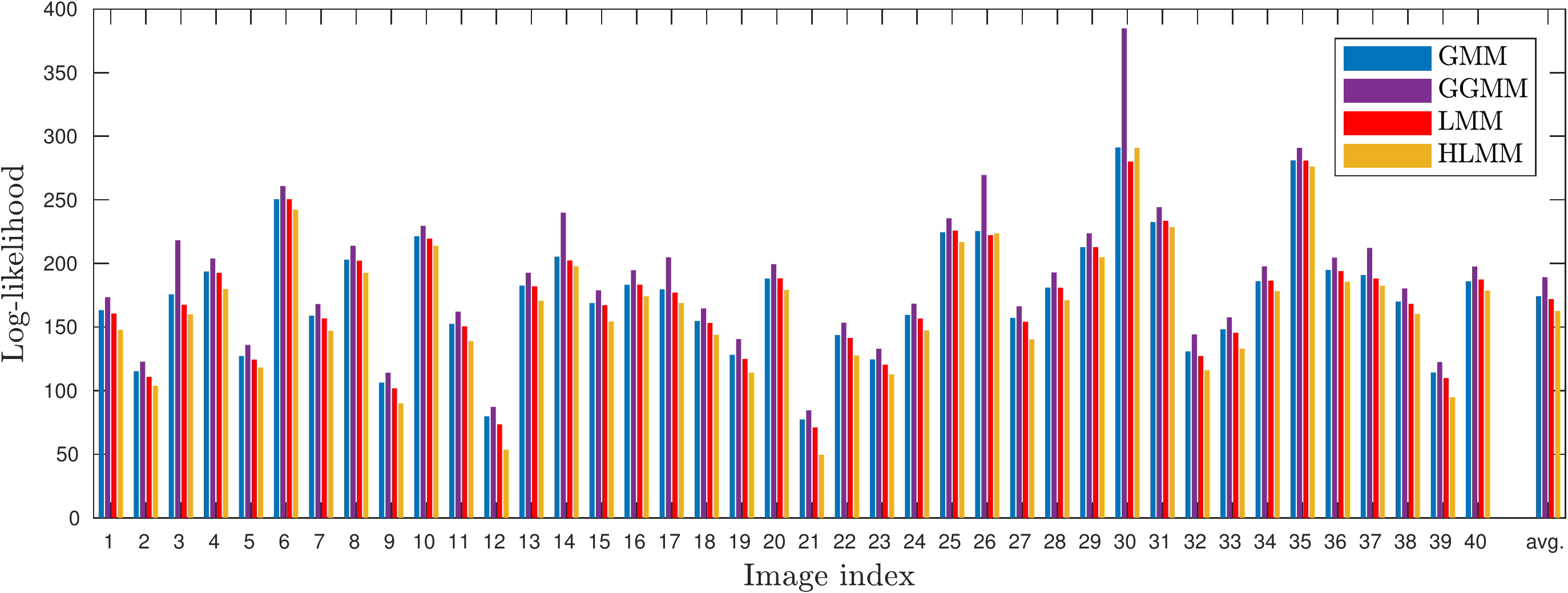}\\
  \caption{Average log-likelihood of all non-overlapping patches
    (with subtracted mean) of each of the $40$ images of our validation subset
    of the testing BSDS dataset for the GMM, GGMM, LMM and HLMM.
    The total average over the $40$ images is shown in the last column.
  }
  \label{fig:loglikelihood}
  \vskip1em
\end{figure}

\afterpage{
\begin{table*}[!t]
  \centering
  \caption{Image denoising performance comparison of EPLL algorithm  with GMM and GGMM priors.
  			PSNR and SSIM values are obtained on the BSDS test set (average over 60 images),
    		        on six standard images corrupted with 5 different levels of noise (average over 10 noise realizations), and finally
                        an average over these 66 images. BM3D algorithm results are also included
    		for reference purposes.}
  \begin{tabular}{ll cccccccc}
    \hline\\[-.9em]
$\sigma$& Algo. & BSDS & barbara &    \shortstack{camera\\man} &    hill &     house &   lena & mandrill & Avg.\\
    \hline
    \hline
    \\[-1em]
    & & \multicolumn{8}{c}{PSNR}\\
    \cline{3-10}\\[-.9em]
    \rowcolor{Gray}
    \multirow{3}{*}{\cellcolor{white} $5$}
&       BM3D  & 37.33 & 38.30 & 38.28 & 36.04 & 39.82 & 38.70 & 35.26 & 37.36 \\
&        GMM  & 37.25 & 37.60 & 38.07 & 35.93 & 38.81 & 38.49 & 35.22 & 37.26 \\
&       GGMM  & \bf 37.33 & \bf 37.73 & \bf 38.12 & \bf 35.95 & \bf 38.94 & \bf 38.52 & \bf 35.23 & \bf 37.33 \\
\hline
\rowcolor{Gray}
\multirow{3}{*}{\cellcolor{white} $10$}
&       BM3D  & 33.06 & 34.95 & 34.10 & 31.88 & 36.69 & 35.90 & 30.58 & 33.15 \\
&        GMM  & 33.02 & 33.65 & 33.91 & 31.79 & 35.56 & 35.46 & 30.55 & 33.06 \\
&       GGMM  & \bf 33.10 & \bf 33.87 & \bf 34.01 & \bf 31.81 & \bf 35.72 & \bf 35.59 & \bf 30.58 & \bf 33.15 \\
\hline
\rowcolor{Gray}
\multirow{3}{*}{\cellcolor{white} $20$}
&       BM3D  & 29.38 & 31.73 & 30.42 & 28.56 & 33.81 & 33.02 & 26.60 & 29.50 \\
&        GMM  & 29.36 & 29.76 & 30.16 & 28.46 & 32.77 & 32.40 & 26.60 & 29.42 \\
&       GGMM  & \bf 29.43 & \bf 30.02 & \bf 30.24 & \bf 28.48 & \bf 33.03 & \bf 32.59 & \bf 26.64 & \bf 29.50 \\
\hline
\rowcolor{Gray}
\multirow{3}{*}{\cellcolor{white} $40$}
&       BM3D  & 26.28 & 27.97 & 27.16 & 25.89 & 30.69 & 29.81 & 23.07 & 26.38 \\
&       GMM  & 26.21 & 26.02 & 26.93 & 25.68 & 29.60 & 29.18 & \bf 23.25 & 26.26 \\
&      GGMM  & \bf 26.26 & \bf 26.17 & \bf 27.03 & \bf 25.70 & \bf 29.89 & \bf 29.42 & 23.21 & \bf 26.32 \\
\hline
\rowcolor{Gray}
\multirow{3}{*}{\cellcolor{white} $60$}
&       BM3D  & 24.81 & 26.31 & 25.24 & 24.52 & 28.74 & 28.19 & 21.71 & 24.90 \\
&       GMM   & 24.57 & 23.95 & 25.10 & 24.21 & 27.53 & 27.28 & \bf 21.57 & 24.61 \\
&      GGMM   & \bf 24.64 & \bf 24.03 & \bf 25.17 & \bf 24.25 & \bf 27.80 & \bf 27.52 & 21.50 & \bf 24.67 \\
    \hline
    \\[-1em]
    & & \multicolumn{8}{c}{SSIM}\\
    \cline{3-10}\\[-.9em]
    \rowcolor{Gray}
    \multirow{3}{*}{\cellcolor{white} \cellcolor{white} $5$}
&       BM3D  &.9619 &.9643 &.9613 &.9508 &.9571 &.9436 &.9588 &.9614 \\
&       GMM   &.9626 &.9616 & \bf.9604 & \bf.9511 & \bf.9475 & \bf.9434 & \bf.9597 &.9618 \\
&      GGMM   & \bf.9628 & \bf.9617 &.9602 &.9507 &.9469 &.9425 &.9593 & \bf.9620 \\
\hline
\rowcolor{Gray}
\multirow{3}{*}{\cellcolor{white} $10$}
&       BM3D  &.9115 &.9410 &.9286 &.8821 &.9215 &.9155 &.8983 &.9117 \\
&       GMM   & \bf.9155 &.9298 &.9307 & \bf.8858 & \bf.8999 &.9107 & \bf.9022 & \bf.9150 \\
&      GGMM   &.9154 & \bf .9313 & \bf.9309 &.8839 &.8992 & \bf.9112 &.9007 &.9149 \\
\hline
\rowcolor{Gray}
\multirow{3}{*}{\cellcolor{white} $20$}
&       BM3D  &.8236 &.9036 &.8685 &.7789 &.8741 &.8763 &.7943 &.8260 \\
&       GMM   & \bf.8315 &.8687 & \bf.8704 & \bf.7812 &.8596 &.8639 & \bf.8030 & \bf.8324 \\
&      GGMM   &.8297 & \bf .8715 &.8699 &.7766 & \bf .8629 & \bf .8669 &.7991 &.8308 \\
\hline
\rowcolor{Gray}
\multirow{3}{*}{\cellcolor{white} $40$}
&        BM3D  &.7074 &.8196 &.7954 &.6599 &.8276 &.8143 &.6184 &.7118 \\
&        GMM   & \bf.7054 &.7509 &.7780 & \bf.6496 &.8025 &.7918 & \bf.6341 & \bf.7081 \\
&       GGMM   &.7018 & \bf .7526 & \bf.7842 &.6430 & \bf .8112 & \bf .7995 &.6192 &.7048 \\
\hline
\rowcolor{Gray}
\multirow{3}{*}{\cellcolor{white} $60$}
&        BM3D &.6375 &.7581 &.7496 &.5859 &.7956 &.7784 &.4993 &.6427 \\
&        GMM  & \bf.6212 &.6534 &.7174 & \bf.5661 &.7507 &.7350 & \bf.5001 & \bf.6241 \\
&       GGMM  &.6174 & \bf.6544 & \bf.7266 &.5592 & \bf .7622 & \bf .7438 &.4782 &.6207 \\
\hline
\end{tabular}
  \label{tab:psnr_denoising}
  \vskip1em
\end{table*}

\clearpage
\begin{table*}[!t]
  \centering
  \caption{Image denoising performance comparison of EPLL algorithm with different priors.
  			PSNR values are obtained on the BSDS test set (average over 60 images),
    		        on six standard images corrupted with 3 different levels of noise (average over 10 noise realizations), and finally
                        an average over these 66 images.}
  \begin{tabular}{ll cccccccc}
    \hline\\[-.9em]
$\sigma$& Algo. & BSDS & barbara &    \shortstack{camera\\man} &    hill &     house &   lena & mandrill & Avg.\\
    \hline
    \hline
    \\[-1em]
    & & \multicolumn{8}{c}{PSNR}\\
    \cline{3-10}\\[-.9em]
    \multirow{4}{*}{$5$}
&        GMM  & 37.25 & 37.60 & 38.07 & 35.93 & 38.81 & 38.49 & 35.22 & 37.26 \\
&     LMM  		& 37.31 & \bf 37.83 & 38.11 & 35.89 & 38.93 & 38.49 & 35.18 & 37.32 \\
&    HLMM  		& 36.85 & 37.42 & 37.66 & 35.39 & 38.37 & 38.08 & 34.77 & 36.86 \\
&       GGMM  	& \bf 37.33 & 37.73 & \bf 38.12 & \bf 35.95 & \bf 38.94 & \bf 38.52 & \bf 35.23 & \bf 37.33 \\
\hline

\multirow{4}{*}{$20$}
&       GMM  & 29.36 & 29.76 & 30.16 & 28.46 & 32.77 & 32.40 & 26.60 & 29.42 \\
&    LMM  	  & 29.30 & \bf 30.18 & 30.04 & 28.36 & \bf 33.22 & \bf 32.72 & 26.43 & 29.37 \\
&   HLMM  	  & 28.48 & 29.28 & 29.04 & 27.72 & 32.50 & 32.10 & 25.44 & 28.56 \\
&      GGMM  & \bf 29.43 & 30.02 & \bf 30.24 & \bf 28.48 & 33.03 & 32.59 & \bf 26.64 & \bf 29.50 \\
\hline

\multirow{4}{*}{$60$}
&        GMM  & 24.57 & 23.95 & 25.10 & 24.21 & 27.53 & 27.28 & \bf 21.57 & 24.61 \\
&     LMM  	  & 24.55 & 23.94 & 24.96 & 24.23 & \bf 27.91 & \bf 27.58 & 21.35 & 24.59 \\
&    HLMM  	  & 23.95 & 23.16 & 23.72 & 23.84 & 27.10 & 26.94 & 20.67 & 23.97 \\
&       GGMM  & \bf 24.64 & \bf 24.03 & \bf 25.17 & \bf 24.25 & 27.80 & 27.52 & 21.50 & \bf 24.67 \\
\hline
\end{tabular}
  \label{tab:psnr_denoising_nu}
  \vskip1em
\end{table*}






\begin{figure}
  \centering%
   \subfigure[Reference $\bx$]{%
     \begin{tikzpicture}[spy using outlines={red,magnification=1.5,size=2.25cm, connect spies}]
     \node{\includegraphicsnopsnr{.45\linewidth}{viewport=0 238 320 450,clip}{castle_sig20_T5/subimg1}};%
     \end{tikzpicture}%
  }\hfill%
  \subfigure[Noisy $\by$]{%
    \begin{tikzpicture}[spy using outlines={red,magnification=1.5,size=2.25cm, connect spies}]
    \node{\includegraphicspsnr{.45\linewidth}{viewport=0 238 320 450,clip}{castle_sig20_T5/subimg2}{22.12}{0.3679}};%
    \end{tikzpicture}%
  }\\
  \subfigure[GMM ($\nu = 2$)]{%
    \begin{tikzpicture}[spy using outlines={red,magnification=1.5,size=2.25cm, connect spies}]
    \node{\includegraphicspsnr{.45\linewidth}{viewport=0 238 320 450,clip}{castle_sig20_T5/subimg3}{30.43}{0.8678}};%
    \spy on (-1.1,.8) in node [left] at (-1.25,3.6);
    \spy on (.5,1.6) in node [left] at (1.125,3.6);
    \spy on (2.7,-.1) in node [left] at (3.5,3.6);
    \end{tikzpicture}%
  }\hfill
  \subfigure[GGMM ($.3 \leq \nu \leq 2$)]{%
    \begin{tikzpicture}[spy using outlines={red,magnification=1.5,size=2.25cm, connect spies}]%
    \node{\includegraphicspsnr{.45\linewidth}{viewport=0 238 320 450,clip}{castle_sig20_T5/subimg4}{30.47}{0.8686}};%
    \spy on (-1.1,.8) in node [left] at (-1.25,3.6);
    \spy on (.5,1.6) in node [left] at (1.125,3.6);
    \spy on (2.7,-.1) in node [left] at (3.5,3.6);
    \end{tikzpicture}%
  }%
  \\
  \subfigure[LMM ($\nu=1$)]{%
    \begin{tikzpicture}[spy using outlines={red,magnification=1.5,size=2.25cm, connect spies}]
    \node{\includegraphicspsnr{.45\linewidth}{viewport=0 238 320 450,clip}{castle_sig20_T5/subimg5}{30.39}{0.8666}};%
    \spy on (-1.1,.8) in node [left] at (-1.25,3.6);
    \spy on (.5,1.6) in node [left] at (1.125,3.6);
    \spy on (2.7,-.1) in node [left] at (3.5,3.6);
    \end{tikzpicture}%
  }\hfill
  \subfigure[HLMM ($\nu=0.5$)]{%
    \begin{tikzpicture}[spy using outlines={red,magnification=1.5,size=2.25cm, connect spies}]
    \node{\includegraphicspsnr{.45\linewidth}{viewport=0 238 320 450,clip}{castle_sig20_T5/subimg6}{28.48}{0.8522}};%
    \spy on (-1.1,.8) in node [left] at (-1.25,3.6);
    \spy on (.5,1.6) in node [left] at (1.125,3.6);
    \spy on (2.7,-.1) in node [left] at (3.5,3.6);
    \end{tikzpicture}%
  }
  \\
  \caption{%
    (a) Close in on the image {\it Castle} from the BSDS testing dataset,
    (b) a noisy version degraded by additive white Gaussian noise with standard
    deviation $\sigma=20$ and (c)-(f) results of EPLL under four patch priors:
    GMM, GGMM, LMM and HLMM, respectively. PSNR and SSIM are given
    in the bottom-left corner.}
  \label{fig:castle}
  \vskip1em
\end{figure}
\begin{figure*}
  \centering%
  \subfigure[Reference $\bx$]{%
    \begin{tikzpicture}[spy using outlines={red,magnification=1.5,size=2.25cm, connect spies}]
      \node{\includegraphicsnopsnr{.45\linewidth}{viewport=0 66 256 236,clip}{cameraman_sig20_T5/subimg1}};%
    \end{tikzpicture}
  }\hfill%
    \subfigure[Noisy $\by$]{%
    \begin{tikzpicture}[spy using outlines={red,magnification=1.5,size=2.25cm, connect spies}]
      \node{\includegraphicspsnr{.45\linewidth}{viewport=0 66 256 236,clip}{cameraman_sig20_T5/subimg2}{22.14}{0.3972}};%
    \end{tikzpicture}
  }\\
  \subfigure[GMM ($\nu = 2$)]{%
    \begin{tikzpicture}[spy using outlines={red,magnification=1.5,size=2.25cm, connect spies}]
    \node{\includegraphicspsnr{.45\linewidth}{viewport=0 66 256 236,clip}{cameraman_sig20_T5/subimg3}{30.17}{0.8690}};%
    \spy on (-2.6,1.2) in node [left] at (-1.25,3.6);
    \spy on (-.3,-1.1) in node [left] at (1.125,3.6);
    \spy on (0.7,.9) in node [left] at (3.5,3.6);
    \end{tikzpicture}
  }\hfill
  \subfigure[GGMM ($.3 \leq \nu \leq 2$)]{%
    \begin{tikzpicture}[spy using outlines={red,magnification=1.5,size=2.25cm, connect spies}]
    \node{\includegraphicspsnr{.45\linewidth}{viewport=0 66 256 236,clip}{cameraman_sig20_T5/subimg4}{30.28}{0.8681}};%
    \spy on (-2.6,1.2) in node [left] at (-1.25,3.6);
    \spy on (-.3,-1.1) in node [left] at (1.125,3.6);
    \spy on (0.7,.9) in node [left] at (3.5,3.6);
    \end{tikzpicture}
  }\\
  \subfigure[LMM ($\nu=1$)]{%
    \begin{tikzpicture}[spy using outlines={red,magnification=1.5,size=2.25cm, connect spies}]
    \node{\includegraphicspsnr{.45\linewidth}{viewport=0 66 256 236,clip}{cameraman_sig20_T5/subimg5}{30.04}{0.8578}};%
    \spy on (-2.6,1.2) in node [left] at (-1.25,3.6);
    \spy on (-.3,-1.1) in node [left] at (1.125,3.6);
    \spy on (0.7,.9) in node [left] at (3.5,3.6);
    \end{tikzpicture}
  }\hfill
  \subfigure[HLMM ($\nu=0.5$)]{%
    \begin{tikzpicture}[spy using outlines={red,magnification=1.5,size=2.25cm, connect spies}]
    \node{\includegraphicspsnr{.45\linewidth}{viewport=0 66 256 236,clip}{cameraman_sig20_T5/subimg6}{29.05}{0.8421}};%
    \spy on (-2.6,1.2) in node [left] at (-1.25,3.6);
    \spy on (-.3,-1.1) in node [left] at (1.125,3.6);
    \spy on (0.7,.9) in node [left] at (3.5,3.6);
    \end{tikzpicture}
  }\\
  \caption{%
    (a) Close in on the standard image {\it Cameraman}.
    (b) a noisy version degraded by additive white Gaussian noise with standard
    deviation $\sigma=20$ and (c)-(f) results of EPLL under four patch priors:
    GMM, GGMM, LMM and HLMM, respectively. PSNR and SSIM are given
    in the bottom-left corner.}
  \label{fig:cameraman}
  \vskip1em
\end{figure*}


\begin{figure*}
  \centering%
  \subfigure[Reference $\bx$]{%
    \begin{tikzpicture}[spy using outlines={red,magnification=1.5,size=2.25cm, connect spies}]
      \node{\includegraphicsnopsnr{.45\linewidth}{viewport=100 220 512 492,clip}{barbara_sig20_T5/subimg1}};%
    \end{tikzpicture}
  }\hfill%
  \subfigure[Noisy $\by$]{%
    \begin{tikzpicture}[spy using outlines={red,magnification=1.5,size=2.25cm, connect spies}]
      \node{\includegraphicspsnr{.45\linewidth}{viewport=100 220 512 492,clip}{barbara_sig20_T5/subimg2}{22.11}{0.4764}};%
    \end{tikzpicture}
  }\\
  \subfigure[GMM ($\nu = 2$)]{%
    \begin{tikzpicture}[spy using outlines={red,magnification=1.5,size=2.25cm, connect spies}]
      \node{\includegraphicspsnr{.45\linewidth}{viewport=100 220 512 492,clip}{barbara_sig20_T5/subimg3}{29.75}{0.8694}};%
      \spy on (-2.7,-1) in node [left] at (-1.25,3.6);
      \spy on (-1,0.7) in node [left] at (1.125,3.6);
      \spy on (1.2,1) in node [left] at (3.5,3.6);
    \end{tikzpicture}
  }\hfill%
  \subfigure[GGMM  ($.3 \leq \nu \leq 2$)]{%
    \begin{tikzpicture}[spy using outlines={red,magnification=1.5,size=2.25cm, connect spies}]
      \node{\includegraphicspsnr{.45\linewidth}{viewport=100 220 512 492,clip}{barbara_sig20_T5/subimg4}{30.03}{0.8729}};%
      \spy on (-2.7,-1) in node [left] at (-1.25,3.6);
      \spy on (-1,0.7) in node [left] at (1.125,3.6);
      \spy on (1.2,1) in node [left] at (3.5,3.6);
    \end{tikzpicture}
  }\\
  \subfigure[LMM ($\nu=1$)]{%
    \begin{tikzpicture}[spy using outlines={red,magnification=1.5,size=2.25cm, connect spies}]
      \node{\includegraphicspsnr{.45\linewidth}{viewport=100 220 512 492,clip}{barbara_sig20_T5/subimg5}{30.21}{0.8737}};%
      \spy on (-2.7,-1) in node [left] at (-1.25,3.6);
      \spy on (-1,0.7) in node [left] at (1.125,3.6);
      \spy on (1.2,1) in node [left] at (3.5,3.6);
    \end{tikzpicture}
  }\hfill
  \subfigure[HLMM ($\nu=0.5$)]{%
    \begin{tikzpicture}[spy using outlines={red,magnification=1.5,size=2.25cm, connect spies}]
      \node{\includegraphicspsnr{.45\linewidth}{viewport=100 220 512 492,clip}{barbara_sig20_T5/subimg6}{29.27}{0.8495}};%
      \spy on (-2.7,-1) in node [left] at (-1.25,3.6);
      \spy on (-1,0.7) in node [left] at (1.125,3.6);
      \spy on (1.2,1) in node [left] at (3.5,3.6);
    \end{tikzpicture}
  }\\
  \caption{%
    (a) Close in on the standard image {\it Barbara}.
    (b) a noisy version degraded by additive white Gaussian noise with standard
    deviation $\sigma=20$ and (c)-(f) results of EPLL under four patch priors:
    GMM, GGMM, LMM and HLMM, respectively. PSNR and SSIM are given
    in the bottom-left corner.}
  \label{fig:barbara}
  \vskip1em
\end{figure*}

}

\paragraph{Denoising evaluation}
Following the verification of the model, 
we provide a thorough evaluation of our GGMM prior in denoising task by
comparing its performance against EPLL that uses a GMM prior \cite{Zoran11} and
with our LMM and HLMM models explained above. For the ease of comparison,
we utilize the pipeline and settings that
was prescribed for the original EPLL \cite{Zoran11} algorithm
(see Section \ref{sec:image_restoration}).
To reduce the computation time of \textit{all} EPLL-based
algorithms, we utilize the random patch overlap procedure introduced by
\cite{parameswaran2017accelerating}. That is, instead of extracting
all patches at each iteration, a randomly selected but different subset of
overlapping patches consisting of only $3\%$ of all possible patches is processed
in each iteration.
For the sake of reproducibility of our results, we have made
our MATLAB/MEX-C implementation available online at
\url{\ourcodeurl}.

The evaluation is carried out on classical images such as
{\it Barbara}, {\it Cameraman}, {\it Hill}, {\it House}, {\it Lena} and {\it Mandrill},
and on 60 images taken from BSDS testing set \cite{martin2001database} (the original
BSDS test data contains 100 images, the other 40 were used for model
validation experiments). All image have been corrupted independently by
ten independent realizations of additive white Gaussian noise with standard deviation
$\sigma = 5, 10, 20, 40$ and $60$ (with pixel values between $[0, 255]$).
The EPLL algorithm using mixture of Gaussian, generalized Gaussian
priors are indicated as GMM and GGMM
in \Cref{tab:psnr_denoising}.
Results obtained with BM3D algorithm \cite{dabov2007image} are also included
for reference purposes. To stay with the focus of this paper,
\ie on the effect of image priors
on EPLL-based algorithms, BM3D will be excluded from
our performance comparison discussions.
The denoising performance of the algorithms
are measured in terms of Peak Signal to Noise
Ratio (PSNR) and Structural SIMilarity (SSIM) \cite{wang2004image}.
As can be observed in \Cref{tab:psnr_denoising}, in general,
GGMM prior provides better PSNR performance than the GMM prior.
In terms of SSIM values (shown in the bottom part of \Cref{tab:psnr_denoising}),
GGMM prior is comparable to GMM.
In order to demonstrate the effect of
fixed $\nu$ values compared to the more flexible GGMM prior, we
compare the results of GGMM against GMM ($\nu = 2$),
Laplacian Mixture Model (GGMM with $\nu=1$) and
hyper-Laplacian mixture model (GGMM with $\nu=0.5$) priors
in the same scenarios for $\sigma = 5, 20$ and $60$. These results
are shown in \Cref{tab:psnr_denoising_nu}.
GGMM prior provides better PSNR performance on average than the fixed-shape priors.
The differences in
denoising performance can also be verified visually in \Cref{fig:castle},
\Cref{fig:cameraman} and \Cref{fig:barbara}. The denoised images obtained
using GGMM prior show much fewer artifacts as compared to GMM-EPLL results,
in particular in homogeneous regions. On the
other hand, GGMM prior is also able to better preserve textures than LMM and HLMM.

\begin{figure}
  \centering%
  \includegraphics[width=.7\linewidth]{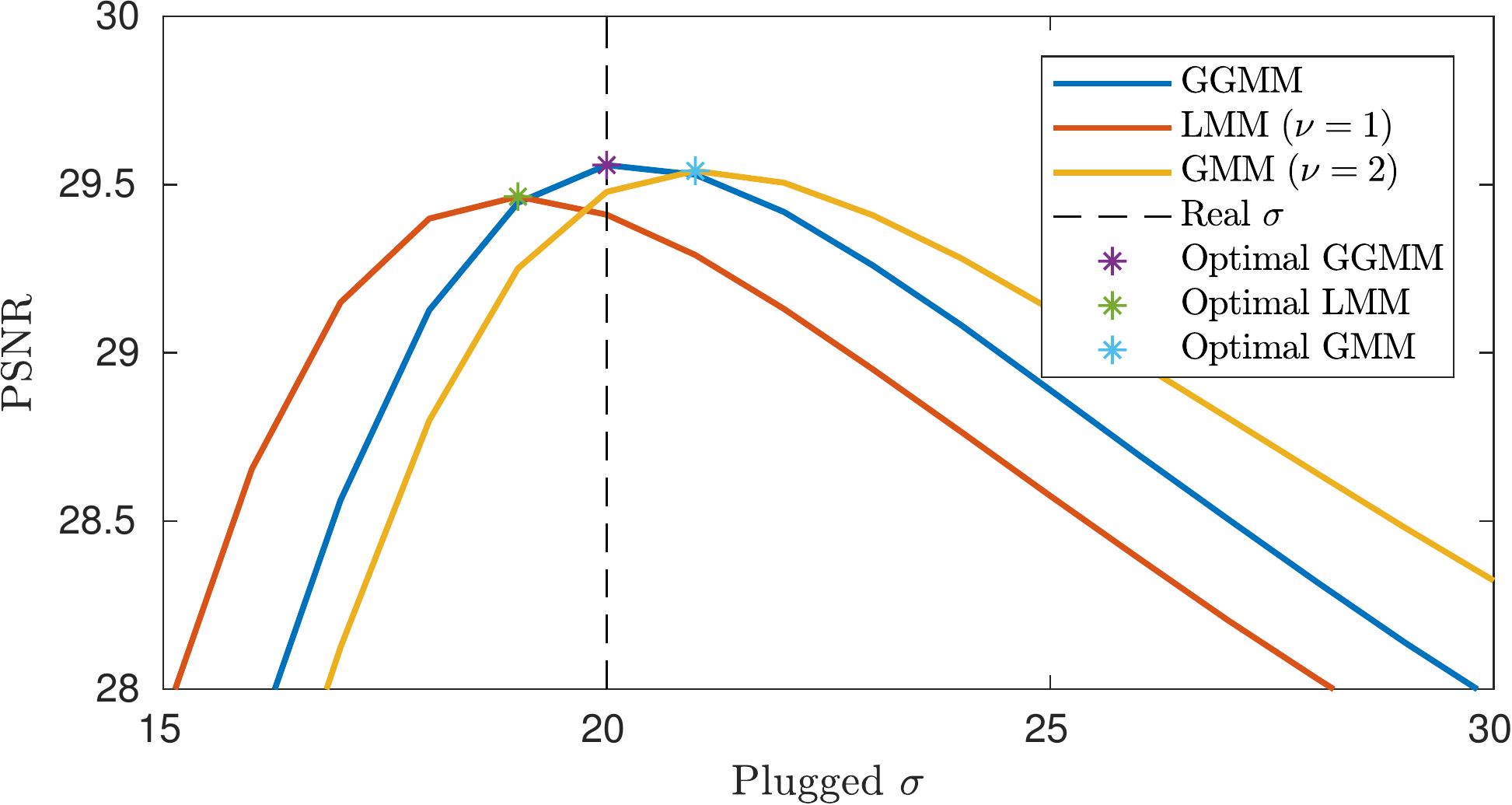}
  \caption{Evolution of performance of EPLL with a GGMM, LMM ($\nu = 1$)
    and a GMM ($\nu = 2$) under misspecification of the noise standard
    deviation $\sigma$. Performances are measured in terms of PSNR
    on the BSDS dataset corrupted by a Gaussian noise with standard
    deviation $\sigma=20$. For each of the three priors,
    EPLL has been run assuming $\sigma$ was ranging from $15$
    to $30$.}
    \label{fig:sig_evolve}
  \vskip1em
\end{figure}

\paragraph{Prior fitness for image denoising}
In this work, we have considered non-blind image denoising.
That is, the noise standard deviation is assumed to be known. In this setting,
if the restoration model is accurate, one should
ideally achieve optimal restoration performance when using the true
degradation. To verify this, we conducted a denoising task with image
corrupted with noise with standard deviation $\sigma=20$. We used GMM, LMM and GGMM priors
in the restoration framework with assumed $\sigma$ values ranging from 15 to 30.
\Cref{fig:sig_evolve} shows
the evolution of average restoration performance over 40 images
from BSDS testing set (kept aside
for validation, as mentioned above) with varying noise variances.
GGMM prior attains its best
performance when the noise variance used in the restoration model
matches with the ground truth $\sigma=20$. In contrast to GGMM,
GMM (resp., LMM) reaches its best performance at a larger
(resp., lower) value of $\sigma$ than the correct noise used during
degradation. This is because GMM tends to under-smooth clean patches
(resp., over-smooth) so that a larger (resp., lower) value of $\sigma$
is required to compensate the mismatch between the assumed prior
and the actual distribution in the restoration model.
 This indicates that GGMM is a better option to model
distribution of image patches than GMM or LMM.

\afterpage{
\begin{figure}[t!]


  \centering
  \subfigure[No approx.~(10h 29m)]{%
    {\includegraphicspsnr{.32\linewidth}{}{comp_direct/subimg2}{29.49}{0.877}}%
  }\hfill
  \subfigure[Approx.~discrepancy (2s54)]{%
    {\includegraphicspsnr{.32\linewidth}{}{comp_direct/subimg2}{29.47}{0.875}}%
  }\hfill
  \subfigure[Approx.~disc.~\& shrink.~(1s63)]{%
    {\includegraphicspsnr{.32\linewidth}{}{comp_direct/subimg3}{29.48}{0.875}}%
  }\\
  \caption{Results obtained by GGMM-EPLL on a
    $128 \times 128$ cropped image of the BSDS testing dataset damaged
    by additive white Gaussian noise with $\sigma=20$.
    These are obtained respectively by
    (a) evaluating the classification and shrinkage problem
    with numerical solvers (numerical integration and Halley's root-finding method),
    (b) approximating the classification problem only, and
    (c) approximating both problems.
    PSNR and SSIM are given in the bottom-left corner.
    Running time (averaged on ten runs)
    of the overall GGMM-EPLL are indicated on the captions:
    our accelerations lead to a speed-up of
    $\times 15,000$ and $\times 1.5$ respectively.}
    \label{fig:direct_vs_approx}
  \vskip1em
\end{figure}
\begin{table}[t!]
  \caption{
    Runtime profiles (averaged over ten runs) of GGMM-EPLL corresponding to
    the denoising experiment shown in \Cref{fig:direct_vs_approx}.
    These profiles are obtained respectively by
    evaluating the classification and shrinkage problem
    with numerical solvers (numerical integration and Halley's root-finding method),
    approximating the classification problem only,
    and approximating both problems.
    Profiles are split into discrepancy, shrinkage and
    patch extration/reprojection. Speed-up
    with respect to the no-approximation (first column) are indicated in parenthesis
    and major accelerations in green.
    Percentage of time taken for each step with respect to the overall execution time (first row),
    are indicated below each time reading and bottlenecks
    are indicated in red.
  }
  \label{tab:profile}
  \centering
  \begin{tabular}{@{}l@{\;}ccc@{}}
    \hline
    &
    \multicolumn{1}{c}{No approximations} &
    \multicolumn{1}{c}{Approx. discrepancy} &
    \multicolumn{1}{c}{Approx. disc. \& shrink.}\\
    \hline
    \hline
    \\[-.5em]
    Total &
    10h 29m 15s
    &
    2.54s \tiny ({\color{green!60!black} $\bf \times 15,000$})
    &
    1.63s \tiny ({\color{green!60!black} $\bf \times 23,000$})
    \\[-.5em]
    &\multicolumn{1}{c}{\tiny  100\%}
    &\multicolumn{1}{c}{\tiny  100\%}
    &\multicolumn{1}{c}{\hspace{-2.3cm}\tiny $\xrightarrow{{\color{green!60!black} \bf \times 1.5}}$
      \hspace{1.4cm}  100\%}
    \\[1.em]
    Discrepancy &
    10h 29m 14s
    &
    1.44s \tiny ({\bf \color{green!60!black} $\bf \times 26,000$})
    &
    1.44s \tiny ($\times 26,000$)
    \\[-.5em]
    &\multicolumn{1}{c}{\tiny \color{red} \bf $>$99.99\%}
    &\multicolumn{1}{c}{\tiny \color{red} \bf 56.69\%}
    &\multicolumn{1}{c}{\tiny \color{red} \bf 88.34\%}
    \\[1.em]
    Shrinkage &
    1.08s
    &
    1.08s \tiny ($\times 1$)
    &
    0.17s \tiny ({\bf \color{green!60!black} $\bf \times 6.3$})
    \\[-.5em]
    &\multicolumn{1}{c}{\tiny  $<$0.001\%}
    &\multicolumn{1}{c}{\tiny  \color{red} \bf 42.52\%}
    &\multicolumn{1}{c}{\tiny  10.43\%}
    \\[.7em]
    $\substack{
      \displaystyle \text{Patch extraction}\\[.5em]
      \displaystyle \text{and reprojection}
    }$
    &
    0.02s
    &
    0.02s \tiny ($\times 1$)
    &
    0.02s \tiny ($\times 1$)
    \\[-0.9em]
    &\multicolumn{1}{c}{\tiny  $<$0.001\%}
    &\multicolumn{1}{c}{\tiny  0.79\%}
    &\multicolumn{1}{c}{\tiny  1.23\%}
    \\[1em]
    \hline
  \end{tabular}
  \vspace{1em}
\end{table}
}

\paragraph{Influence of our approximations}
All previous experiments using GGMM patch priors
were conducted based on the two proposed
approximations introduced in \Cref{sec:discrepancy} and \Cref{sec:shrinkage}.
In \Cref{fig:direct_vs_approx} and \Cref{tab:profile},
we provide a quantitative illustration of the speed-ups provided by these approximations and
their effect on denoising performance.
Timings were carried out with Matlab 2018a
on an Intel(R) Core(TM) i7-7600U CPU @ 2.80GHz
(neither multi-core paralellization nor GPUs acceleration were used).
\Cref{fig:direct_vs_approx}a shows the result obtained
by calculating original discrepancy function via numerical integration and the shrinkage function
via Halley's root-finding method. This makes the denoising
process extremely slow and takes 10 hours and 29 minutes for denoising an image of size 128$\times$128 pixels.
The approximated discrepancy function provides 4 orders of magnitude speed-up with no perceivable
drop in performance (\Cref{fig:direct_vs_approx}b). In addition, incorporating the shrinkage
approximation provides further acceleration that allows the denoising to complete in less
than 2 seconds with a very minor drop in PSNR/SSIM.
As indicated in the detailed profiles on \Cref{tab:profile},
the shrinkage approximation provides an acceleration of six-fold to the
shrinkage calculation step itself and leads to an overall speed-up of 1.5
due to the larger bottleneck caused by discrepancy function calculation.
The approximately 23,000$\times$ speed-up realized 
without any perceivable drop in denoising performance underscores the efficacy of our proposed approximations.

\section{Conclusions and Discussion}
\label{sec:conclusions}
In this work, we suggest using a mixture of generalized Gaussians
for modeling the patch distribution of clean images. We provide a
detailed study of the challenges that one encounters
when using a highly flexible GGMM prior for image restoration
in place of a more common GMM prior. We identify the two main
bottlenecks in the restoration procedure when using EPLL and GGMM --
namely, the patch classification step and the shrinkage step. One
of the main contributions of this paper, is the
thorough theoretical analysis of the classification problem
allowing us to introduce an asymptotically accurate approximation that is
computationally efficient. In order to tackle the shrinkage step,
we collate and extend the existing solutions under GGMM prior.

Our numerical experiments indicate that our flexible GGMM patch prior is a better fit
for modeling natural images 
than GMM and other mixture distributions with constant shape parameters such as LMM or HLMM. In image denoising tasks, we have shown that using
GGMM priors, often, outperforms GMM when used in the EPLL framework.

Nevertheless, we believe the performance of GGMM prior in these scenarios
falls short of its expected potential. Given that GGMM is persistently a better prior than GMM
(in terms of log-likelihood),
one would expect the GGMM-EPLL to outperform GMM-EPLL consistently.
We postulate that this under-performance is caused by the EPLL strategy that we use for optimization.
That is, even though the GGMM prior may be improving the quality of
the global solution,
the half quadratic splitting strategy used
in EPLL is not guaranteed to return a better solution due to the non-convexity
of the underlying problem. For this reason,
we will focus our future work in designing specific optimization strategies
for GGMM-EPLL leveraging the better expressivity of the proposed prior model
for denoising and other general restoration applications.

Another direction of future research will focus on extending this work
to employ GGMMs/GGDs in other model-based signal processing tasks.
Of these tasks, estimating the parameters of GGMM
directly on noisy observations is a problem of particular interest,
that could benefit from our approximations.
Learning GMM priors on noisy patches has been shown to be
useful in patch-based image restoration
when clean patches are not available \textit{a priori},
or to further adapt the model to the specificities of a given noisy image
\cite{yu2012solving,teodoro2015single,houdard2017}.
Another open problem is to analyze
the asymptotic behavior of the minimum mean square estimator (MMSE)
shrinkage with GGD prior, as an alternative to MAP shrinkage.
This could be useful to design accurate
approximations for other general inference frameworks.
Last but not least, characterizing the exact asymptotic behaviors
of the convolution of two arbitrary GGDs, as investigated in \cite{soury2015new},
is still an open question.
To the best of our knowledge, our study is the first attempt towards this goal
but in ours one of the GGD is always Gaussian (noise).
Extending our study to the general GGD case (or even specific cases such as Laplacian)
is a challenging problem that is of major interest in signal processing tasks
where noise is not Gaussian but instead follows another GGD.



\section*{Acknowledgments}

The authors would like to thank Charles Dossal and Nicolas Papadakis for fruitful discussions.

Part of the experiments
presented in this paper were carried out using the PlaFRIM experimental testbed, supported by Inria, CNRS (LABRI and IMB), Universit\'e de Bordeaux, Bordeaux INP and Conseil R\'egional d'Aquitaine (see \url{https://www.plafrim.fr/}).

\appendix
\section{Proof of equation \eqref{eq:discrepancy_l1}}
\label{proof:discrepancy_l1}

\begin{proof}
  For $\nu=1$, using $\Gamma(1) = 1$ and $\Gamma(3) = 2$, we obtain
  \begin{align}
    f_{\sigma, \lambda}^1(x)
    &=
    \log (2 \sqrt{\pi} \sigma \lambda)
    -\log
    \int_{-\infty}^\infty
    e^{ -\frac{t^2}{2 \sigma^2}
      -\frac{\sqrt{2} |x-t|}{\lambda} } \; \d t~,
    \\
    \label{eq:proof_discrepancy_l1_start}
    &=
    \log (2 \sqrt{\pi} \sigma \lambda)
    -\log
    \left[
    e^{-\frac{\sqrt{2} x}{\lambda}}
    \int_{-\infty}^x
    e^{ -\frac{t^2}{2 \sigma^2} + \frac{\sqrt{2} t}{\lambda} } \; \d t
    +
    e^{\frac{\sqrt{2} x}{\lambda}}
    \int_{x}^\infty
    e^{ -\frac{t^2}{2 \sigma^2} -\frac{\sqrt{2} t}{\lambda} } \; \d t
    \right]
    ~.
  \end{align}
  Since $\erf'(t) = \frac{2 e^{-t^2}}{\sqrt{\pi}}$, it follows that
  for $a > 0$ and $b > 0$
  \begin{align}
  \pd{}{t}
  \left[
    -
    \frac{\sqrt{\pi a}}{2}
    e^{ \frac{a}{4 b^2} } \erf\left( -\frac{t}{\sqrt{a}} - \frac{\sqrt{a}}{2 b} \right)
    \right]
  =
  e^{
  -\frac{t^2}a - \frac{t}b
  }~.
  \end{align}
  Therefore we have with $a = 2\sigma^2$ and $b=\lambda/\sqrt{2}$
  \begin{align}
    \int_{-\infty}^x
    e^{ -\frac{t^2}{2 \sigma^2} + \frac{\sqrt{2} t}{\lambda} } \; \d t
    =
    \frac{
    -\sqrt{\pi} \sigma e^{ \frac{\sigma^2}{\lambda^2} }
    }{
      \sqrt{2}
    }
    \left[
      \erf\left( -\frac{t}{\sqrt{2}\sigma} + \frac{\sigma }{\lambda} \right)
    \right]_{-\infty}^x\!\!\!\!
    =
    \frac{
      \sqrt{\pi} \sigma e^{ \frac{\sigma^2}{\lambda^2} }
    }{
      \sqrt{2}
    }
    \erfc\left( -\frac{x}{\sqrt{2}\sigma} + \frac{\sigma }{\lambda} \right)
    ~,
  \end{align}
  since $\ulim{t \to \infty} \erf(t) = 1$ and $\erfc(t) = 1 - \erf(t)$.
  Similarly, we get
  \begin{align}
    &
    \int_x^\infty
    e^{ -\frac{t^2}{2 \sigma^2}  -\frac{\sqrt{2} t}{\lambda} } \; \d t
    =
    \frac{
      -\sqrt{\pi} \sigma e^{ \frac{\sigma^2}{\lambda^2} }
    }{
      \sqrt{2}
    }
    \erfc\left( \frac{x}{\sqrt{2}\sigma} + \frac{\sigma }{\lambda} \right)
    ~.
  \end{align}
  Plugging these two last expressions in \eqref{eq:proof_discrepancy_l1_start}
  and rearranging the terms
  conclude the proof.
\end{proof}

\section{Proof of \Cref{prop:disc_basic_prop}}
\label{proof:disc_basic_prop}

\begin{proof}
  Starting from the definition of $f_{\sigma,\lambda}^{\nu}$
  and using the change of variable $t \to \sigma t$,
  eq.~\eqref{eq:disc_reduction} follows as
  \begin{multline}
    f_{\sigma,\lambda}^{\nu}(x)
    =
    -\log
    \int_{\RR}
    \sigma
    \frac{\kappa}{2 \lambda_\nu}
    \exp\left[- \left(\frac{\sigma |t|}{\lambda_\nu}\right)^\nu \right]
    \frac{1}{\sqrt{2\pi}\sigma}
    \exp\left[- \frac{(x-\sigma t)^2}{2\sigma^2} \right]
    \;\d t~,
    \\
    =
    \log \sigma
    -\log
    \int_{\RR}
    \frac{\kappa}{2 \lambda_\nu/\sigma}
    \exp\left[- \left(\frac{|t|}{\lambda_\nu/\sigma}\right)^\nu \right]
    \frac{1}{\sqrt{2\pi}}
    \exp\left[- \frac{(x/\sigma-t)^2}{2} \right]
    \;\d t
    =
    \log \sigma + f_{1,\lambda/\sigma}^\nu\left(\frac{x}{\sigma}\right)~.
  \end{multline}
  Properties \eqref{eq:disc_symmetry} and \eqref{eq:disc_unimodal} hold since the
  convolution of two real even unimodal distributions
  is even unimodal \cite{wintner1938asymptotic,purkayastha1998simple}.
  Property \eqref{eq:disc_bound} follows from
  \eqref{eq:disc_symmetry}, \eqref{eq:disc_unimodal} and the fact that the convolution
  of continuous and bounded real functions are continuous and bounded.
\end{proof}

\section{Proof of \Cref{thm:assignment_nu1_left}}
\label{proof:nu1_left}

\begin{lemma}\label{lem:erfc}
  Let $a > 0$ and $b > 0$. For $x$ in the vicinity of $0$, we have
  \begin{align}
    \frac1{2abx}
    \log\left[
      \frac{\erfc(ax + b)
        +
        e^{-4 a b x}
        \erfc(-ax + b)
      }{2\erfc(b)}
      \right]
    &=
    - 1
    +
    \left(a b -
    \frac{a e^{-b^2}}{\erfc(b) \sqrt{\pi}}
    \right) x
    +
    o(x)~.
  \end{align}
\end{lemma}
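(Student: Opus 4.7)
}

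The plan is a direct Taylor expansion around $x = 0$. Let
\[
g(x) = \erfc(ax+b) + e^{-4abx}\erfc(-ax+b),
\]
so the quantity inside the logarithm is $g(x)/g(0)$, with $g(0) = 2\erfc(b)$. Since we want an expansion of the form $-1 + Cx + o(x)$ after dividing by $2abx$, I must compute $g(x)$ up to and including the $x^{2}$ term, i.e.\ obtain $g'(0)$ and $g''(0)$.

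Using $\erfc'(t) = -\tfrac{2}{\sqrt{\pi}}e^{-t^{2}}$, differentiation gives
\[
g'(x) = -\tfrac{2a}{\sqrt{\pi}} e^{-(ax+b)^{2}} - 4ab\, e^{-4abx}\erfc(-ax+b) + \tfrac{2a}{\sqrt{\pi}} e^{-4abx}e^{-(-ax+b)^{2}}.
\]
At $x=0$, the two $e^{-b^{2}}$ contributions cancel, leaving $g'(0) = -4ab\,\erfc(b)$. This is the crucial cancellation that explains the factor $e^{-4abx}$ inside $g$: it guarantees that $g(x)/g(0)$ has its leading correction equal to $-2abx$, which is what produces the $-1$ on the right-hand side after division by $2abx$.

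Next I differentiate once more. Writing the exponent in the last term as $-a^{2}x^{2} - 2abx - b^{2}$ makes that piece easy to handle. Collecting values at $x = 0$, I expect the three contributions to combine into
\[
g''(0) = 16\, a^{2}b^{2}\erfc(b) - \tfrac{8a^{2}b}{\sqrt{\pi}} e^{-b^{2}}.
\]
Thus
\[
\frac{g(x)}{2\erfc(b)} = 1 - 2ab\,x + \Bigl(4a^{2}b^{2} - \tfrac{2a^{2}b}{\sqrt{\pi}\,\erfc(b)} e^{-b^{2}}\Bigr)x^{2} + O(x^{3}).
\]

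Finally, I apply $\log(1+u) = u - \tfrac{u^{2}}{2} + O(u^{3})$ with $u = -2ab\,x + O(x^{2})$. The $-u^{2}/2$ contributes $-2a^{2}b^{2}x^{2}$, which combines with the $4a^{2}b^{2}x^{2}$ above to yield $2a^{2}b^{2}x^{2}$. Dividing through by $2abx$ gives exactly
\[
-1 + \Bigl(ab - \tfrac{a\,e^{-b^{2}}}{\sqrt{\pi}\,\erfc(b)}\Bigr)x + o(x),
\]
as claimed. The only delicate point is bookkeeping the $x^{2}$ coefficient of $g$: one must be careful that the two exponential terms producing $e^{-b^{2}}/\sqrt{\pi}$ contributions reinforce (rather than cancel, as they did at first order) so that the final constant $a b - a e^{-b^{2}}/(\sqrt{\pi}\erfc(b))$ has the stated form. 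Everything else is routine Taylor expansion.
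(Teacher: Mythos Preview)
Your proposal is correct and follows essentially the same route as the paper: a second-order Taylor expansion of the bracketed quantity about $x=0$, followed by the expansion $\log(1+u)=u-\tfrac{u^{2}}{2}+O(u^{3})$ and division by $2abx$. The only cosmetic difference is that the paper expands $\erfc(\pm ax+b)$ and $e^{-4abx}$ separately and multiplies the series, whereas you compute $g'(0)$ and $g''(0)$ by direct differentiation; both yield the same coefficients $g'(0)=-4ab\,\erfc(b)$ and $g''(0)=16a^{2}b^{2}\erfc(b)-\tfrac{8a^{2}b}{\sqrt{\pi}}e^{-b^{2}}$, and the rest is identical.
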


\begin{proof}
  Since $\erfc'(x) = - \frac{2 e^{-x^2}}{\sqrt{\pi}}$
  and $\erfc''(x) = \frac{2 x e^{-x^2}}{\sqrt{\pi}}$,
  using second order Taylor's expansion for $x$ in the vicinity of $0$,
  it follows that
  \begin{align}
    \erfc(ax+b) &\usim{0}
    \erfc(b)
    - \frac{2 a e^{-b^2}}{\sqrt{\pi}} x
    + \frac{2 a^2 b e^{-b^2}}{\sqrt{\pi}} x^2~,
    \\
    \qandq
    \erfc(-ax+b) &\usim{0}
    \erfc(b)
    + \frac{2 a e^{-b^2}}{\sqrt{\pi}} x
    + \frac{2 a^2 b e^{-b^2}}{\sqrt{\pi}} x^2~.
  \end{align}
  We next make the following deductions
  \begin{gather}
    e^{-4 a b x}\erfc(-ax+b)) \usim{0}\!
    e^{-4 a b x}\erfc(b) + e^{-4 a b x}\frac{2 a e^{-b^2}}{\sqrt{\pi}} x
    + e^{-4 a b x} \frac{2 a^2 b e^{-b^2}}{\sqrt{\pi}} x^2~,
    \\
    \erfc(ax\!+\!b)
    \!+\!
    e^{-4 a b x}\erfc(-ax\!+\!b)) \usim{0}\!
    (1 \!+\! e^{-4 a b x})\!\left(\! \erfc(b) \!+\! \frac{2 a^2 b e^{-b^2}}{\sqrt{\pi}} x^2 \!\right)
    \!-\!
    (1 \!-\! e^{-4 a b x})
    \frac{2 a e^{-b^2}}{\sqrt{\pi}} x~,
    \nonumber
    \\
    \underbrace{\frac{\erfc(ax\!+\!b)
    \!+\!
    e^{-4 a b x}\erfc(-ax\!+\!b))
    }{2 \erfc{b}}}_{A(x)}
    \usim{0}\!
    \frac{1 \!+\! e^{-4 a b x}}{2}
    \left(\! 1 \!+\! \frac{2 a^2 b e^{-b^2}}{\erfc(b) \sqrt{\pi}} x^2 \!\right)
    \!-\!
    (1 \!-\! e^{-4 a b x})
    \frac{a e^{-b^2}}{\erfc(b) \sqrt{\pi}} x~.
    \nonumber
  \end{gather}
  The left-hand side $A(x)$ of this last equation
  is then, in the vicinity of $x=0$, equals to
  \begin{align}
    A(x) =
    \frac{1 + e^{-4 a b x}}{2}
    \left( 1 + \frac{2 a^2 b e^{-b^2}}{\erfc(b) \sqrt{\pi}} x^2 \right)
    -
    (1 - e^{-4 a b x})
    \frac{a e^{-b^2}}{\erfc(b) \sqrt{\pi}} x
    +
    o(x^2)~.
  \end{align}
  We next use second-order Taylor's expansion for $e^{-4 a b x}$,
  leading to
  \begin{align}
    A(x) =\;
    &(1 - 2 a b x + 4 a^2 b^2 x^2 + o(x^2))
    \left( 1 + \frac{2 a^2 b e^{-b^2}}{\erfc(b) \sqrt{\pi}} x^2 \right)
    \\
    &
    \hspace{5cm}
    -
    (4 a b x - 8 a^2 b^2 x^2 + o(x^2))
    \frac{a e^{-b^2}}{\erfc(b) \sqrt{\pi}} x
    +
    o(x^2)~,
    \nonumber
    \\
    =\;&
    1 - 2 a b x +
    \left(4 a^2 b^2 -
    \frac{2 a^2 b e^{-b^2}}{\erfc(b) \sqrt{\pi}}
    \right) x^2
    +
    o(x^2)~.
  \end{align}
  By using the second-order Taylor's expansion of $\log(1+x)$,
  it follows that
  \begin{gather}
    \log\left[
      A(x)
    \right]
    =
    - 2 a b x +
    \left(4 a^2 b^2 -
    \frac{2 a^2 b e^{-b^2}}{\erfc(b) \sqrt{\pi}}
    \right) x^2
    -
    2 a^2 b^2 x^2
    +
    o(x^2)~.
  \end{gather}
  Dividing both sides by $2 a b x$ then concludes the proof,
  \begin{gather}
    \frac1{2 a b x}
    \log\left[
    \frac{\erfc(ax+b)
    +
    e^{-4 a b x}\erfc(-ax+b))
    }{2 \erfc{b}}
    \right]
    =
    - 1 +
    \left(a b -
    \frac{a e^{-b^2}}{\erfc(b) \sqrt{\pi}}
    \right) x
    +
    o(x)~.
  \end{gather}
\end{proof}

\begin{proof}[Proof of \Cref{thm:assignment_nu1_left}]
  We first rewrite $\varphi^{1}_\lambda(x)$ as
  \begin{align}\label{eq:phi1_rewritten}
    \varphi^{1}_\lambda(x)
    &=
    \log \left[
    \log
    \left[
      2
      \erfc\left(
      \frac{1}{\lambda}
      \right)
      \right]
    -
    \log
    \left[
      e^{\frac{\sqrt{2} x}{\lambda}}
      \erfc\left(
      \frac{x}{\sqrt{2}}
      +
      \frac{1}{\lambda}
      \right)
      +
      e^{-\frac{\sqrt{2} x}{\lambda}}
      \erfc\left(
      -\frac{x}{\sqrt{2}}
      +
      \frac{1}{\lambda}
      \right)
      \right]
    \right]~,
    \nonumber
    \\
    &=
    \log \left[ \frac{\sqrt{2} x}{\lambda} \right]
    +
    \log \left[
      -
      1
      -
      \frac{\lambda}{\sqrt{2} x}
    \log
    \left[
      \frac{
        \erfc\left(
        \frac{x}{\sqrt{2}}
        +
        \frac{1}{\lambda}
        \right)
        +
        e^{-\frac{2\sqrt{2} x}{\lambda}}
        \erfc\left(
        -\frac{x}{\sqrt{2}}
        +
        \frac{1}{\lambda}
        \right)
      }{
        2
        \erfc\left(
        \frac{1}{\lambda}
        \right)
      }
      \right]
    \right]~.
  \end{align}
  Next, using \Cref{lem:erfc} with $a=1/\sqrt{2}$ and $b=1/\lambda$, it follows that
  \begin{align}
    \varphi^1_\lambda(x)
    &=
    \log \left[ \frac{\sqrt{2} x}{\lambda} \right]
    +
    \log \left[
      -
      \left(
      \frac{1}{\sqrt{2}\lambda} -
      \frac{e^{-\frac{1}{\lambda^2}}}{\sqrt{2} \erfc(\frac{1}{\lambda}) \sqrt{\pi}}
      \right) x
      +
      o(x)
      \right]~,
    \\
    &=
    \log \left[ \frac{x^2}{\lambda} \right]
    +
    \log \left[
      \frac{e^{-\frac{1}{\lambda^2}}}{\erfc(\frac{1}{\lambda}) \sqrt{\pi}}
      -
      \frac{1}{\lambda}
      +
      o(1)
      \right]~,
    \\
    &=
    \log \left[ \frac{x^2}{\lambda} \right]
    +
    \log \left[
      \frac{1}{\sqrt{\pi}}
      \frac{e^{-\frac{1}{\lambda^2}}}{\erfc(\frac{1}{\lambda})}
      -
      \frac{1}{\lambda}
      \right]
    +
    o(1)~,
  \end{align}
  where the last equation follows from the first-order Taylor expansion
  of $\log(a + x)$.
\end{proof}

\section{Proof of \Cref{thm:assignment_nu1_right}}
\label{proof:nu1_right}

\begin{lemma}\label{lem:erfc2}
  Let $a > 0$  and $b > 0$. For $x$ in the vicinity of $+\infty$, we have
  \begin{align}
    \frac1{2abx}
    \log\left[
      \frac{\erfc(ax + b)
        +
        e^{-4 a b x}
        \erfc(-ax + b)
      }2
      \right]
    &=
    -2 + o\left(1\right)~.
  \end{align}
\end{lemma}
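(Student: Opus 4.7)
}
The plan is to factor out the slowest-decaying exponential from the argument of the logarithm, show that what remains tends to a finite nonzero limit, and then divide by $2abx$. Concretely, I would rewrite the argument of the logarithm as
\begin{align*}
\frac{\erfc(ax+b) + e^{-4abx}\erfc(-ax+b)}{2}
= e^{-4abx}\cdot\underbrace{\left[\frac{\erfc(-ax+b)}{2} + \frac{e^{4abx}\erfc(ax+b)}{2}\right]}_{=: R(x)},
\end{align*}
so that $\log[\,\cdot\,] = -4abx + \log R(x)$, and then the claim reduces to showing $\log R(x) = o(1)$ as $x\to+\infty$.

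The next step is to analyze the two contributions to $R(x)$ separately. For the first term, since $a,b>0$, we have $-ax+b\to-\infty$ as $x\to+\infty$, so the standard fact $\erfc(-\infty)=2$ gives $\tfrac12\erfc(-ax+b)\to 1$. For the second term, I would invoke the classical asymptotic $\erfc(y)\sim \tfrac{e^{-y^2}}{y\sqrt{\pi}}$ as $y\to+\infty$ (applied at $y=ax+b$), which yields
\begin{align*}
e^{4abx}\erfc(ax+b) \usim{\infty} \frac{e^{4abx-(ax+b)^2}}{(ax+b)\sqrt{\pi}} = \frac{e^{-a^2x^2+2abx-b^2}}{(ax+b)\sqrt{\pi}}.
\end{align*}
Since $-a^2x^2$ dominates $2abx$, the exponent tends to $-\infty$ and this whole quantity vanishes. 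Hence $R(x)\to 1$.

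Combining these facts, continuity of $\log$ at $1$ gives $\log R(x)\to 0$, so
\begin{align*}
\log\!\left[\frac{\erfc(ax+b)+e^{-4abx}\erfc(-ax+b)}{2}\right] = -4abx + o(1),
\end{align*}
and dividing by $2abx$ gives the claimed $-2 + o(1)$. The only place where care is required is in the comparison of the two competing exponential scales $e^{-(ax+b)^2}$ and $e^{-4abx}$: one must verify that after multiplying $\erfc(ax+b)$ by the compensating factor $e^{4abx}$, the surviving Gaussian factor $e^{-a^2x^2}$ still drives the whole term to zero. This comparison is the only nontrivial point, and it is essentially immediate once the right factorization is used. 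No delicate cancellations or higher-order expansions are needed here, in contrast to the proof of \Cref{lem:erfc} near $x=0$.
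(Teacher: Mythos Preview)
Your proof is correct and follows essentially the same approach as the paper: both show that the argument of the logarithm is asymptotically $e^{-4abx}$, then divide by $2abx$. Your version is slightly more explicit---you factor out $e^{-4abx}$ and invoke the asymptotic $\erfc(y)\sim e^{-y^2}/(y\sqrt{\pi})$ to kill the $e^{4abx}\erfc(ax+b)$ term, whereas the paper simply asserts $\tfrac12\bigl(\erfc(ax+b)+e^{-4abx}\erfc(-ax+b)\bigr)\sim e^{-4abx}$ and then uses that $f\sim g$ with $|\log g|\to\infty$ implies $\log f\sim\log g$---but the underlying idea is identical.
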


\begin{proof}
  We have
  $\ulim{x \to +\infty} \erfc(x) = 0$ and
  $\ulim{x \to +\infty} \erfc(-x) = 2$,
  it follows that
  \begin{align}
    \erfc(-ax+b) &\usim{+\infty} 2~,
    \\
    e^{-4abx} \erfc(-ax+b) &\usim{+\infty} 2 e^{-4abx}~,
    \\
    \frac{\erfc(ax+b) + e^{-4abx} \erfc(-ax+b)}2 &\usim{+\infty} e^{-4abx}~,
    \\
    \log\left[\frac{\erfc(ax+b) + e^{-4abx} \erfc(-ax+b)}2 \right]
    &\usim{+\infty} -4abx~,
    \\
    \frac1{2abx} \log\left[\erfc(ax+b) + e^{-4abx} \erfc(-ax+b)\right]
    &\usim{+\infty} -2~,
  \end{align}
  where we have used the knowledge that $f \sim g$ implies that
  $\log f \sim \log g$.
\end{proof}

\begin{proof}[Proof of \Cref{thm:assignment_nu1_right}]
  By writing $\varphi^1_\lambda$ as in eq.~\eqref{eq:phi1_rewritten} and
  using \Cref{lem:erfc2} with $a=1/\sqrt{2}$ and $b=1/\lambda$, it follows that
  \begin{align*}
    \varphi^1_\lambda(x)
    &=
    \log \left[ \frac{\sqrt{2} x}{\lambda} \right]
    +
    \log \left[
      1
      +
      \frac{\lambda}{\sqrt{2} x}
      \log \erfc\left(\frac{1}{\lambda}\right)
      +
      o\left(1\right)
      \right]
    =
    \log \left[ \frac{\sqrt{2} x}{\lambda} \right]
    +
    o\left(1\right)~.
\end{align*}
\end{proof}

\section{Proof of \Cref{thm:assignment_general_left}}
\label{proof:nugen_left}
\begin{proof}
  We first decompose the term $\exp\left( -\frac{(x - t)^2}{2} \right)$
  involved in the definition of $f_{1,\lambda}^\nu$ and rewrite $e^{xt}$
  using its power series
  \begin{align}
    f_{1,\lambda}^\nu(x)
    =
    &
    -\log \frac{1}{\sqrt{2 \pi} } \frac{\nu}{2 \lambda_\nu \Gamma(1/\nu)}
    -\log
    \int_{-\infty}^\infty
    \exp\left( -\frac{(x - t)^2}{2} \right)
    \exp\left[ -\left(\frac{|t|}{\lambda_\nu}\right)^\nu \right] \; \d t~,
    \\
    =&
    -\log \frac{1}{\sqrt{2 \pi} } \frac{\nu}{2 \lambda_\nu \Gamma(1/\nu)}
    -\log
    \int_{-\infty}^\infty
    e^{-\frac{x^2}{2}}
    e^{xt}
    e^{-\frac{t^2}{2}}
    \exp\left[ -\left(\frac{|t|}{\lambda_\nu}\right)^\nu \right] \; \d t~,
    \\
    =&
    -\log \frac{1}{\sqrt{2 \pi} } \frac{\nu}{2 \lambda_\nu \Gamma(1/\nu)}
    +\frac{x^2}{2}
    -\log
    \int_{-\infty}^\infty
    \sum_{k=0}^{\infty}
    \frac{(xt)^k}{k!}
    e^{-\frac{t^2}{2}}
    \exp\left[ -\left(\frac{|t|}{\lambda_\nu}\right)^\nu \right] \; \d t~.
  \end{align}
  For $x$ in the vicinity of $0$, we can consider $|x| \leq 1$, and then
  \begin{align}
    \int_{-\infty}^\infty
    \sum_{k=0}^{\infty}
    \left|
    \frac{(xt)^k}{k!}
    e^{-\frac{t^2}{2}}
    \exp\left[ -\left(\frac{|t|}{\lambda_\nu}\right)^\nu \right]
    \right|
    \; \d t
    &
    \leq
    \int_{-\infty}^\infty
    \sum_{k=0}^{\infty}
    \frac{|t|^k}{k!}
    e^{-\frac{t^2}{2}}
    \exp\left[ -\left(\frac{|t|}{\lambda_\nu}\right)^\nu \right] \; \d t~,
    \\
    &
    \leq
    \int_{-\infty}^\infty
    e^{-\frac{t^2}{2} + |t| -\left(\frac{|t|}{\lambda_\nu}\right)^\nu } \; \d t
    \;
    < \infty~.
  \end{align}
  Then, Fubini's theorem applies and we get
  \begin{align}
    f_{1,\lambda}^\nu(x)
    =&
    -\log \frac{1}{\sqrt{2 \pi} } \frac{\nu}{2 \lambda_\nu \Gamma(1/\nu)}
    +\frac{x^2}{2}
    -\log
    \sum_{k=0}^{\infty}
    \int_{-\infty}^\infty
    \frac{(xt)^k}{k!}
    e^{-\frac{t^2}{2}}
    \exp\left[ -\left(\frac{|t|}{\lambda_\nu}\right)^\nu \right] \; \d t~,
    \\
    =&
    -\log \frac{1}{\sqrt{2 \pi} } \frac{\nu}{2 \lambda_\nu \Gamma(1/\nu)}
    +\frac{x^2}{2}
    -\log
    \sum_{k=0}^{\infty}
    \frac{x^k}{k!}
    \int_{-\infty}^\infty
    t^k
    e^{-\frac{t^2}{2}}
    \exp\left[ -\left(\frac{|t|}{\lambda_\nu}\right)^\nu \right] \; \d t~.
  \end{align}
  By definition, we have
  \begin{align}
    \gamma_\lambda^\nu \triangleq f_{1,\lambda}^\nu(0) =
  -\log \frac{1}{\sqrt{2 \pi} } \frac{\nu}{2 \lambda_\nu \Gamma(1/\nu)}
  -\log
  \int_{-\infty}^\infty
  \exp\left( -\frac{t^2}{2} \right)
  \exp\left[ -\left(\frac{|t|}{\lambda_\nu}\right)^\nu \right]
  \; \d t~.
  \end{align}
  Moreover, when $k$ is odd, we have
  \begin{align}
  \int_{-\infty}^\infty
  t^k
  e^{-\frac{t^2}{2}}
  \exp\left[ -\left(\frac{|t|}{\lambda_\nu}\right)^\nu \right] \; \d t
  =
  0~.
  \end{align}
  Using third-order Taylor expansion of $\log(1+x)$ for $x$ in the vicinity of $0$,
  it follows that
  \begin{align}
    f_{1,\lambda}^\nu(x)
    =\;&
    \gamma_\lambda^\nu
    +\frac{x^2}{2}
    -\log
    \left(
    1+
    \frac{x^2}{2}
    \frac{
      \int_{-\infty}^\infty
      t^2
      e^{-\frac{t^2}{2}}
      \exp\left[ -\left(\frac{|t|}{\lambda_\nu}\right)^\nu \right] \; \d t
    }{
      \int_{-\infty}^\infty
      e^{-\frac{t^2}{2}}
      \exp\left[ -\left(\frac{|t|}{\lambda_\nu}\right)^\nu \right] \; \d t
    }
    +
    o(x^3)
    \right)~,
    \\
    =\;&
    \gamma_\lambda^\nu
    +
    \frac{x^2}{2}
    \left(
    1
    -
    \frac{
      \int_{-\infty}^\infty
      t^2
      e^{-\frac{t^2}{2}}
      \exp\left[ -\left(\frac{|t|}{\lambda_\nu}\right)^\nu \right] \; \d t
    }{
      \int_{-\infty}^\infty
      e^{-\frac{t^2}{2}}
      \exp\left[ -\left(\frac{|t|}{\lambda_\nu}\right)^\nu \right] \; \d t
    }
    \right)
    +
    o(x^3)~.
  \end{align}
  Finally, using first-order Taylor's expansion for $\log(1+x)$,
  we conclude the proof as
  \begin{align}
    \varphi^\nu_\lambda(x) &=
    \log\left[
      \frac{x^2}{2}
      \left(
      1
      -
      \frac{
        \int_{-\infty}^\infty
        t^2
        e^{-\frac{t^2}{2}}
        \exp\left[ -\left(\frac{|t|}{\lambda_\nu}\right)^\nu \right] \; \d t
      }{
        \int_{-\infty}^\infty
        e^{-\frac{t^2}{2}}
        \exp\left[ -\left(\frac{|t|}{\lambda_\nu}\right)^\nu \right] \; \d t
      }
      +
      o(x)
      \right)
      \right]~,
    \\
    &=
    2 \log x
    - \log 2
    +
    \log\left(
    1
    -
    \frac{
      \int_{-\infty}^\infty
      t^2
      e^{-\frac{t^2}{2}}
      \exp\left[ -\left(\frac{|t|}{\lambda_\nu}\right)^\nu \right] \; \d t
    }{
      \int_{-\infty}^\infty
      e^{-\frac{t^2}{2}}
      \exp\left[ -\left(\frac{|t|}{\lambda_\nu}\right)^\nu \right] \; \d t
    }
    \right)
    +
    o(x)~.
  \end{align}
\end{proof}

\section{Proof of \Cref{thm:assignment_general_right}}
\label{proof:nugen_right}

We first recall in a lemma, a result extracted
from Corollary 3.3 in \cite{berman1992tail}.

\begin{lemma}[Berman]\label{lem:simeon}
Let $p$ and $q$ be differentiable real probability density functions.
Define for $x$ large enough
$
  u(x) = p^{-1}(q(x))
$
and define $v$ and $w$ as
\begin{align}
  v(x) = - \pd{}{x} \log p(x) \qandq
  w(x) = - \pd{}{x} \log q(x)~.
\end{align}
Assume $v$ and $w$ are positive continuous function and regularly oscillating, {\it i.e.}:
\begin{align}
  \ulim{\substack{x, x' \to \infty\\x/x' \to 1}}
  \frac{v(x)}{v(x')} = 1
  \qandq
  \ulim{\substack{x, x' \to \infty\\x/x' \to 1}}
  \frac{w(x)}{w(x')} = 1~.
\end{align}
Suppose that we have
\begin{align}
  \ulim{x \to \infty}
  \frac{w(x)}{v(x)} = 0
  \qandq
  \ulim{x \to \infty} u(x) w(x) = +\infty~,
\end{align}
then, for $x \to \infty$, we have
\begin{align}
  \log \int_{-\infty}^{+\infty}
  p(x-t) q(t)
  \; \d t
  \sim
  \log q(x)~.
\end{align}
\end{lemma}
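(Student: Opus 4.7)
\textbf{Proof proposal for Lemma \ref{lem:simeon}.} The plan is to establish matching lower and upper bounds for $\log I(x)$, where $I(x) = \int_{-\infty}^{\infty} p(x-t)\, q(t)\,\d t$, via a Laplace-type analysis in which the heavier tail of $q$ dominates the convolution.

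For the \emph{lower bound}, I would localize the integral to a unit window around $t=x$. On this window $p(x-t)$ is bounded below by $\min_{|s|\leq 1} p(s) > 0$, and by the regular oscillation of $w$ one has $q(t) \sim q(x)$ uniformly, so $I(x) \geq c\, q(x)$. Taking logarithms and using $|\log q(x)| \to \infty$ (since $q$ is an integrable density with thin-enough tails for $w$ to be positive at infinity) gives $\log I(x) \geq \log q(x) + O(1) \sim \log q(x)$.

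For the matching \emph{upper bound}, I would analyze the saddle point $t^\ast = t^\ast(x)$ of $\log[p(x-t)q(t)]$, which satisfies the stationarity condition $v(x-t^\ast) = w(t^\ast)$. The hypothesis $w/v \to 0$ means $v$ dominates $w$ at infinity, forcing $x - t^\ast$ to remain small relative to $x$; quantitatively, combining the stationarity equation with the definition $p(u(x)) = q(x)$ and with regular oscillation shows that $x - t^\ast$ is of order $o(u(x))$, so that $\log p(x-t^\ast) = o(\log q(x))$. The hypothesis $u(x)w(x) \to \infty$ is what guarantees that the scale $u(x)$ on which $p$ reaches the level $q(x)$ is large compared to the typical fluctuation scale $1/w(x)$ of $q$ near $x$, which is exactly what makes this transfer of asymptotics legitimate. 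By the same regular oscillation, $\log q(t^\ast) \sim \log q(x)$, so the pointwise maximum of the integrand is $\sim q(x)$ on the log scale.

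To pass from the saddle estimate to a bound on $\log I(x)$, I would cover $\RR$ by an on-saddle window and an off-saddle complement. The on-saddle window contributes at most (its width)$\,\times\,$(maximum of the integrand), whose log equals $\log q(x) + o(|\log q(x)|)$; for the off-saddle piece, the monotonicity of $-\log p$ and $-\log q$ away from the modes (a consequence of positivity of $v, w$) lets us dominate the integrand by its boundary value times an integrable envelope, and the pre-factor is again absorbed into $o(|\log q(x)|)$ upon taking logs. The main obstacle, and the step where Berman's hypotheses are decisive, is making all the asymptotic comparisons uniform over the saddle window: this is precisely what the regularly-oscillating assumption on $v$ and $w$ is crafted to provide, and the rigorous execution of these uniform estimates follows the argument in \cite{berman1992tail}.
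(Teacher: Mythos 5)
First, note that the paper does not actually prove this lemma: it is imported verbatim as Corollary 3.3 of \cite{berman1992tail}, so there is no in-paper argument to compare yours against. Judged on its own terms, your sketch correctly identifies the Laplace-type mechanism (the convolution is dominated by $t$ near $x$, where the lighter-tailed factor $p$ sits near its mode and the heavier-tailed $q$ sets the decay rate), but it is not a proof, and one of its intermediate claims is false as stated. In the lower bound you assert that regular oscillation of $w$ gives $q(t)\sim q(x)$ uniformly for $|t-x|\le 1$; this fails whenever $w(x)\to\infty$ (take $q$ Gaussian: $w(t)=t$ is regularly oscillating, yet $q(x+1)/q(x)=e^{-x-1/2}\to 0$), and that is precisely the regime $\nu>1$ covered by \Cref{thm:assignment_general_right}. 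What you actually need, and what must be derived from the hypotheses, is the weaker log-scale statement $\log q(x\pm 1)\sim\log q(x)$, i.e. $\int_x^{x+1}w(s)\,\d s = o(|\log q(x)|)$, which does not follow from regular oscillation alone without an additional argument.

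Second, the upper bound is where the theorem lives, and your sketch defers all of it. The assertions that $x-t^\ast=o(u(x))$, that $\log p(x-t^\ast)=o(|\log q(x)|)$, and that the off-saddle contribution can be dominated by a boundary value times an integrable envelope are each plausible but none is actually derived from the stated hypotheses; your closing sentence explicitly hands the uniform estimates back to \cite{berman1992tail}. Since the paper's treatment of this statement is itself a citation to that same corollary, your proposal does not go beyond what the paper does: it is a useful gloss on why Berman's hypotheses are the natural ones, not an independent proof. If a self-contained proof is wanted, the two concrete gaps to close are (i) a correct quantitative version of the localization and lower bound, and (ii) the tail estimate showing that the integral over $\{t : |t-x|>\delta u(x)\}$ is negligible on the log scale relative to the on-saddle contribution, which is exactly where the condition $\lim_{x\to\infty}u(x)w(x)=+\infty$ must enter.
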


\begin{proof}[Proof of \Cref{thm:assignment_general_right}]
  Using the definition of the discrepancy function,
  \begin{align}
    f_{1,\lambda}^{\nu}(x)
    =
    -\log
    \int_{\RR}
    \Gg(t; 0, \lambda, \nu) \cdot \Nn(x-t; 0, 1)
    \;\d t
    =
    -\log
    \int_{-\infty}^{+\infty}
    p(x - t) q(t)
    \;\d t~.
  \end{align}
  with $q(x) = \Gg(x; 0, \lambda, \nu)$
  and $p(x) = \Nn(x; 0, 1)$.
  We have
\begin{align}
  v(x) = - \pd{}{x} \log p(x) = 2 x \qandq
  w(x) = - \pd{}{x} \log q(x) = \frac{\nu x^{\nu -1}}{\lambda_\nu^\nu}~.
\end{align}
Remark that $v$ and $w$ are positive continuous, and
\begin{align}
  \ulim{\substack{x, x' \to \infty\\x/x' \to 1}}
  \frac{v(x)}{v(x')} =
  \ulim{\substack{x, x' \to \infty\\x/x' \to 1}}
  \frac{x}{x'} = 1
  \qandq
  \ulim{\substack{x, x' \to \infty\\x/x' \to 1}}
  \frac{w(x)}{w(x')} =
  \ulim{\substack{x, x' \to \infty\\x/x' \to 1}}
  \left(\frac{x}{x'}\right)^{\nu - 1} = 1
  ~.
\end{align}
  For $x > 0$ large enough and $y > 0$ small enough
  \begin{align}
    y = p(x)
    \Leftrightarrow
    y = \frac{1}{\sqrt{2\pi}} \exp\left( -\frac{x^2}{2} \right)
    \Leftrightarrow
    x = \sqrt{-\log(2\pi) - 2 \log y}~.
  \end{align}
  Then, from \Cref{prop:sep},
  we have 
  \begin{align}
    u(x) w(x) &=
    \frac{\nu x^{\nu - 1}}{\lambda_\nu^\nu} \sqrt{-\log(2\pi) - 2 \log q(x)}\\
    &=
    \frac{\nu x^{\nu - 1}}{\lambda_\nu^\nu}
    \sqrt{-\log(2\pi) - 2 \log\left[
        \frac{\kappa}{2 \lambda_\nu}
        \right]
      +
      2 \left(\frac{x}{\lambda_\nu}\right)^\nu
    }
    \sim
    \frac{
      \nu \sqrt{2} x^{\frac32\nu - 1}
    }{
      \lambda_\nu^{\frac32\nu}
    }
    ~,
  \end{align}
  and thus, as $\nu > \frac23$,  $\ulim{x \to \infty} u(x) w(x) = \infty$.
Moreover, for $\nu < 2$, we have
\begin{align}
  \ulim{x \to \infty}
  \frac{w(x)}{v(x)} =
  \ulim{x \to \infty}
  \frac{\nu}{2 \lambda_\nu} x^{\nu - 2} = 0~.
\end{align}
It follows that \Cref{lem:simeon} applies, and then
for large $x$
\begin{align}
  f_{1,\lambda}^{\nu}(x)
    \sim
    -\log q(x)
    \sim
    \left(\frac{x}{\lambda_\nu}\right)^\nu~.
\end{align}
Using that $\lambda_\nu = \lambda \sqrt{\frac{\Gamma(1/\nu)}{\Gamma(3/\nu)}}$,
we conclude the proof since
\begin{align}
  \varphi_{\lambda}^{\nu}(x)
  =
  \log\left[ f_{1,\lambda}^{\nu}(x) - \gamma_\lambda^\nu \right]
  \sim
  \nu \log x
  -
  \nu \log \lambda_\nu~.
\end{align}
\end{proof}


\section{Proof of \Cref{prop:shrink_basic_prop}}
\label{proof:shrink_basic_prop}

\begin{proof}
  Starting from the definition of $s_{\sigma,\lambda}^{\nu}$
  and using the change of variable $t \to \sigma t$,
  eq.~\eqref{eq:shrink_reduction} follows as
  \begin{align}
  s_{\sigma,\lambda}^\nu(x)
  &=
  \uargmin{t\in \RR}
  \frac{(x - t)^2}{2 \sigma^2}
  +
  \lambda_\nu^{-\nu}|t|^\nu
  =
  \sigma
  \uargmin{t\in \RR}
  \frac{(x - \sigma t)^2}{2 \sigma^2}
  +
  \lambda_\nu^{-\nu}|\sigma t|^\nu~,
  \\
  &=
  \sigma
  \uargmin{t\in \RR}
  \frac{(x/\sigma - t)^2}{2}
  +
  (\lambda_\nu/\sigma)^{-\nu} |t|^\nu
  =
  \sigma s_{1,\lambda/\sigma}^\nu(x/\sigma)~.
  \end{align}
  For eq.~\eqref{eq:shrink_anti_symmetry}, we use the change of variable $t \to -t$
  \begin{align}
  s_{\sigma,\lambda}^\nu(-x)
  &=
  \uargmin{t\in \RR}
  \frac{(-x - t)^2}{2 \sigma^2}
  +
  \lambda_\nu^{-\nu}|t|^\nu
  =
  -
  \uargmin{t\in \RR}
  \frac{(-x + t)^2}{2 \sigma^2}
  +
  \lambda_\nu^{-\nu}|t|^\nu~,
  \\
  &=
  -
  \uargmin{t\in \RR}
  \frac{(x - t)^2}{2 \sigma^2}
  +
  \lambda_\nu^{-\nu}|t|^\nu
  =
  -s_{\sigma,\lambda}^\nu(x)~.
  \end{align}
  We now prove eq.~\eqref{eq:shrinkage}.
  Let $t = s_{\sigma,\lambda}^\nu(x)$, and since
 $t$ minimizes the objective, then
  \begin{align}
    \lambda_\nu^{-\nu}|t|^\nu
    \leq
    \frac{(x - t)^2}{2 \sigma^2} + \lambda_\nu^{-\nu}|t|^\nu
    \leq
    \frac{(x - x)^2}{2 \sigma^2} + \lambda_\nu^{-\nu}|x|^\nu
    = \lambda_\nu^{-\nu}|x|^\nu~,
  \end{align}
  which implies that $|t| \leq |x|$.
  Let $x > 0$ and assume $t = s_{\sigma,\lambda}^\nu(x) < 0$.
  Since $t$ minimizes the objective, then
  \begin{align}
    \frac{(x - t)^2}{2 \sigma^2} + \lambda_\nu^{-\nu}|t|^\nu
    \leq
    \frac{(x + t)^2}{2 \sigma^2} + \lambda_\nu^{-\nu}|t|^\nu
  \end{align}
  which implies that $-x t \leq x t$ and leads to a contradiction.
  Then for $x > 0$, $s_{\sigma,\lambda}^\nu(x) \in [0, x]$,
  which concludes the proof since $s_{\sigma,\lambda}^\nu$ is odd.

  We now prove \eqref{eq:shrink_increasing}.
  Let $x_1 > x_2$ and define $t_1 = s_{\sigma,\lambda}^\nu(x_1)$ and
  $t_2 = s_{\sigma,\lambda}^\nu(x_2)$.
  Since $t_1$ and $t_2$ minimize their respective objectives, the following
  two statements hold
  \begin{align}
    \frac{(x_1 - t_1)^2}{2 \sigma^2} + \lambda_\nu^{-\nu}|t_1|^\nu
    & \leq
    \frac{(x_1 - t_2)^2}{2 \sigma^2} + \lambda_\nu^{-\nu}|t_2|^\nu~,
    \\
    \qandq
    \frac{(x_2 - t_2)^2}{2 \sigma^2} + \lambda_\nu^{-\nu}|t_2|^\nu
    & \leq
    \frac{(x_2 - t_1)^2}{2 \sigma^2} + \lambda_\nu^{-\nu}|t_1|^\nu~.
  \end{align}
  Summing both inequalities lead to
  \begin{align}
    & (x_1 - t_1)^2 + (x_2 - t_2)^2
    \leq
    (x_1 - t_2)^2 + (x_2 - t_1)^2~,
    \\
    \Rightarrow \quad &
    -2 x_1 t_1 - 2 x_2 t_2
    \leq
    -2 x_1 t_2 - 2 x_2 t_1~,
    \\
    \Rightarrow \quad &
    t_1 (x_1 - x_2)
    \geq
    t_2 (x_1 - x_2)
    \quad
    \Rightarrow \quad
    t_1 \geq t_2 \quad \text{(since $x_1 > x_2$)}~.
  \end{align}

  We now prove \eqref{eq:shrink_increasing2}.
  Let $\lambda_1 > \lambda_2$ and define
  $t_1 = s_{\sigma,\lambda_1}^\nu(x)$ and
  $t_2 = s_{\sigma,\lambda_2}^\nu(x)$.
  Since $t_1$ and $t_2$ minimize their respective objectives, the following
  expressions hold
  \begin{align}
    \frac{(x - t_1)^2}{2 \sigma^2} + \lambda_{\nu,1}^{-\nu}|t_1|^\nu
    & \leq
    \frac{(x - t_2)^2}{2 \sigma^2} + \lambda_{\nu,1}^{-\nu}|t_2|^\nu~,
    \\
    \qandq
    \frac{(x - t_2)^2}{2 \sigma^2} + \lambda_{\nu,2}^{-\nu}|t_2|^\nu
    & \leq
    \frac{(x - t_1)^2}{2 \sigma^2} + \lambda_{\nu,2}^{-\nu}|t_1|^\nu~.
  \end{align}
  Again, summing both inequalities lead to
  \begin{align}
    &
    \lambda_{\nu,1}^{-\nu}|t_1|^\nu + \lambda_{\nu,2}^{-\nu}|t_2|^\nu
    \leq
    \lambda_{\nu,1}^{-\nu}|t_2|^\nu + \lambda_{\nu,2}^{-\nu}|t_1|^\nu~,
    \\
    \Rightarrow \quad
    &
    (\lambda_{\nu,1}^{-\nu} - \lambda_{\nu,2}^{-\nu}) |t_1|^\nu
    \leq
    (\lambda_{\nu,1}^{-\nu} - \lambda_{\nu,2}^{-\nu})|t_2|^\nu~,
    \\
    \Rightarrow \quad
    &
    |t_1|^\nu
    \geq
    |t_2|^\nu
    \quad \text{(since $\lambda_1 > \lambda_2$ and $\nu > 0$)}~.
  \end{align}

  We now prove \eqref{eq:shrink_keep}. Consider $x > 0$. Since $\lambda \mapsto s_{\sigma,\lambda}^\nu(x)$ is a monotonic function and $s_{\sigma,\lambda}^\nu(x) \in [0, x]$ for all $\lambda$,
  it converges for $\lambda \to \infty$ to a value $\omega \in [0, x]$.
  Assume $0 < \omega < x$ and
  let $0 < \epsilon < \max(\omega, x - \omega)$. By definition of the limit,
  for $\lambda$ big enough
  \begin{align}
    0 < \omega - \epsilon < t \triangleq s_{\sigma,\lambda}^\nu(x) < \omega + \epsilon~.
  \end{align}
  It follows that $x - t > x - (w + \epsilon) > 0$, and then
  \begin{align}
    \frac{(x - (\omega + \epsilon))^2}{2 \sigma^2} + \lambda_\nu^{-\nu}|\omega - \epsilon|^\nu
    <
    \frac{(x - t)^2}{2 \sigma^2} + \lambda_\nu^{-\nu}|t|^\nu~.
  \end{align}
  Moreover, since $\omega + \epsilon \ne x$,
  we have for $\lambda$ big enough
  \begin{align}
    \lambda_\nu^{-\nu}|x|^\nu
    <
    \frac{(x - (\omega + \epsilon))^2}{2 \sigma^2} + \lambda_\nu^{-\nu}|\omega - \epsilon|^\nu~.
  \end{align}
  Combining the two last inequalities shows that
  \begin{align}
    \frac{(x - x)^2}{2 \sigma^2} + \lambda_\nu^{-\nu}|x|^\nu
    <
    \frac{(x - t)^2}{2 \sigma^2} + \lambda_\nu^{-\nu}|t|^\nu~,
  \end{align}
  which is in contradiction with the fact that $t$ minimizes the objective.
  As a consequence, $\omega = x$, which concludes the proof since
  $s_{\sigma,\lambda}^\nu(x)$ is odd and satisfies \eqref{eq:shrink_reduction}.

  We now prove \eqref{eq:shrink_kill}. Consider $x > 0$. Since $\lambda \mapsto s_{\sigma,\lambda}^\nu(x)$ is a monotonic function and $s_{\sigma,\lambda}^\nu(x) \in [0, x]$ for all $\lambda$,
  it converges for $\lambda \to 0^+$ to a value $\omega \in [0, x]$.
  Assume $0 < \omega < x$ and
  let $0 < \epsilon < \max(\omega, x - \omega)$.
  Again, we have for $\lambda$ small enough
  \begin{align}
    \frac{(x - (\omega + \epsilon))^2}{2 \sigma^2} + \lambda_\nu^{-\nu}|\omega - \epsilon|^\nu
    <
    \frac{(x - t)^2}{2 \sigma^2} + \lambda_\nu^{-\nu}|t|^\nu~.
  \end{align}
  Moreover, since $\omega \ne \epsilon$,
  we have for $\lambda$ small enough
  \begin{align}
    \frac{x^2}{2 \sigma^2}
    <
    \frac{(x - (\omega + \epsilon))^2}{2 \sigma^2} + \lambda_\nu^{-\nu}|\omega - \epsilon|^\nu~.
  \end{align}
  Combining the two last inequalities shows that
  \begin{align}
    \frac{(x - 0)^2}{2 \sigma^2} + \lambda_\nu^{-\nu}|0|^\nu
    <
    \frac{(x - t)^2}{2 \sigma^2} + \lambda_\nu^{-\nu}|t|^\nu~,
  \end{align}
  which is in contradiction with the fact that $t$ minimizes the objective.
  As a consequence, $\omega = 0$, which concludes the proof since
  $s_{\sigma,\lambda}^\nu(x)$ is odd and satisfies \eqref{eq:shrink_reduction}.
\end{proof}


\bibliographystyle{siamplain}
\bibliography{references}
\end{document}